\newtheorem{rr}{Reduction Rule}
\newcommand{\fcn}{\operatorname{fcn}}
\newcommand{\SPQR}{SPR\xspace}
\newcommand{\yes}{\textsf{YES}}
\newcommand{\no}{\textsf{NO}}
\newcommand{\fpt}{\textsf{FPT}\xspace}
\newcommand{\problem}[3]{
        \begin{center}
                \begin{boxedminipage}{0.99\textwidth}
                        \textsc{#1}
                        
                        \vspace{2pt}
                        
                        \begin{tabular}{l p{0.8\textwidth}}
                                \textit{Instance:} & {#2}\\
                                
                                \textit{Question:} & {#3}
                        \end{tabular}
                \end{boxedminipage}
        \end{center}
}
\title{A Polynomial Kernel for Face Cover on Non-Embedded Planar Graphs}
\author{Thekla Hamm}{TU Eindhoven}{t.l.s.hamm@tue.nl}{0000-0002-4595-9982}{Supported by the Austrian Science Fund (J4651-N) during part of this work.}
\author{Sukanya Pandey}{RWTH Aachen}{sukanya.pandey@durham.ac.uk}{0000-0001-5728-1120}{}
\author{Krisztina Szil\'agyi}{Czech Technical University in Prague}{szilakri@fit.cvut.cz}{0000-0003-3570-0528}{Supported under the project Robotics and advanced industrial production (reg. no. CZ.02.01.01/00/22\_008/0004590).}
\authorrunning{T.\ Hamm, S.\ Pandey, K.\ Szil\'agyi}
\keywords{Kernelization, Planar Graphs, SPQR-tree}
\begin{document}
\maketitle
\begin{abstract}
    Given a planar graph, a subset of its vertices called terminals, and \(k \in \mathbb{N}\), the \textsc{Face Cover Number} problem asks whether the terminals lie on the boundaries of at most $k$ faces of some embedding of the input graph. When a plane graph is given in the input, the problem is known to have a polynomial kernel~\cite{GarneroST17}. In this paper, we present the first polynomial kernel for \textsc{Face Cover Number} when the input is a planar graph (without a fixed embedding). Our approach overcomes the challenge of not having a predefined set of face boundaries by building a kernel bottom-up on an SPR-tree while preserving the essential properties of the face cover along the way.
\end{abstract}

\clearpage

\section{Introduction}

Problems in network design have been extensively studied on the class of planar graphs. Many of these problems take as input a graph and a subset of its vertices distinguished as terminals, and the goal is to find an optimal subgraph connecting or separating the terminals that satisfies certain constraints. Even when the input is restricted to the class of planar graphs, a large number of these problems remain intractable if the number of terminals is superconstant~\cite{GareyJ77, VYGEN1995, Schwarzler09, Lynch75, DahlhausJPSY94, garg1997, kramer1984}. This motivates the quest for further restrictions on the input that may make the problems tractable.
Besides bounding the actual number of terminals, one can target weaker restrictions, which impose conditions on the placement of the terminals in the input planar graph. One such restriction could be to require that the terminals lie on the boundary of a single face of the planar input graph. Some well-known NP-hard problems like \textsc{Steiner Tree}, \textsc{Multiway Cut}, and \textsc{Shortest Disjoint Paths} can be solved in polynomial time~\cite{Erickson, Bern, Chen-Wu, BorradaileNZ15} in this case. It is therefore natural to wonder: \emph{What if the terminals lie on the boundaries of $k>1$ faces?}.

The \emph{terminal face cover number} of a planar graph is the minimum number of faces, over all embeddings, that jointly cover all the terminals. Parameterization by the terminal face cover number was first considered by Erickson et al.\ in their seminal paper~\cite{Erickson} to design an \textsf{XP} algorithm for \textsc{Steiner Tree}. Recently, it has received a lot of attention, be it for flow and cut problems~\cite{krauthgamer2020refined,Krauthgamer1,Chen-Wu, PandeyL22}, shortest path problems~\cite{ChenXu_shortestpaths, Frederickson91}, minimum non-crossing walks~\cite{EricksonN11}, and minimum Steiner tree~\cite{KNL20}. In this paper, we focus on the problem of computing the terminal face cover number. Formally, the problem is defined as follows:

\problem{Face Cover Number}{Planar graph $G$, $T\subseteq V(G)$, integer $k$}{Does there exist a set of faces of size at most $k$ in some embedding of $G$ that covers all the vertices of $T$?}

Bienstock and Monma~\cite{BM88} initiated the study of \textsc{Face Cover Number} and showed that it can be solved in \fpt time parameterized by $k$. Their algorithm runs in $c^{k}\cdot n$ time, for some constant $c>1$. They also studied a related problem where the input consists of a plane graph, that is, a planar graph with a fixed embedding. We call this problem \textsc{Embedded Face Cover Number}.
For this problem, they considered parameterization by the number of terminals $t$, and gave the first subexponential \fpt algorithm with a running time of $2^{\mathcal{O}(\sqrt{t})}$. 

A special case of \textsc{Embedded Face Cover Number} with $T=V(G)$, parameterized by the face cover number, was studied by Kloks et al.\ \cite{KloksLL02}. They showed that the problem can be solved in time $\mathcal{O}^*(c^{\sqrt{k}})$. Subsequently, their result was improved by Fernau and Juedes~\cite{FernauJ04}, whose algorithm runs in time $\mathcal{O}^*(2^{24.551\sqrt{k}})$ and further by Koutsonas and Thilikos~\cite{KoutsonasT11} who gave an algorithm with a running time of $\mathcal{O}^*(2^{10.1\sqrt{k}})$. 
This problem can be easily reduced to \textsc{Red-Blue Dominating Set}, a variant of \textsc{Dominating Set} on 2-colored graphs. All the aforementioned algorithms \emph{indirectly} find the minimum face cover of the input plane graph by solving \textsc{Red-Blue Dominating Set}. A direct \fpt algorithm for the problem was given by Abu-Khzam et al.\ \cite{Abu-KhzamFL08}. However, with a running time of $\mathcal{O}^*(4.6056^k)$, their algorithm is not subexponential in the parameter.

Given the fixed-parameter tractability of both the above problems, a natural line of inquiry is whether there exists a polynomial kernel for the problem. For \textsc{Embedded Face Cover Number}, Garnero et al.\ \cite{GarneroST17} showed that a linear kernel exists by reduction to \textsc{Red-Blue Dominating Set}. Before their work, a quadratic kernel was known to exist due to the result of Kloks et al.\ \cite{KloksLL02}. To date, it was not known whether \textsc{Face Cover Number} has a polynomial kernel. We answer this question in the affirmative by presenting a cubic kernel for the problem parameterized by the face cover number.
\begin{theorem}\label{thm:main}
 Given an instance $(G,T,k)$ of \textsc{Face Cover Number}, we can produce a kernel $(G',T',k')$ in polynomial time such that $|G'|+k' \in \mathcal{O}(k^3)$.
\end{theorem}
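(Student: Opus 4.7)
The plan is to exploit the structural decomposition of biconnected planar graphs given by SPR-trees (which compactly encode all planar embeddings simultaneously) and reduce the input graph bottom-up along this decomposition. As a preprocessing step I would first reduce to the biconnected case via the block-cut tree, pruning or gadget-replacing blocks and cut-vertex chains that do not affect the face-cover number, so that we may focus on a single biconnected component with its SPR-tree $\mathcal{T}$.

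Second, I would process $\mathcal{T}$ bottom-up. For each node $\mu$ of $\mathcal{T}$ with virtual parent edge $e_\mu = \{x_\mu, y_\mu\}$, consider the pertinent subgraph $G_\mu$. Any face of an embedding of $G$ either lies entirely inside $G_\mu$, or enters and exits $G_\mu$ through $x_\mu$ and $y_\mu$; in the latter case there are at most two such \emph{passing} faces, one on each side of $e_\mu$. This motivates assigning to $G_\mu$ a bounded-size signature that records, for each of the $\mathcal{O}(1)$ combinatorial types of passing faces (which of $x_\mu, y_\mu$ lie in $T$, which passing face covers which, and so on), the minimum number of internal faces needed inside $G_\mu$ to cover the remaining terminals. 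The invariant is that two pertinent subgraphs with identical signatures are interchangeable inside any parent context, so whenever two subtrees of $\mathcal{T}$ produce the same signature I can keep only a constant-size representative.

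Third, this signature framework drives concrete reduction rules tailored to each node type. For an S-node (a cycle skeleton), only $\mathcal{O}(k)$ skeleton edges whose pertinent subgraphs contain terminals or contribute nontrivially can survive, so long unreduced arcs can be replaced by constant-size representatives. For a P-node (a bond skeleton), in any embedding the two outermost parallels bound each of the two passing faces, so it suffices to keep only $\mathcal{O}(k)$ parallels: those that are extremal in some realisable signature configuration. For an R-node (a triconnected skeleton), the embedding is unique up to reflection, so the local problem reduces to \textsc{Embedded Face Cover Number} on the skeleton with virtual edges annotated by child signatures, where I would plug in (a suitable strengthening of) the linear kernel of Garnero et al.~\cite{GarneroST17} to shrink the skeleton to $\mathcal{O}(k)$ vertices.

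The main obstacle will be to design the signature so that (i) its value set is bounded by a constant independent of $k$, (ii) equivalent signatures really are mutually replaceable under every parent context and every outer-face choice, and (iii) the reductions are sound across the interface between child and parent, in particular without inadvertently killing an embedding that some optimal face cover relies on. Once this is handled, bounding the number of ``relevant'' SPR-tree nodes by $\mathcal{O}(k^2)$ (via terminal and branching counts along $\mathcal{T}$) and the local size at each by $\mathcal{O}(k)$ yields the promised $\mathcal{O}(k^3)$ kernel.
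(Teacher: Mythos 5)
Your high-level architecture (block-cut tree preprocessing, bottom-up processing of the SPR-tree, a per-node ``signature'' recording how the pertinent subgraph can be covered using zero, one or two passing/external faces, and node-type-specific rules) matches the paper's, but two of your key steps have genuine gaps. First, you cannot replace every pertinent subgraph having a given signature by a \emph{constant-size} representative: the signature necessarily records the minimum number of internal faces needed inside $G_\mu$ under each external-face usage pattern, and this number can be as large as $k$, so no constant-size gadget realises it. The paper only uses constant gadgets for components whose terminals are coverable by external faces alone (terminal-free, unproblematic, semi-problematic); the remaining ``problematic'' components are replaced by recursively computed kernels, and the cubic size bound is obtained not by counting $\mathcal{O}(k^2)$ nodes times $\mathcal{O}(k)$ local size, but by maintaining the invariant that any minimum face cover of a kernel of size $s$ needs $\Omega(s^{1/3})$ internal faces, together with superadditivity of $x\mapsto x^{1/3}$ when summing over children. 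Your proposed accounting does not obviously close, since a single R-node skeleton may legitimately retain $\Theta(k^3)$ vertices.

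Second, your R-node step is where the real difficulty of the non-embedded problem lives, and ``plug in the Garnero et al.\ kernel'' does not resolve it. That kernel (and any edge-deletion/contraction-based shrinking of the skeleton) destroys 3-connectivity, hence the reduced skeleton no longer has a unique embedding; the interchangeability invariant you rely on (``two pertinent subgraphs with identical signatures are interchangeable'') then fails, because the reduced skeleton can be re-embedded in ways the original could not, exactly the failure mode of the na\"ive replacement illustrated in the paper. The missing idea is to reduce the skeleton with respect to its fixed embedding (the paper's boring-edge removal, private-face merging and boring-edge contraction) and then \emph{rigidize} the result --- add $\varepsilon$-close edge copies and cycles around low-connectivity vertices to restore 3-connectivity without changing which sets of terminals, corners and virtual-edge endpoints share a face --- so that one can legitimately return from the embedded to the abstract setting. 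Without an argument of this kind, soundness of your R-node reduction across the child/parent interface (your obstacle (iii)) is not established.
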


\vspace{1em}
\noindent\textbf{Overview of the algorithm.}\hspace{.5em}
Our approach is to perform dynamic programming over an \SPQR-tree of the given planar graph.
SPR-trees are classic data structures that have often been useful for algorithm design on planar graphs.
At a high level, an SPR-tree can be viewed as a tree-structured decomposition of a planar graph into its 3-connected subgraphs.
Every node of this tree-structure is associated with a graph called the \emph{skeleton} of that node. The graph induced at any node is obtained by replacing each of the so-called \emph{virtual edges} (edges of the skeleton that are not actual edges of the planar graph) by the graphs induced at the children of that node.
In this sense, endpoints of virtual edges can be viewed as an ``interface'' between the graph induced at a node and the graphs induced at its children. 
Each node of the SPR-tree has a type -- S, P or R -- depending on the structure of its skeleton.

With this in mind, we aim to construct a kernel for \textsc{Face Cover Number} in a leaves-to-root fashion along an SPR-tree of the input graph.
For this, a natural attempt at the dynamic programming step is to try to replace parts of the graph that correspond to an SPR-tree rooted at a child of that SPR-tree by an inductively assumed kernel.
However, this na\"ive approach comes with some technical obstacles: how do we replace a subgraph with a kernel in which we are not ensured that the interface to the subgraph remains?
On a more fundamental level, this simple idea is doomed to fail because while the face cover number of the kernel may be the same, it might interact differently with the rest of the graph than the subgraph we replace it with (see \Cref{fig:bad-replacement}).
Consequently, we need to strengthen what we inductively assume of kernels constructed so far and then also demonstrate in each dynamic programming step that these assumptions are satisfied after carrying out the reduction steps.
Crucially, our stronger assumptions allow us to preserve the way in which the face cover of the graph induced at any child node interacts with the graph induced at the parent node during kernelization.
We call a kernel that satisfies our stronger assumptions a \emph{nice kernel}, and our target is to dynamically compute such a kernel.
\begin{figure}[t]
    \centering
    \includegraphics[width=0.75\textwidth]{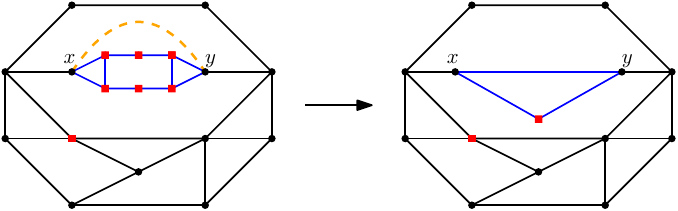}
    \caption{(left) A skeleton in black and orange with the orange dashed edge \(xy\) being a virtual edge to a child node. The graph that replaces this virtual edge is blue. Terminals are red squares. The blue graph has face cover number one (all its terminals lie on its rectangularly drawn face).
    Hence, a kernel for the blue graph could in principle be a triangle on \(x\), \(y\) and a terminal (this also has face cover number one).
    This would even still contain the interface (\(x,y\)) between the blue and black graph.
    (right) However, if we replace the blue graph with its possible triangular kernel, the face cover of the entire graph drops from two to one.}
    \label{fig:bad-replacement}
\end{figure}

While many parts of our dynamic programming procedure require us to devise reduction rules that are specific to the skeleton and hence the type of node we are considering (S, P or R), some subgraphs that are induced at the children of a node have sufficiently \emph{simple} face covers. Hence, we can replace them by constant-sized subgraphs and obtain a nice kernel in which we only need to handle subgraphs induced at children that have significantly more complex interactions with the skeleton of the parent. 

We refer to the simple subgraphs as \emph{terminal-free}, \emph{unproblematic} or \emph{semi-problematic} components depending on whether they contain terminals, and if so, whether they can be covered optimally using only faces that interact with the rest of the graph. We describe polynomial-time algorithms to recognize each of them.
We call the remaining complex subgraphs induced by children \emph{problematic}.
At each node type, the overall strategy is to bound the number of virtual edges that we need to consider by a polynomial in the prospective face cover number \(k\). Once we achieve that, we can then reduce the skeleton in a way that we can bound the number of its non-virtual edges in \(k\). Both of these are the steps that are the most specific to the node types.
Finally, we inductively replace the problematic virtual components by their kernels and any possible remaining non-problematic ones by their constant-size replacements -- this last step does not depend on the particular node type.

Concerning the node-type specific arguments, simple reduction rules like edge deletions and contractions can be used at S-nodes and P-nodes.
However, such simple reduction rules are difficult to control at R-nodes, which, of all the node types admit the structurally most complex skeletons, namely 3-connected planar graphs.
Deleting or contracting edges from a 3-connected planar graph does not, in general, preserve 3-connectivity.
This is of great relevance for face cover numbers because, before any reduction, the skeleton has a unique embedding, a property that we cannot easily preserve during reduction. 
For this reason, during the kernelization for R-nodes, we switch to the embedded setting, and formulate reduction rules that take the initial fixed embedding into account.
To transition back to the setting in which we consider a non-embedded graph, we re-establish 3-connectivity by what we call ``\emph{rigidizing}'' the embedded nice kernel.
While doing so, we take care not to change the sets of terminals of the nice kernel that occur together on a face, and hence obtain a non-embedded nice kernel at R-nodes.

\vspace{1em}
\noindent\textbf{Organization.} \hspace{.5em}
In \Cref{sec:prelims}, we give necessary definitions and notation. In \Cref{sec:virtcomptype}, we give formal definitions of \emph{terminal-free}, \emph{unproblematic} or \emph{semi-problematic} and \emph{problematic} components, and algorithms to distinguish them.
Then, in \Cref{sec:basickernel} we define the notion of a \emph{nice kernel} and describe the replacements of the components from \Cref{sec:virtcomptype}, which are not node-type specific.
In \Cref{sec:DP}, we turn to the actual formulation of the dynamic program.
We conclude in \Cref{sec:conclusion}.

\section{Preliminaries}
\label{sec:prelims}

\noindent\textbf{Graph terminology.}\hspace{.5em}
We follow standard graph terminology~\cite{Diestel}.
A \emph{graph} is a pair $G=(V, E)$, such that $E \subseteq {V\choose 2}$. The elements of $V$ are called the vertices of the graph and the elements of $E$ are called edges. The number of vertices of a graph is its order denoted by $n$. A \emph{subgraph} $G'= (V', E')$ of $G$, written as $G'\subseteq G$, is a graph such that $V'\subseteq V$ and $E'\subseteq E$. $G'$ is an \emph{induced subgraph} of $G$, if for all $x,y \in V'$, if $xy \in E$ then $xy \in E'$.

A \emph{path} is a non-empty graph $P=(V',E')$ of the form $V'=\{x_0, x_1, \ldots, x_k\}$ and $E'=\{x_0x_1, \ldots,$ $x_{k-1}x_k\}$, where all $x_i$ are distinct. The vertices $x_0$ and $x_k$ are called the \emph{endpoints} of $P$. An \emph{induced path} is a path in which all the vertices except the endpoints have degree two. The number of edges in a path is its \emph{length}.

A graph is \emph{connected} if there is a path between any two vertices in a graph, and \emph{disconnected} otherwise. A graph is \emph{biconnected} if there does not exist any vertex whose removal from the graph results in a disconnected graph. Similarly, a graph is called \emph{3-connected} if there does not exist a subset of vertices of size two whose removal disconnects the graph.


\vspace{1em}
\noindent\textbf{SPR-trees.}\hspace{.5em}
 An \emph{SPR-tree} is a rooted tree in which each node $t$ is associated with a multigraph called the \emph{skeleton} of the node. The nodes, and their skeletons, have one of three types:
 \begin{enumerate}
     \item {\bf S node:} These are nodes whose skeleton forms a cycle of three or more vertices. 
     \item {\bf P node:} These are nodes whose skeleton is a graph with exactly two vertices and at least three edges between them. Such a graph is called a dipole.
     \item {\bf R node:} These are nodes whose skeleton forms a 3-connected graph that is not a cycle or a dipole. 
 \end{enumerate}
 Each edge between two nodes \(t\) and \(t'\) of the SPR-tree is associated to one edge with an ordering of its endpoints in the skeleton of \(t\) and one edge with an ordering of its endpoints in the skeleton of \(t'\).
 Overall, the edge of the skeleton of each node is allowed to be associated to at most one edge of the SPR-tree in this way.
 These skeleton edges that are associated to SPR-tree edges are called \emph{virtual} edges.
 The edges of the skeleton of any node that are not virtual are called \emph{real}.
 For the subtree rooted at a node \(t\) of the SPR-tree the graph \emph{induced} by it is inductively defined as follows.
 \begin{itemize}
     \item if \(t\) is a leaf, the graph induced by the subtree rooted at \(t\) is the skeleton of \(t\).
     \item if \(t\) has children \(t_1, \dotsc, t_c\), the graph induced by the subtree rooted at \(t\) is the graph arising from the skeleton of \(t\) by doing the following for each \(i \in [c]\):
     Insert the graph induced by the subtree rooted at \(t_i\); identify the first and second endpoint of the virtual edge in the skeleton of \(t_i\) that is associated to \(tt_i\) with the first and second endpoint respectively of the virtual edge in the skeleton of \(t\) that is associated to \(tt_i\); and lastly, delete the virtual edges corresponding to \(tt_i\).
     Intuitively speaking, the graph induced by the subtrees rooted at \(t_i\) are `glued-in' in place of the respective virtual edges of the skeleton of \(t\).
 \end{itemize}
 We say that an SPR-tree is an SPR-tree of the graph that it induces.
 \begin{lemma}[Hopcroft and Tarjan~\cite{HopcroftTarjan73}]
     An \SPQR-tree of a biconnected planar graph can be constructed in linear time.
 \end{lemma}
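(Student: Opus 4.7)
The plan is to construct the \SPQR-tree from the classical decomposition of a biconnected graph into its triconnected components: once the triconnected components and the separation pairs defining them are available, the partition into S-, P- and R-nodes together with their skeletons and the virtual edges identifying skeletons of adjacent nodes follows directly from the definitions given above (and Q-nodes of a standard SPQR-tree, corresponding to trivial single edges, are simply absorbed into their neighbours). The bulk of the work is therefore to compute the triconnected components in linear time.

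First, I would perform a depth-first search of $G$ that produces a palm tree together with the DFS number and lowpoint quantities of every vertex (in particular, the first and second lowpoint). A case analysis on the palm tree then shows that every separation pair $\{u,v\}$ of $G$ falls into one of two syntactic types: a \emph{Type-1} pair, in which $u$ is a proper ancestor of $v$ and the subtree below $v$ is separated by removing $\{u,v\}$; and a \emph{Type-2} pair arising from two DFS-consecutive vertices whose lowpoints delimit a split component. Both types can be detected during a single reverse-DFS-order scan that maintains auxiliary stacks of candidate edges and candidate pairs, yielding amortised constant time per separation pair.

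Second, I would use the discovered separation pairs to split $G$ into its \emph{split components}: whenever $\{u,v\}$ separates a piece $H$ from the rest, I replace $H$ in $G$ by a virtual edge $uv$ and recurse on $H\cup\{uv\}$. The resulting pieces are either $3$-connected graphs, triangles, or dipoles (multi-edges between a single pair of vertices). Merging maximal chains of dipoles sharing a virtual edge, and maximal chains of triangles sharing a virtual edge, produces the canonical triconnected components, which are exactly the skeletons of R-, P- and S-nodes. The \SPQR-tree is then recovered by declaring each remaining virtual edge to be the interface between the two nodes whose skeletons contain it, with endpoint orderings inherited from the direction in which the corresponding split was carried out.

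Correctness follows from the uniqueness of the triconnected-components decomposition of a biconnected graph, and the linear running time follows because the initial DFS is linear, the scan for separation pairs touches each edge a constant number of times, and the merging step inspects each split component once. The main obstacle, and the technical heart of the argument, is implementing the detection of Type-2 separation pairs in a single linear pass via carefully maintained auxiliary stacks; proving that the invariants of these stacks exactly characterise all separation pairs requires a delicate inductive argument over the DFS order, and this is precisely what Hopcroft and Tarjan establish.
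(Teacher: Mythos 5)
The paper offers no proof of this lemma; it is cited directly to Hopcroft and Tarjan, and your proposal is a faithful outline of exactly that algorithm (DFS with lowpoints, detection of Type-1 and Type-2 separation pairs, splitting into split components, and merging dipole and triangle chains into the S-, P- and R-skeletons, with Q-nodes absorbed). This matches the intended argument, and the delicate stack invariants you defer to are precisely the content of the cited reference.
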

For a node $t$ of an arbitrary fixed \SPQR-tree of \(G\), we denote the subgraph induced by the subtree rooted at $t$ by $G_t$. We call the subgraph induced by a subtree rooted at a child of \(t\) a \emph{virtual component} of \(G_t\).
Let \(c_1,c_2\) be the endpoints of the virtual edges associated to \(tt'\) where \(t'\) is a child of \(t\).
We call \(c_1\) and \(c_2\) the \emph{corners} of the virtual component of \(G_t\) associated with \(t'\). We call the edge $c_1c_2$ the \emph{corner edge} of \(G_{t'}\).
We refer to the virtual component of \(G_t\) associated with \(t'\) plus its corner edge as the \emph{enhancement} of that virtual component. In an embedding of an enhancement, we call the two faces containing the corner edge \emph{external faces} and the remaining faces \emph{internal faces}.
If \(G_t\) itself has no corners, i.e.\ \(t\) is the root of the SPR-tree, then \(G_t\) is equal to its enhancement (we pick an arbitrary edge of $G_t$ and regard it as the corner edge and its endpoints as corners).

\vspace{1em}
\noindent\textbf{Kernelization.}\hspace{.5em}
A kernelization for a parameterized problem $\Pi \in \Sigma^* \times \mathbb{N}$ is an algorithm that takes as input an instance $(I,k)$ of $\Pi$, runs in time $poly(|I|+k)$ and outputs an instance $(I',k')$ of $\Pi$ such that: (1) $(I,k)$ is a \yes-instance if and only if $(I',k')$ is a \yes-instance (2) $|I'| \leq f(k)$, for some computable function $f$, and (3) $k' \leq g(k)$, for some computable function $g$.
The output $(I',k')$ of a kernelization is referred to as a \emph{kernel}. We say that $(I', k')$ is a \emph{polynomial kernel} if $|I'|+k' \leq poly(k)$.

\vspace{1em}
\noindent\textbf{Embedded graphs.}\hspace{.5em}
A \emph{drawing} of a connected planar graph $G=(V, E)$ is defined as a mapping of the vertices in $V$ to points in $\mathbb{R}^2$ and edges to simple curves between the points corresponding to its endpoints, such that: (1) distinct edges have distinct pair of endpoints, and (2) the interior of an edge does not contain any vertex or any point of another edge.
Given a planar drawing, the (clockwise) cyclic order of edges incident on every vertex is \emph{fixed}. The set of circular orderings of edges around each vertex is called a \emph{rotation system}. A \emph{(combinatorial) embedding} is an equivalence class of planar drawings defined by their rotation systems.
We frequently identify vertices and edges with their drawings and embeddings with an arbitrary drawing from its represented equivalence class.

The restrictions of embeddings and drawings to a subgraph of a graph are called \emph{subembeddings} and \emph{subdrawings} respectively. Let $G$ be a plane graph and $t$ a node in its SPR-decomposition. \emph{Flipping} $G_t$ is defined by reversing the the rotation system restricted to \(G_t\) (i.e.\ for the corner vertices, we only reverse edges in $G_t$).

For every drawing $G$ of a planar graph, the inclusion-maximal regions of $\mathbb{R}^2 \setminus G$ are called \emph{faces}.
The unique unbounded face is the \emph{outer face}, all other faces are \emph{inner faces}.
Let $f$ be a face. 
We say that a face $f$ \emph{covers} a vertex $v$ if the drawing of $v$ lies on the boundary of $f$.
For the two external faces of the enhancement $G_t$, we define their corresponding faces in $G$ as the faces whose boundaries restricted to $G_t$ are equal. Given a face cover $F$ of $G$, we define the induced face cover $F_t$ of the enhancement of $G_t$ as the set of internal faces of $G_t$ that are in $F$ together with the external faces corresponding to a face in $F$.
The \emph{embedded face cover number} of an embedded graph \(G\) with terminals \(T \subseteq V(G)\) is the minimum number of faces needed  to cover \(T\).
The \emph{face cover number} of a graph $G$ with terminals \(T \subseteq V(G)\) ($\fcn(G)=\fcn(G,T)$) is the minimum number of faces over all embeddings of $G$ that cover the terminals of $G$.

\subsection*{One-connected case}
By standard arguments in face cover literature which we repeat below for self-containedness, we can reduce to considering a biconnected input graph. Also, note that we can reduce to connected graphs as follows. If $G$ is disconnected, we kernelize every connected component separately, and then we check if there are more than $k$ components that need at least 2 faces for the face cover (using the existing FPT algorithm from~\cite{BM88} for $k=1$). If there are, we have a \no-instance.

The following definition will be useful to show that we can reduce to the case of a biconnected input graph.
A \emph{cut vertex} in $G$ is a vertex $v$ such that $G-v$ has two or more connected components. A \emph{block} of graph $G$ is a maximal biconnected induced subgraph. A block is \emph{trivial} if it contains exactly one edge and \emph{non-trivial} otherwise. A \emph{block-cut tree} of a graph $G$ has a vertex for each block and cut vertex in $G$. We denote the block-cut tree by $\mathcal{B}$. There exists an edge in $\mathcal{B}$ for a pair of nodes corresponding to a block and cut vertex in $G$ if and only if the cut vertex belongs to the block.
\begin{lemma}[Hopcroft and Tarjan~\cite{HopcroftT73}]\label{lem:bctree}
A block-cut tree of a graph can be computed in linear time.   
\end{lemma}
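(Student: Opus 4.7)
The plan is to use a single depth-first search, as in the standard Hopcroft--Tarjan block decomposition. Start DFS from an arbitrary vertex of the (w.l.o.g.\ connected) graph $G$, assigning to every vertex $v$ its discovery time $\mathrm{dfs}[v]$, and computing along the way the \emph{lowpoint}
\[
\mathrm{low}[v] = \min\bigl(\{\mathrm{dfs}[v]\} \cup \{\mathrm{low}[w] : w \text{ is a DFS-child of } v\} \cup \{\mathrm{dfs}[u] : vu \text{ is a back-edge}\}\bigr).
\]
Both $\mathrm{dfs}$ and $\mathrm{low}$ can be maintained in total $\mathcal{O}(|V|+|E|)$ time by updating $\mathrm{low}[v]$ when DFS returns from a child or when a back-edge is traversed.

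Next, I would identify the cut vertices using the classical characterisation: the DFS root is a cut vertex iff it has at least two DFS-children, and any other vertex $v$ is a cut vertex iff some DFS-child $w$ satisfies $\mathrm{low}[w] \geq \mathrm{dfs}[v]$. This test is performed in $\mathcal{O}(1)$ per edge relaxation during the same DFS. To extract blocks simultaneously, I would maintain a stack of edges: push every edge when it is first traversed (tree-edge or back-edge); whenever DFS returns to a vertex $v$ from a child $w$ with $\mathrm{low}[w] \geq \mathrm{dfs}[v]$, pop edges off the stack until and including $vw$, and declare the popped set a block. After DFS ends, any edges remaining on the stack form the last block containing the root. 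Correctness of this extraction follows from the fact that the condition $\mathrm{low}[w] \geq \mathrm{dfs}[v]$ means no edge out of the subtree rooted at $w$ reaches strictly above $v$, so the edges gathered since entering that subtree form a maximal $2$-edge-connected piece (hence a block), while all later-explored edges belong to different blocks.

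Finally, I would assemble the block-cut tree $\mathcal B$: create one node per identified block and one node per identified cut vertex, and for each block $B$ iterate through its edges, marking each endpoint that has been flagged as a cut vertex and adding the corresponding block--cut-vertex edge (using a simple visited-marker per block to avoid duplicates). The total work here is proportional to the sum of block sizes, which is $\mathcal{O}(|V|+|E|)$ because each edge belongs to exactly one block and each cut vertex appears in at most $\deg(v)$ blocks.

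The main obstacle is really only the correctness argument for the stack-based block extraction: one must verify that edges popped together are pairwise contained in a common biconnected subgraph, and that no two blocks get merged. This is handled precisely by the lowpoint condition, which guarantees that the subtree under $w$ communicates with the rest of $G$ only through $v$, so the stack segment delimited at $vw$ is exactly the edge set of one block. All remaining steps are routine linear-time bookkeeping.
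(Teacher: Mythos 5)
Your proposal is correct and is precisely the classical Hopcroft--Tarjan DFS/lowpoint algorithm that the paper cites for this lemma without reproving it, so you and the paper take the same route. The only nitpick is terminological: the edge set popped at a delimiter $vw$ is a \emph{block} (maximal subgraph with no cut vertex), not a ``maximal $2$-edge-connected piece'' --- a bridge forms a trivial block but is not $2$-edge-connected.
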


We now give the reduction to the biconnected case.
\begin{restatable}{lemma}{oneconn}
   If we can compute $\fcn(G, T)$ for every 2-connected planar graph $G$ and every $T\subseteq V(G)$, then we can compute $\fcn(G, T)$ for every planar graph $G$ and \(T \subseteq V(G)\).
\end{restatable}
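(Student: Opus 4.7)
The plan is to induct on the number of blocks of $G$. In the base case, $G$ consists of a single block and is therefore biconnected, so we can apply the hypothesized procedure directly. For the inductive step, we compute the block-cut tree $\mathcal{B}$ in linear time via \Cref{lem:bctree} and pick a leaf block $B$ of $\mathcal{B}$; this leaf attaches to the rest of $G$ at a unique cut vertex $v$. Let $G' = G - (V(B)\setminus\{v\})$; then $B$ is biconnected, $G'$ is connected, and $G'$ has strictly fewer blocks than $G$. Writing $T_B = T\cap V(B)$ and $T' = T\cap V(G')$, the goal is to prove the identity
\[
\fcn(G, T) = \min\!\bigl(\fcn(B, T_B) + \fcn(G', T'),\; \fcn(B, T_B\cup\{v\}) + \fcn(G', T'\cup\{v\}) - 1\bigr),
\]
after which $\fcn(G,T)$ is immediately computable from two base-case evaluations on $B$ and two recursive evaluations on $G'$.

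To establish the identity, I would rely on the following structural description of embeddings of $G$. Since $v$ is the only shared vertex of $B$ and $G'$, in any planar embedding of $G$ there is a unique face $f_1$ of the induced embedding of $B$ containing $v$ on its boundary inside which the drawing of $G'\setminus\{v\}$ lies; analogously there is a unique face $f_2$ of the induced embedding of $G'$, containing $v$ on its boundary, inside which $B\setminus\{v\}$ lies. These two faces merge into a single face $f_1^*$ of $G$ whose boundary walk traces $\partial f_1$ and $\partial f_2$ joined at $v$; every other face of $G$ is a face of exactly one of $B$ or $G'$ in isolation.

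For the upper bound I would exhibit embeddings and face covers realizing each value on the right. Given optimal covers $C_B$ for $(B, T_B)$ and $C_{G'}$ for $(G', T')$, their combination in any embedding of $G$ covers $T$ and has size at most $|C_B|+|C_{G'}|$. Given instead optimal covers $C_B^+$ for $(B, T_B\cup\{v\})$ and $C_{G'}^+$ for $(G', T'\cup\{v\})$, each containing a face at $v$, we can choose the embedding of $G$ so that these two $v$-faces play the roles of $f_1$ and $f_2$; the merged face $f_1^*$ covers $v$ and the terminals on both $\partial f_1$ and $\partial f_2$, so the combined cover has size $|C_B^+|+|C_{G'}^+|-1$. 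For the lower bound, given an optimal cover $F$ of $G$, I split on whether $f_1^*\in F$: in the affirmative case, replacing $f_1^*$ by $f_1$ on the $B$-side and $f_2$ on the $G'$-side yields covers of $(B,T_B\cup\{v\})$ and $(G',T'\cup\{v\})$ of total size $|F|+1$; in the negative case, the restriction of $F$ to each side is a cover for the corresponding terminals and sums to $|F|$.

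The delicate case I expect to require the most attention is when $v\in T$ and the optimal cover of $G$ does not use $f_1^*$: then $v$ must be covered by a face sitting entirely inside $B$ or entirely inside $G'$, and we must avoid over- or under-counting this face. Since in this case $T_B$ and $T'$ both already contain $v$, the two variants on the right-hand side of the identity differ by exactly $1$, and the $\geq$ inequality follows by observing that the unique face of $F$ covering $v$ contributes to exactly one side's face-cover bound. Once the identity is in hand, the recursion terminates in polynomially many steps, and the total running time stays polynomial in $|V(G)|$ given the assumed oracle for biconnected graphs.
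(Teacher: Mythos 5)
Your recursive identity is correct, and the upper-bound direction and the base case are fine, but the lower-bound direction rests on a structural claim that is false in general: that in any embedding of $G$ there is a \emph{unique} face of $B$ containing the drawing of $G'\setminus\{v\}$, and hence a unique merged face $f_1^*$. This fails whenever $v$ is also a cut vertex of $G'$, i.e.\ whenever three or more blocks of $G$ meet at $v$ (e.g.\ several triangles sharing the single vertex $v$): the components of $G'\setminus\{v\}$ may then be distributed among several faces of $B$ incident to $v$, so an embedding of $G$ can have several faces each of whose boundary uses edges of both $B$ and $G'$. Your case split ``$f_1^*\in F$ or not'' therefore does not exhaust the possibilities. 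In the unhandled case where the optimal cover $F$ contains $m\ge 2$ such merged faces, projecting $F$ onto the two sides only yields $|F|\ge \fcn(B,T_B\cup\{v\})+\fcn(G',T'\cup\{v\})-m$, which is weaker than the second term of your minimum.

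The identity can still be saved, but this requires an additional exchange/re-embedding argument that your sketch does not contain: given a cover with merged faces $g_1=\beta_1\cup\gamma_1$ and $g_2=\beta_2\cup\gamma_2$ (with $\beta_i$ a face of the $B$-subembedding and $\gamma_i$ a face of the $G'$-subembedding, all incident to $v$), one relocates the part of $G'$ drawn inside $\beta_2$ into $\gamma_1$ so that $\gamma_1$ and $\gamma_2$ merge into a single face of the $G'$-subembedding; replacing $\{g_1,g_2\}$ by the new merged face together with the now-pure face $\beta_2$ gives a cover of the same size with one fewer merged face. Iterating reduces to $m\le 1$, after which your argument goes through. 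Without this step the proof is incomplete. (Two smaller points: your ``delicate case'' with $v\in T$ and no merged face is handled correctly, via $\fcn(G',T'\setminus\{v\})\ge\fcn(G',T')-1$; and you should either assume $G$ connected, as the paper does by treating disconnected inputs separately, or note that a leaf block may be an entire component.) For comparison, the paper avoids peeling off one block at a time and instead processes each cut vertex together with \emph{all} blocks containing it, using the formula $\sum_i\fcn(B_i,T\cap B_i)-(t-1)$ when $v$ is (or can freely be made) a terminal; this implicitly uses the sunflower arrangement in which all the $v$-covering faces merge into one.
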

\begin{proof}
If $G$ is not 2-connected, consider the block-cut tree $\mathcal{B}$ of $G$. We will compute the \textsc{Face Cover Number} inductively, starting from the leaves. Let $v$ be a cutvertex with children $B_1,\dots, B_t$ and let $G^v$ be the subgraph of $G$ that corresponds to the subtree of $\mathcal{B}$ rooted at $v$. We distinguish two cases:
\begin{itemize}
    \item If $v$ is a terminal: we set $\fcn(G^v, T\cap G^v)=\sum \fcn(B_i, T\cap B_i)-(t-1)$, since we know that in each $B_i$, the vertex $v$ will be covered, and we can ensure that the face used to cover it is the outer face of $B_i$.
    \item If $v$ is not a terminal: for each $i\in [t]$, we compute $k_i=\fcn(B_i, T\cap B_i)$ and $k'_i=\fcn(B_i, (T\cap B_i)\cup \{v\})$. If there is exists $i>1$ such that $k_i=k'_i$, we can ensure that the outer face of $B_i$ covers $v$. Thus adding $v$ to $T$ does not increase the face cover number. We add $v$ to $T$ and use the above argument to compute $\fcn(G^v, T\cap G^v)$. Otherwise, we set $\fcn(G^v, T\cap G^v)=\sum \fcn(B_i, T\cap B_i)$.    \qedhere
\end{itemize}
\end{proof}

\section{Types of Virtual Components}\label{sec:virtcomptype}
To facilitate constructing the kernel, we will divide virtual components into 4 groups. The terminal-free, unproblematic and semi-problematic will be replaced by constant-sized gadgets, while the problematic ones will eventually be replaced by inductively constructed kernels.

\begin{definition}[Types of virtual components]
We distinguish different types of virtual components.
\begin{enumerate}
    \item Virtual components that have no terminals except possibly the corners -- we call such virtual components \emph{terminal-free}. 
\item Virtual components that have a face cover consisting of exactly one external face -- we call such virtual components \emph{unproblematic}.
\item Virtual components that are not unproblematic, have a face cover consisting of both external faces but also have face cover number one -- we call such virtual components \emph{semi-problematic}.
\item  All other virtual components which we call \emph{problematic}.
\end{enumerate}
\end{definition}

For unproblematic virtual components, we say that an embedding witnesses their unproblematicness if in that embedding one of the external faces covers all terminals. For semi-problematic components, we say that an embedding witnesses their semi-problematicness if the face cover number of that embedding is one, but both the external faces can jointly cover all the terminals.

We first argue that the types of virtual components can be recognized in polynomial time and then present kernelization steps for such virtual components in \Cref{sec:basickernel}.
Obviously, terminal-free virtual components can be recognized in linear time and to recognize problematic components, one merely needs to recognize unproblematic and semi-problematic components.
For this, we can give dynamic programs along the SPR-subtree.

\begin{restatable}{lemma}{unprobrec}
    Deciding whether a virtual component is unproblematic is in \P.
\end{restatable}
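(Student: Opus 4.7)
The plan is to reduce unproblematicness to a single planarity test on an auxiliary graph, exploiting the standard characterisation of ``all prescribed vertices on one face'' in terms of planarity after adding an apex. Fix a virtual component $C$ with corners $c_1,c_2$ and enhancement $H = C \cup \{c_1c_2\}$. By definition, $C$ is unproblematic iff there is an embedding of $H$ in which one of the two faces incident to the corner edge $c_1c_2$ has all terminals in $V(C)$ on its boundary.

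I would construct the auxiliary graph $H^\star$ as follows: subdivide $c_1c_2$ by a fresh degree-two vertex $x$, then add a fresh vertex $v^\star$ joined by an edge to $x$ and to every terminal of $V(C) \setminus \{c_1,c_2\}$. The key claim I would then prove is that $C$ is unproblematic iff $H^\star$ is planar. The forward direction is the easy one: starting from an embedding of $H$ witnessing unproblematicness with witnessing external face $f$, place $x$ on the corner edge inside $f$ and drop $v^\star$ into the resulting subface; all of the prescribed neighbours of $v^\star$ already lie on the boundary of that subface, so the required edges can be added without crossings.

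The main obstacle is the converse direction, where I have to argue that a planar embedding of $H^\star$ really produces an embedding of $H$ with an \emph{external} face covering the terminals, not just some arbitrary face incident to both corners. Starting from a planar embedding of $H^\star$ and removing $v^\star$ yields a face $f'$ of $H^\star - v^\star$ whose boundary contains $x$ and every non-corner terminal of $C$. If $x$ is not a cut-vertex of $H^\star - v^\star$, then it occurs exactly once on the boundary of $f'$ and, having degree two, contributes both of its incident edges $xc_1$ and $xc_2$ to that boundary, so unsubdividing $x$ turns $f'$ into a face of $H$ that contains the corner edge on its boundary and covers all terminals. To rule out the cut-vertex case, I would use that virtual components of the SPR-tree of a biconnected graph are connected, so $c_1$ and $c_2$ lie in the same component of $H^\star - v^\star - x = C$, which implies $x$ is not a cut-vertex. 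Since $H^\star$ has only two more vertices than $H$ and planarity can be tested in linear time, this gives a polynomial-time (indeed linear-time) algorithm.
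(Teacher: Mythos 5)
Your proof is correct, but it takes a genuinely different route from the paper. The paper recognizes unproblematic components by a bottom-up dynamic program along the sub-SPR-tree of the component, with a case analysis characterizing unproblematicness at S-, P- and R-nodes in terms of the children being unproblematic plus a condition on where terminal-containing children attach. You instead observe that unproblematicness is exactly the condition ``all non-corner terminals lie on a single face that also contains the corner edge, in some embedding of the enhancement,'' and reduce this to one planarity test via the classical apex characterization; the subdivision vertex $x$ on the corner edge is the right device to force the common face to be external, and your handling of the degree-two/non-cut-vertex point (guaranteed because the enhancement of a virtual component of a biconnected graph is connected even after removing $x$) closes the only real gap in the converse direction. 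Your argument is shorter, self-contained, and yields a linear-time test, whereas the paper's DP is case-heavy but has the advantage that essentially the same template extends to the companion task of recognizing semi-problematic components (where terminals must be coverable by the \emph{two} external faces jointly); that two-face condition is not a ``common face'' condition, so your single-apex planarity reduction does not transfer to it directly, and the paper reuses its DP there.
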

\begin{proof}
    This can be done by dynamic programming along the sub-SPR-tree of that virtual component.

    In the base case, this sub-tree consists of only one node (which is not a P-node).
    If this node is an S-node, the virtual instance is unproblematic if and only if all the terminals lie on a single path between its corners.
    If this node is an R-node, then the virtual component's embedding and face cycles are uniquely determined. One can check in polynomial time whether all the terminals lie on a single face cycle that contains the edge between the corners, which is exactly when such a virtual instance is unproblematic.

    We now turn to the dynamic programming step and distinguish cases based on the type of the root \(r\) of the sub-SPR-tree of the virtual component.
    
    \noindent\textbf{S-node.}\hspace{.5em}
    The virtual component is unproblematic if and only if the child virtual components associated with edges on one of the two paths in the skeleton of \(r\) between the corners of the virtual instance contain no terminals other than the corners and all child virtual components are unproblematic.
    The forward implication of this statement can be seen by embedding each child virtual component in a way that witnesses their unproblematicness and flipping these embeddings in a way that all terminals lie on the same face.
    For the backward direction, notice that if both paths in the skeleton of \(r\) between the corners of the virtual instance contain terminals other than the corners, then no face has all these terminals and the edge between the corners on its boundary, no matter what embedding choices we make.
    Moreover, if a child virtual component is not unproblematic, then it cannot be a subgraph of an unproblematic virtual component by the definition.

    \vspace{1em}
    \noindent\textbf{P-node.}\hspace{.5em}
    The virtual component is unproblematic if and only if at most one child virtual component contains terminals other than the corners of the virtual component and all child virtual components are unproblematic.
    The forward implication of this statement can be seen by embedding each child virtual component in a way that witnesses their unproblematicness and ordering and flipping the embedding of the (possibly) one child virtual component containing terminals in a way that all the terminals lie on the same face as the edge between the corners of the virtual component.
    For the backward direction, notice that if two child virtual components contain terminals other than the corners, then no face has all these terminals and the edge between the corners on its boundary, no matter what embedding choices we make.
    Moreover, if a child virtual component is not unproblematic, then it cannot be a subgraph of an unproblematic virtual component by the definition of being unproblematic.

    \vspace{1em}
    \noindent\textbf{R-node.}\hspace{.5em}
    The virtual component is unproblematic if and only if all child virtual components containing terminals are linked via virtual edges on a single face cycle which also contains the edge between the corners and all child virtual components are unproblematic.
    The forward implication of this statement can be seen by embedding each child virtual component in a way that witnesses their unproblematicness and flipping these embeddings in a way that all terminals lie on the same face as the edge between the corners.
    For the backward direction, notice if not all virtual edges linking child virtual components that contain terminals lie on a face cycle with the edge between the corners, then no face has all these terminals and the edge between the corners on its boundary, no matter what embedding choices we make.
    Moreover, if a child virtual component is not unproblematic, then it cannot be a subgraph of an unproblematic virtual component by the definition of being unproblematic.
    
    Checking the condition for each node type can be done in polynomial time.
\end{proof}

\begin{restatable}{lemma}{semiprobrec}
    Deciding whether a virtual component is semi-problematic is in \P.
\end{restatable}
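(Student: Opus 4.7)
The plan is to proceed by dynamic programming along the sub-SPR-tree of the virtual component, mirroring the approach used in the previous lemma. At each node I decide whether the virtual component rooted there is semi-problematic, using inductively computed information from the children. As a first step, I would apply the previous lemma to decide unproblematicness in polynomial time; by definition, an unproblematic virtual component is never semi-problematic, so in that case we are done. Otherwise I need to verify that some embedding of the enhancement simultaneously realises (a) a single face covering all terminals, and (b) both external faces together covering all terminals. Because the component is not unproblematic, the single covering face in (a) cannot be one of the two external faces and so must be an internal face.

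For the base cases, a leaf S-node is just a cycle and is automatically unproblematic (hence never semi-problematic), while a leaf R-node has a unique combinatorial embedding up to reflection on which both conditions can be checked directly in polynomial time. For the inductive step I would distinguish cases on the type of the root \(r\). At an S-node the skeleton is a cycle whose only faces are external, so the single internal covering face of the whole enhancement must lie inside exactly one child, forcing all terminals into that one child; together with condition (b) and the fact that the child's external faces align with the parent's external faces through any chosen flip, that child must itself be semi-problematic and all other children must be terminal-free. At a P-node the skeleton is a dipole, so the characterisation is analogous but must additionally account for the cyclic position of each child relative to the corner edge, and the single internal covering face may also be a sector of the skeleton between two consecutive children. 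At an R-node the combinatorial embedding of the skeleton is fixed up to reflection, which reduces the task to checking whether a choice of flips of the children can realise both conditions on the fixed face-adjacency structure of the skeleton.

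The main obstacle will be ensuring that the face-cover-one condition and the both-external-faces condition can be realised inside a single embedding, rather than in two potentially different ones. To handle this, the state tracked at each child has to be slightly richer than a single semi-problematic bit: for each of the two possible orientations of the child one would record which terminals lie on each of its external faces, and whether there is an internal face of the child covering all of its terminals. Since each child admits only two orientations and all this information can be stored in polynomial space, and the combination step at each node type is a polynomial-time case distinction, the overall dynamic program runs in polynomial time.
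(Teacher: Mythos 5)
Your proposal diverges from the paper's in a way that creates a real gap. The paper does \emph{not} try to realise the two defining conditions in a single embedding: it checks them independently, testing face cover number one directly (the known \textsf{FPT} algorithm for face cover runs in linear time for $k=1$), and separately running a DP for the property that both external faces of the enhancement can jointly cover all terminals (``EFC''). The characterisations in that DP only ever refer to children being unproblematic or EFC, so the state per child is a constant number of bits and the previous lemma's machinery carries over almost verbatim. By insisting on a single witnessing embedding you set yourself a harder combination problem, and your fix for it --- recording ``for each of the two possible orientations of the child \ldots which terminals lie on each of its external faces'' --- does not work as stated: a child virtual component does not have only two embeddings (a flip), it can have exponentially many combinatorial embeddings, and the partition of its terminals between its two external faces depends on which embedding is chosen, not merely on the flip. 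So the state you propose is not well defined, and the set of achievable terminal partitions you would need to enumerate is not obviously of polynomial size. You identify this as ``the main obstacle'' but never actually overcome it.

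A secondary consequence of the single-embedding reading is that your S-node characterisation (one semi-problematic child, all others terminal-free) is stricter than what the definition requires: for the joint-external-face condition the other children need only be unproblematic, not terminal-free, and the face-cover-number-one condition may be witnessed by a completely different embedding. To repair the proof, follow the paper's decomposition: test $\fcn=1$ as a black box, and design the DP only for the EFC property, whose node-type characterisations (all children EFC at S-nodes; at most two terminal-carrying unproblematic children at P-nodes; terminal-carrying children confined to the two face cycles through the corner edge at R-nodes) need only constant information per child.
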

\begin{proof}
    We can check in polynomial time, whether a graph has face cover number one.
    It remains to check whether all terminals can be covered in the enhancement of the virtual instance using both faces incident on the edge between the corners of the virtual instance.
    We call this property being \emph{external-face coverable} (\emph{EFC}).
    This can be done in a similar way as checking unproblematicness by a dynamic program along the sub-SPR-tree of the virtual component.

    In the base case, this sub-tree consists of only one node (which is not a P-node).
    If this node is an S-node, the virtual instance is semi-problematic.
    If this node is an R-node, then the virtual component's embedding and face cycles are uniquely determined and one can check in polynomial time whether all the terminals lie on the union of both face cycles that contain the edge between the corners.

    We now turn to the dynamic programming step and distinguish cases based on the type of the root \(r\) of the sub-SPR-tree of the virtual component.

    \noindent\textbf{S-node.} \hspace{.5em}
    The virtual component is EFC if and only if all child virtual components are EFC.
    The forward implication of this statement can be seen by embedding each child virtual component in a way that witnesses that they are EFC and flipping these embeddings in a way that all terminals lie on a face with the edge between the corners on its boundary.
    For the backward direction, notice that if a child virtual component is not EFC, then some of its terminals cannot lie on a face containing the edge between the corners on its boundary, no matter what embedding choices we make. 

    \vspace{1em}
    \noindent\textbf{P-node.} \hspace{.5em}
    The virtual component is EFC if and only if at most two child virtual component contain terminals other than the corners of the virtual component and all child virtual components are unproblematic.
    The forward implication of this statement can be seen by embedding each child virtual component in a way that witnesses their unproblematicness and ordering and flipping the embedding of the at most two child virtual components containing terminals in a way that all terminals lie on a face that has the edge between the corners of the virtual component on its boundary.
    For the backward direction, notice that if three child virtual components contain terminals other than the corners, then some terminal is not on the same face as the edge between the corners, no matter what embedding choices we make.
    Moreover, if a child virtual component is not unproblematic, then some of its terminals cannot lie on a face containing the edge between the corners on its boundary, no matter what embedding choices we make.

    \vspace{1em}
    \noindent\textbf{R-node.} \hspace{.5em}
    The virtual component is EFC if and only if all child virtual components containing terminals are linked via virtual edges in the union of the face cycles which also contain the edge between the corners and all child virtual components are unproblematic.
    The forward implication of this statement can be seen by embedding each child virtual component in a way that witnesses their unproblematicness and flipping these embeddings in a way that all the terminals lie on a face with the edge between the corners.
    For the backward direction, notice if not all virtual edges linking child virtual components that contain terminals lie on a face cycle with the edge between the corners, no face has all these terminals and the edge between the corners on its boundary, no matter what embedding choices we make.
    Moreover, if a child virtual component is not unproblematic, then at least one of its terminals cannot lie on the same face as the edge between the corners of the virtual component because two edges share at most one face in a non-trivial 3-connected graph by \Cref{prop:3-conn:three-shared-vertex-bound}.

    Checking the condition for each node type can be done in polynomial time.
\end{proof}

\section{Nice Kernels and Basic Virtual Component Replacement}
\label{sec:basickernel}
 In this section, we describe a kernelization procedure that inductively assumes that we have special kernels, which we refer to as \emph{small nice kernels} of the enhancement of each virtual component of the graph induced at the node of the SPR-tree under consideration. 

\begin{definition}
Let $(G,T)$ be an instance of \textsc{Face Cover Number}. Given a node \(t\) of the SPR-tree of \(G\) consider its enhancement \(G_t\) with corner vertices $c_1,c_2$.
We define $\fcn_0(G_t, T),\, \fcn_1(G_t,T),\, \fcn_2(G_t,T)$ as the sizes of the smallest face cover of the enhancement of $G_t$ such that they contain exactly 0, 1, 2 external faces respectively. Let $K$ be a graph with terminal set $T' \subseteq V(K)$. We say that $K$ is a \emph{nice kernel} of $G_t$ if the following conditions are fulfilled:

\begin{itemize}
    \item[\textbf{K1}] $K$ contains vertices $c_1,c_2$ and the edge $c_1c_2$.
    \item[\textbf{K2}] For $i\in\{0,1,2\}$, we have that $\fcn_i(G_t, T\cap V(G_t))=\fcn_i(K, T')\leq k$ or that $\fcn_i(G_t, T\cap V(G_t))>k$ and $\fcn_i(K, T')>k$.
    \item[\textbf{K3}] For every $C\subseteq \{c_1,c_2\}$, we have that
    $\fcn_0(G_t, (T\cap V(G_t))\setminus C)=\fcn_0(K, T'\setminus C)\leq k$ or that $\fcn_0(G_t, (T \cap V(G_t))\setminus C)>k$ and $\fcn_0(K, T'\setminus C)>k$,
\end{itemize}
A nice kernel is called \emph{small} if
\begin{itemize}
    \item[\textbf{K4}] There is some constant \(c\) (that does not depend on \(K\)) such that \(c\cdot |V(K)|^{\frac{1}{3}}\) is the minimum number of internal faces used in any minimum face cover of \(G_t\).
\end{itemize}
\end{definition}
The condition (K1) ensures that we can construct the kernel inductively, i.e. that in $G_t$, we can replace a virtual component corresponding to its child by a nice kernel for it. Intuitively, we want $G_t$ and $K$ to behave the same, and in particular interact with the rest of the graph in the same way.  
The conditions (K2) and (K3) say that if the face cover sizes of $G_t$ and $K$ differ, both of the face covers are too big to lead to a global solution. 
One important thing to note is that the corner vertices $c_1,c_2$ might be covered by from the ``outside'' of $G_t$. Thus we need to compare face covers of $G_t$ and $K$ covering all terminals except for the corners. 
In (K3), we only compare the values of $\fcn_0$ as the values of $\fcn_1$ and $\fcn_2$ are equal by (K2) (because the corners are covered by the corresponding external faces). 

Ultimately, we will design a leaves-to-root dynamic program along the SPR-tree to obtain small nice kernels for each \(G_t\) with \(t\) being a node of an SPR-tree of \(G\).
We can verify that a small nice kernel at the root of the SPR-tree is in fact a polynomial kernel.
\begin{lemma}
\label{lem:kernel-root}
    Let \(r\) be the root of the considered SPR-tree for \(G\).
    A small nice kernel \(K\) of \(G_r\) is a polynomial-sized kernel for \(G\) as an instance of \textsc{Face Cover Number}.
\end{lemma}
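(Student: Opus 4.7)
The plan is to check the two substantive kernel properties: \yes/\no equivalence between $(G,T,k)$ and the output instance, and a polynomial size bound. First I would identify $G_r$ with $G$: at the root, $G_r$ is simply $G$ together with an arbitrarily chosen corner edge $c_1c_2 \in E(G)$, so $\fcn(G,T)=\fcn(G_r,T)$.

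For \yes/\no equivalence, I would exploit K1 and K2. In any embedding of a graph containing the edge $c_1c_2$, this edge bounds exactly two faces (the external ones), so every face cover uses 0, 1 or 2 external faces. Hence
\[\fcn(G_r,T) \;=\; \min_{i\in\{0,1,2\}}\fcn_i(G_r,T),\]
and the analogous identity holds for $(K,T')$ by K1. Applying K2 coordinate-wise, for each $i$ we either have $\fcn_i(G_r,T)=\fcn_i(K,T')\leq k$ or both exceed $k$; taking the minimum over $i$ yields $\fcn(G_r,T)\leq k \iff \fcn(K,T')\leq k$, which is the required equivalence. Note that K3 is not needed at the root since the corners are not shared with any outside graph.

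For the size bound, I would split into cases based on whether $|V(K)|\leq k^3/c^3$, where $c$ is the constant from K4 --- a condition checkable in polynomial time. If the inequality holds, we return $(K,T',k)$, and $|V(K)|=O(k^3)$. Otherwise, K4 forces the minimum number of internal faces in any minimum face cover of $G_r$ to exceed $k$; since at most two external faces are ever available, this implies $\fcn(G_r,T)>k$, and by the equivalence just established $(G,T,k)$ is \no. In that case we return instead a canonical trivial \no-instance of constant size, which is a correct and polynomial-sized output. In both branches the parameter $k'\leq k$, so the final bound $|V(K')|+k'\in\mathcal{O}(k^3)$ holds.

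The main obstacle is purely definitional rather than technical: the proof hinges on carefully making the root identification (so that $G_r=G$ and its two external faces are just the two faces incident on the designated edge), and on recognising that K4 converts a face-cover-size budget into a cubic size budget through its cube-root relationship. Once these identifications are in place, the argument is a direct application of K1, K2 and K4.
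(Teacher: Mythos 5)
Your proposal is correct and follows essentially the same route as the paper: identify \(G_r\) with \(G\) so that (K1)--(K3) collapse to instance equivalence via \(\fcn=\min_i\fcn_i\), and use (K4) to turn the internal-face count into a cubic bound on \(|V(K)|\). Your explicit case split (outputting a trivial \no-instance when \(|V(K)|>k^3/c^3\)) is in fact slightly more careful than the paper's one-line size argument, which implicitly assumes the face cover number is at most \(k\).
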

\begin{proof}
    By the fact that for \(r\), \(G_r\) has no corners and \(G=G_r\) coincides with its enhancement, (K1)--(K3) are obsolete or equivalent to expressing instance equivalence of \(G\) and \(K\) as instances of \textsc{Face Cover Number}.

    Moreover, by (K4) there is a constant \(c\) such that \(|V(K)|\leq \frac{1}{c}(k - 2)^3\) where \(k\) is the face cover number of \(G\) (and hence also of \(K\)).
    This shows that \(K\) is a cubic kernel for \(G\) as an instance of \textsc{Face Cover Number}.
\end{proof}

 The next reduction rules replace ``simple'' virtual components by constant-sized graphs.
 \begin{rr}\label{rr:terminal-free}
     Replace every virtual component without terminals by an edge between its corners.
 \end{rr}
 \begin{rr}\label{rr:unproblematic}
     Replace every unproblematic component by a $P_3$ whose endpoints are identified with the corners and the degree-two vertex is a terminal. 
 \end{rr}
 
 We call the above graph a \emph{terminal-subdivided edge}.
 \begin{rr}\label{rr:semi-problematic}
     Replace every semi-problematic component by the gadget depicted in \Cref{fig:half-problematic} by connecting one corner to the vertex $a$ of the gadget by an edge and the other corner to the vertex $b$.
 \end{rr}
 
 \begin{figure}[t]
    \centering
    \includegraphics{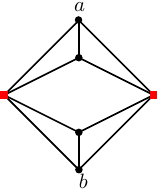}
    \caption{The gadget for semi-problematic virtual components. Red squares are terminals. Note that if \(a\) and \(b\) are fixed to be on the outer face, then the face boundaries are unique.}
    \label{fig:half-problematic}
\end{figure}

\begin{figure}[t]
    \centering
    \includegraphics[width=0.75\linewidth]{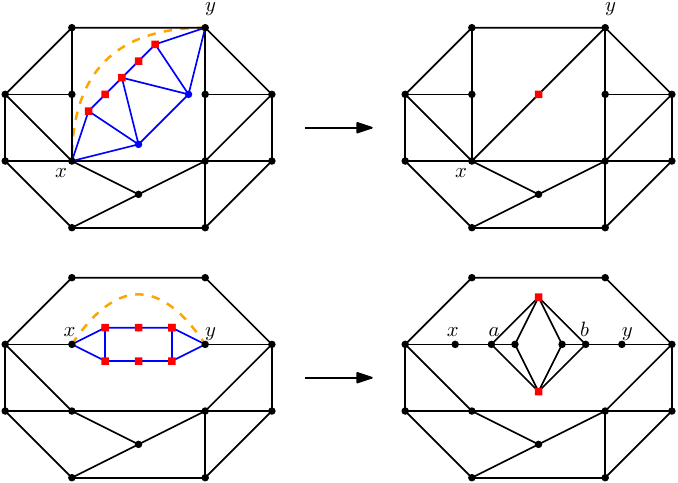}
    \caption{Exemplary replacements made in \Cref{rr:unproblematic} (top) and \Cref{rr:semi-problematic} (bottom). The virtual edges are shown as orange dashed curves, their corresponding virtual components are shown in blue in the left figures and their corresponding replacements are shown on the right.}
    \label{fig:replacements}
\end{figure}

As defined above, let $t$ be a node of the \SPQR-tree of $G$ and let $G_t$ be its enhancement.
\begin{restatable}{lemma}{basic}
    \label{lem:basic-steps}
    Let \(K\) and \(T'\) be the graph that arises from \(G_t\) by applying the described replacements (\Cref{rr:terminal-free}, \Cref{rr:unproblematic} and \Cref{rr:semi-problematic}) and the set of terminals in it respectively. Then \(K\) is a nice kernel of \(G_t\).
\end{restatable}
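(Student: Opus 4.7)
The plan is to verify the conditions K1, K2, and K3 in turn. First, I would observe that the three reduction rules act on pairwise disjoint virtual components of $G_t$, so it suffices to prove that replacing a single virtual component $H$ by its respective gadget preserves the nice kernel conditions; chaining these applications then yields the lemma.

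Condition K1 is immediate from the form of the replacements: each rule acts strictly inside a virtual component while keeping its two corner vertices, and none of the rules touch the corner edge of the enhancement of $G_t$. I would briefly record this by inspecting each rule.

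For K2 and K3, the strategy is to show that each replacement preserves the face cover numbers $\fcn_i$ of the enhancement of $G_t$ for $i \in \{0,1,2\}$, as well as $\fcn_0$ with subsets of $\{c_1, c_2\}$ removed from the terminal set. I would do this rule by rule, exhibiting for each a correspondence between face covers of the enhancement of $G_t$ before and after replacement that preserves both the number of external faces used and the total face count. For \Cref{rr:terminal-free}, a face cover of $G_t$ can be stripped of any faces internal to $H$ (they cover no terminals) and transferred to a face cover of $K$ of at most the same size, and vice versa. For \Cref{rr:unproblematic} and \Cref{rr:semi-problematic}, I would exploit the defining property that the terminals of $H$ can be covered by one, respectively both, external faces of $H$'s enhancement, to show that any use of internal faces of $H$ in a face cover can be swapped for the appropriate external face(s) without increasing the count. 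Conversely, face covers of $K$ lift back to face covers of $G_t$ by translating the gadget's forced external-face usage into an external face of $H$ covering $T_H$. The ``$\leq k$ versus $>k$'' dichotomy in K2 and K3 is handled uniformly: if either side exceeds $k$, we need only argue that the other does too, which again follows from the same correspondence.

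The main technical obstacle will be the face-swapping argument for unproblematic and semi-problematic components. The external faces of $H$'s enhancement correspond to specific faces of $G_t$'s enhancement that are typically shared with edges outside of $H$, so using them may incidentally cover terminals outside $H$. I would therefore take care to argue that when a minimum face cover of $G_t$ uses only internal faces of $H$ to cover $T_H$, one can execute the swap without paying an extra face overall, because the internal faces vanish in $K$ (the terminal-subdivided edge has no internal faces containing a terminal, and the gadget of \Cref{fig:half-problematic} has only two internal faces with a carefully engineered interface), while if the cover already uses one or both of $H$'s external faces, one can reuse the same choice in $K$ thanks to the gadget's structure. A routine case analysis on which external face(s) cover $T_H$ and on the subset $C \subseteq \{c_1, c_2\}$ removed from the terminal set in K3 then completes the verification.
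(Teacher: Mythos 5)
Your proposal is correct and follows essentially the same route as the paper's proof: (K1) by inspection, and (K2)/(K3) via a bidirectional translation of face covers in which internal faces of terminal-free and unproblematic components are eliminated by re-embedding those components to witness their type, while semi-problematic components are normalized to use either one internal face or both external faces. The subtlety you flag at the end is exactly the one the paper handles explicitly, by flipping each unproblematic component so that the external face witnessing its unproblematicness is whichever of the two is compatible with the required number of external faces of \(G_t\) in \(\fcn_i\).
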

\begin{proof}
Obviously (K1) is satisfied, as the corners of \(G_t\) are not among the vertices being replaced.

For (K2) consider any \(i \in [3]\), and fix an embedding of \(G_t\) in which \(\fcn_i(G_t)\) can be achieved and let \(F\) be a face cover doing so.
    Without loss of generality, \(F\) contains no internal faces of the subembeddings of terminal-free virtual components (such faces can simply be omitted from \(F\)).
    Similarly, without loss of generality, \(F\) contains no internal faces of the subembeddings of unproblematic virtual components; otherwise, all unproblematic virtual component can be redrawn such that only one of its external faces
    is needed to cover all its terminals.
    If one of the external faces is also the external face of \(G_t\), the other one is not and we are free to include either into \(F\) depending on \(i\).
    In fact, we can assume the subembeddings of unproblematic virtual components to witness their unproblematicness by the same argument.
    Lastly, along a similar line of arguments, without loss of generality, \(F\) either contains one internal face of any semi-problematic component or both its external faces, and we can assume the subembeddings of semi-problematic virtual components to witness their semi-problematicness.

    Now let \(F'\) arise from \(F\) by replacing any part of a face boundary in it that contained an edge from (i) a terminal-free virtual component by the edge between its corners, (ii) an unproblematic virtual component by the path replacing that unproblematic component, and (iii) each outer face of the subembedding of a semi-problematic virtual component by each of the terminal-subdivided edges in the graph drawn in Figure~\ref{fig:half-problematic} replacing that virtual component.
    As a final step, for faces in \(F\) that are internal faces of semi-problematic virtual components, we instead include a face into \(F'\) consisting of the unique internal face in an embedding according to \ref{fig:half-problematic} that contains both new terminals of the gadget.
    We embed \(G'\) based on the embedding of \(G\) as follows.
    Embed each edge replacing a terminal-free virtual component in \(G'\) along the embedding of one of the paths between the corners in that component, each terminal-subdivided edge replacing an unproblematic virtual component in \(G'\) along the embedding of one of the paths containing a terminal between the corners in that component, and each small graph replacing a semi-problematic virtual component along the embedding of that component in \(G\) where we embed that small graph itself as in \ref{fig:half-problematic}.

    In this embedding of \(G'\) it is straightforward to verify that \(F'\) is a face cover, its size is equal to that of \(F\) and it contains the same number of external faces of \(G_t\).

    Conversely, fix an embedding of \(K\) in which \(\fcn_i(K)\) can be achieved and let \(F'\) be a face cover doing so.
    Without loss of generality \(F'\) only contains inner faces of a gadget replacing a semi-problematic virtual component that contain both new terminals of that gadget; otherwise we can simply replace replace an inner face of a gadget in \(F'\) that does not contain both terminals of that gadget by the inner face of the gadget that does.
    We can replace the subembeddings of replacements of terminal-free, unproblematic and semi-problematic virtual components by \(\epsilon\)-thin embeddings of their originals.
    In the case of unproblematic virtual components we choose embeddings that witness their unproblematicness and are flipped with the single face covering all terminals in the direction of a face in \(F'\) (such a face must exist because unproblematic components are replaced by terminal-subdivided edges and the corresponding terminal must be covered by \(F'\)).
    In the case of semi-problematic virtual components we choose embeddings that witness their semi-problematicness.
    We obtain \(F\) from \(F'\) by first replacing any face in \(F'\) that is an inner face of a semi-problematic component gadget by the internal face covering all terminal of that semi-problematic component.
    For each semi-problematic component for which no inner face of its gadget is contained in \(F'\) without loss of generality both faces in the fixed drawing of \(K\) that contain the corners of that virtual component have to be in \(F'\).
    To obtain \(F\) we replace them by the faces in the fixed drawing of \(G_t\) that satisfy the same, i.e.\ the two faces that contain both corners of the considered semi-problematic component.
    
    It is straightforward to verify that \(F\) is a face cover in the described embedding of \(G\), its size is equal to that of \(F'\) and it contains the same number of external faces of \(K\).

    For (K3) the same arguments as for (K2) work irrespective of which corners of \(G_t\) are considered terminals.
\end{proof}

Next, we show that inductively replacing nice kernels for problematic virtual components yields a nice kernel. 
\begin{lemma}\label{lem:basic-problematic}
   Let $C$ be a problematic component of \(G_t\) and $\tilde{C}$ be a nice kernel for $C$. Let $\tilde{G_t}$ be the graph obtained from $G_t$ by replacing $C$ by $\tilde{C}$. Then $\tilde{G_t}$ is a nice kernel for $G_t$.
\end{lemma}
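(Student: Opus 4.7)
The plan is to verify conditions (K1)--(K3) of the nice-kernel definition for $\tilde{G_t}$, with terminal set obtained from $T$ by replacing the terminals inside $C$ with those of $\tilde{C}$. Condition (K1) is immediate: the corners of $G_t$ and the edge between them lie outside the virtual component $C$, so the replacement does not affect them, and the corners $c_1,c_2$ of $C$ are preserved inside $\tilde{C}$ by (K1) applied to $\tilde{C}$.

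For (K2), fix $i \in \{0,1,2\}$; I describe a translation establishing $\fcn_i(\tilde{G_t}) \le \fcn_i(G_t)$ (up to exceeding $k$), with the reverse following by symmetry. Take an embedding of $G_t$'s enhancement and a face cover $F$ with $i$ external faces attaining $\fcn_i(G_t)$. Partition $F$ into $F_{\mathrm{int}}$ (faces entirely inside $C$), $F_{\mathrm{span}}$ (the at most two faces straddling the interface of $C$ with the rest of $G_t$), and $F_{\mathrm{out}}$ (faces entirely outside $C$). Setting $j := |F_{\mathrm{span}}| \in \{0,1,2\}$, the set $F_C$ consisting of $F_{\mathrm{int}}$ together with the $j$ external faces of $C$'s enhancement corresponding to $F_{\mathrm{span}}$ is a face cover of $C$'s enhancement with $j$ external faces that covers every terminal in $V(C)$ except possibly the corners of $C$ whose only covering face in $F$ lies in $F_{\mathrm{out}}$.

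Next, invoke the appropriate property of $\tilde{C}$ as a nice kernel of $C$. If $j \ge 1$, then the external faces of $C$ in $F_C$ automatically cover both corners of $C$, so $F_C$ covers all of $T \cap V(C)$, and (K2) of $\tilde{C}$ yields a face cover $F'_C$ of $\tilde{C}$'s enhancement with $j$ external faces and $|F'_C| = \fcn_j(\tilde{C}) = \fcn_j(C) \le |F_C|$. If $j=0$, letting $C'_C \subseteq \{c_1,c_2\}$ denote the set of corner terminals of $C$ that $F_C$ fails to cover, (K3) of $\tilde{C}$ yields $F'_C$ of size at most $|F_C|$ covering $T'_{\tilde{C}} \setminus C'_C$. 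Embed $\tilde{G_t}$ by keeping the embedding outside $C$ fixed and inserting $\tilde{C}$ via the embedding witnessing $F'_C$, flipped if necessary so that the external faces of $\tilde{C}$ used in $F'_C$ merge with the ambient embedding into the same spanning faces as those in $F_{\mathrm{span}}$. Setting $F' := F_{\mathrm{out}} \cup F'_C$ (identifying external faces of $\tilde{C}$ with the spanning faces they become in $\tilde{G_t}$) gives a face cover of $\tilde{G_t}$ with $i$ external faces and $|F'| \le |F|$: outside terminals are covered by $F_{\mathrm{out}}$ and the aligned spanning faces; non-corner inside terminals are covered by $F'_C$; and any corner terminal in $C'_C$ is covered by $F_{\mathrm{out}}$ exactly as in $F$. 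The symmetric translation, using (K2) and (K3) of $\tilde{C}$ in the reverse direction, gives $\fcn_i(G_t) \le \fcn_i(\tilde{G_t})$, and contrapositives handle the $>k$ case.

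Condition (K3) is proved by the identical argument with $T$ replaced by $T \setminus C'$ for $C' \subseteq \{d_1, d_2\}$ throughout; the only new wrinkle is that a corner $d_\ell \in C'$ may coincide with a corner of $C$, in which case it is simply added to $C'_C$ when invoking (K3) of $\tilde{C}$. The principal obstacle is the careful bookkeeping of whether corner terminals of $C$ receive their coverage from inside or outside of $C$: precisely when a corner terminal is covered only by a face in $F_{\mathrm{out}}$, the projection $F_C$ fails to cover it, which is exactly why (K3) of $\tilde{C}$ (and not just (K2)) is needed in the $j=0$ case. A secondary technical point is aligning the external faces of $\tilde{C}$ used in $F'_C$ with the correct spanning faces of $\tilde{G_t}$, which is handled by a flip of $\tilde{C}$'s subembedding.
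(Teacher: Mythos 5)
Your proposal is correct and follows essentially the same route as the paper's proof: partition the face cover by its interaction with $C$, project the $C$-touching faces onto the enhancement of $C$, invoke (K2) of $\tilde{C}$ when $j\ge 1$ and (K3) when $j=0$ (to account for corners covered only from outside), then re-embed $\tilde{C}$ with the appropriate flip and glue the covers back together. The only cosmetic difference is that you split the paper's $F_C$ into $F_{\mathrm{int}}$ and $F_{\mathrm{span}}$ and make the symmetric reverse direction explicit, neither of which changes the argument.
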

\begin{proof}
    Firstly, note that the replacement does not affect the skeleton of $t$, so the corner vertices and corner edge of $G_t$ are preserved, i.e. (K1) is satisfied. Let $c_1,c_2$ be the corner vertices of $C$ and let $C'$ be the enhancement of $C$.

    Let us now show (K2). Let $i\in \{0,1,2\}$. 
    Fix an embedding $\mathcal{F}$ of $G_t$ witnessing $\fcn_i(G_t, T)$ and let $F$ be the corresponding face cover. We aim to construct a face cover $\tilde{F}$ of $\tilde{G_t}$ from $F$. 
    Informally, we partition $F$ into two sets, $F_C$ and $F_U$, where $F_U$ is the set of faces unaffected by the replacement of $C$ by $\tilde{C}$, and $F_C$ contains the remaining faces of $F$.   
    Using the fact that $\tilde{C}$ is a nice kernel for $C$, we construct a face cover $\tilde{F}_{C'}$ of $\tilde{C}$. The face cover $\tilde{F}$ will contain the faces in $F_U$ and the faces corresponding to $\tilde{F}_{C'}$.
    
    Formally, we define $F_U=\{f\in F:\: f\cap V(C)\subseteq \{c_1,c_2\} \}$. In other words, $F_U$ contains faces in $F$ that are disjoint from $C$ except possibly for its corner vertices. Let $F_C=F\setminus F_U$. The faces in $F_C$ correspond to faces of $C'$: those that are completely contained in $C$ correspond to internal faces of $C$ (with the same set of vertices) and those that contain vertices outside of $C$ correspond to the external faces of $C'$. Let $F_{C'}$ be the set of faces of $C'$ corresponding to $F_C$.
    Note that $F_{C'}$ is a face cover of $C'$ covering all terminals in $T\cap V(C)$ except possibly for $c_1,c_2$ (as these vertices may be covered by faces in $F_U$).
    Let $j\in \{0,1,2\}$ be the number of faces in $F_{C'}$ corresponding to external faces of $C'$. We distinguish two cases depending on the value of $j$:
    
    If $j\in\{1,2\}$: in this case, all terminals of $C'$ are covered by a face in $F_{C'}$. By an exchange argument, it is easy to see that $F_{C'}$ is optimal, i.e. it witnesses $\fcn_j(C, T\cap V(C))$, so we have $\fcn_j(C, T\cap V(C))=|F_{C'}|=|F_C|$. Fix an embedding of $\tilde{C}$ witnessing $\fcn_j(\tilde{C}, T'\cap V(\tilde{C}))$ and let $\tilde{F}_{C'}$ be the corresponding face cover. We set $T_1=T$, $T_1'=T'$.
    
    If $j=0$: in this case, $c_1$ and $c_2$ might not be covered by a face in $F_{C'}$. Let $T_1, T_1'$ be sets of terminals obtained by removing the corner vertices not covered by a face in $F_C$ from $T$ and $T'$ respectively. Similarly, we have that $\fcn_0(C, T_1'\cap V(C))=|F_C|$. Fix an embedding of $\tilde{C}$ witnessing $\fcn_0(\tilde{C}, T_1'\cap V(\tilde{C}))$ and let $\tilde{F}_{C'}$ be the corresponding face cover.
    
    We embed $\tilde{G_t}$ by using the above embedding of $\tilde{C}$ and leaving the remaining part of $\tilde{G_t}$ the same as in $\mathcal{F}$. If $j=1$, we embed $\tilde{C}$ inside $\tilde{G_t}$ such that the external face of $\tilde{C}$ that belongs to $F_{C'}$ corresponds to the face in $F_C$ containing $c_1,c_2$. If $j\in\{0,2\}$ we can embed $\tilde{C}$ with an arbitrary choice of external faces (i.e. the way it is ``flipped'' inside $\tilde{G_t}$ is irrelevant). Now we have obtained an embedding of $\tilde{G_t}$.
    
    We define $\tilde{F}_C$ as the set of faces of $\tilde{G_t}$ corresponding to $\tilde{F}_{C'}$. Note that the faces in $F_U$ are faces of $\tilde{G_t}$ (since they do not contain vertices from $V(C)\setminus\{c_1,c_2\}$) and that $F_U\cap \tilde{F}_{C}=\emptyset$.
    
    We claim that $\tilde{F}=F_U\cup \tilde{F}_C$ is a face cover of $\tilde{G_t}$. Clearly, all terminals except $c_1,c_2$ are covered: those in $V(\tilde{C})$ are covered by a face in $\tilde{F}_C$, and the others are covered by a face in $F_U$. The vertices $c_1$ and $c_2$ (if they are terminals) are covered by a face in $\tilde{F}_C$ if they were covered by a face in $F_C$, and otherwise they are covered by a face in $F_U$. Thus $\tilde{F}$ is a face cover of $\tilde{G_t}$. It remains to show the size bounds.

    If $|F|\leq k$, then we have $\fcn_j(C, T_1\cap V(C))=|F_C|\leq k$, so $|\tilde{F}_{C}|=\fcn_j(\tilde{C}, T_1'\cap V(\tilde{C}))=\fcn_j(C, T_1\cap V(C))$ by (K2) and (K3). Thus $|\tilde{F}_{C}|=|F_C|$, so $|\tilde{F}|=|F|$. If $|F|>k$, we distinguish two cases. If $|F_C|>k$, then $|\tilde{F}_{C}|>k$, so $|\tilde{F}|>k$. If $|F_C|\leq k$, then $|\tilde{F}_{C}|=|F_C|$, so $|\tilde{F}|=|F|>k$.  

    To show (K3), we remove one or both corner vertices from $T$ and $T'$ and proceed the same as the case $j=0$.
\end{proof}

\section{Node-Type Specific Kernelization Steps}
\label{sec:DP}
In the following subsections, we describe the node-specific reduction rules that will allow us to obtain small nice kernels at each node type.
By \Cref{lem:kernel-root}, this implies a polynomial kernel for the entire graph.
For each of these sections, consider a node \(t\) of the SPR-tree of \(G\).
\subsection{R-node}
\label{sec:rnode}
It is well-known that any 3-connected planar graph, in particular the skeleton of an R-node, has a unique embedding up to homeomorphism~\cite{Brinkmann21}.
Our strategy is to fix this embedding of the skeleton of an R-node, kernelize it based on this fixed embedding, possibly losing 3-connectivity, and then rigidize the resulting kernel to regain control over the faces that contain terminals.

We first make the following general observation about 3-connected embedded graphs.
\begin{restatable}{proposition}{threesharedvertex}
\label{prop:3-conn:three-shared-vertex-bound}
    In any 3-connected planar graph with at least four vertices, no three vertices share more than one face.
\end{restatable}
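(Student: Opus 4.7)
The plan is to derive a 2-vertex cut from the assumption and thereby contradict 3-connectivity. Suppose three distinct vertices $u, v, w$ all lie on both $C_1 = \partial f_1$ and $C_2 = \partial f_2$, with $f_1 \neq f_2$.

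First I would note that by Tutte's theorem each $C_i$ is a simple induced cycle, and moreover $C_1 \neq C_2$: otherwise the common cycle would have to bound both $f_1$ and $f_2$, forcing every vertex of $G$ to lie on $C_1$; but the minimum-degree-$3$ condition together with $|V(G)| \geq 4$ would then give $|E(G)| > |E(C_1)|$, producing a chord and contradicting induced-ness. Next, I would fix a spherical embedding with $f_1$ as outer face, so that $C_2$ lies inside the closed disk $D$ bounded by $C_1$ and all of $V(C_1) \cap V(C_2)$ (which contains $u, v, w$) lies on $\partial D$. A standard planarity argument --- an interleaved visiting order of shared vertices would force two arcs of the simple closed curve $C_2$ to cross inside $D$ --- then shows that the cyclic orders of the shared vertices read off $C_1$ and off $C_2$ agree up to reversal.

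With this in hand I would choose two shared vertices $p, q$ that are consecutive in this common cyclic order and for which the direct arcs $\alpha \subseteq C_1$ and $\beta \subseteq C_2$ from $p$ to $q$ (the ones passing through no other shared vertex) are not both equal to the single edge $pq$. Such a pair must exist, for otherwise one would get $V(C_1) = V(C_2)$ and $E(C_1) = E(C_2)$, and hence $C_1 = C_2$. By consecutiveness, $\alpha$ and $\beta$ share no interior vertex (any such common vertex would be a shared vertex strictly between $p$ and $q$ in the cyclic order) and therefore no edge, so $\alpha \cup \beta$ is a simple closed curve bounding a closed topological disk $\overline{R}$ in the embedding; moreover $\overline{R} \setminus \{p, q\}$ is nonempty by the choice of $p, q$.

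Finally, I would verify that every vertex of $\overline{R} \setminus \{p, q\}$ has all its neighbors inside $\overline{R}$, using planarity together with (i) induced-ness of $C_1, C_2$, which prevents interior vertices of $\alpha$ or $\beta$ from having chord-neighbors on the cycle off their own arc, and (ii) that $f_1$ and $f_2$ are faces and so contain no vertices in their interiors, which confines the remaining edges of any interior vertex of $\beta$ to the $\overline{R}$-side of $\beta$. Picking any vertex of $\{u, v, w\} \setminus \{p, q\}$, it is a shared vertex distinct from $p, q$ and hence lies on $C_1 \setminus \alpha$ and on $C_2 \setminus \beta$, which places it outside $\overline{R}$. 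Therefore $\{p, q\}$ is a 2-cut separating this vertex from the nonempty set $\overline{R} \setminus \{p, q\}$, contradicting 3-connectivity. The main obstacle I expect is nailing down the two topological claims --- matching of cyclic orders and confinement of neighborhoods to $\overline{R}$ --- in the presence of possibly many additional shared vertices; once these are established, the combinatorial contradiction is immediate.
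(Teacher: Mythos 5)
Your argument is correct, but it takes a genuinely different route from the paper's. The paper draws an auxiliary closed curve through the \emph{interiors} of the two faces, meeting the graph only at two of the three shared vertices, say \(u\) and \(v\); since this curve separates \(w\) from the arc of the first face boundary between \(u\) and \(v\), that arc can contain no internal vertex, so \(uv\) (and symmetrically \(vw\), \(wu\)) must be an edge on both face boundaries, whence \(u,v,w\) span a triangle bounding both faces and hence a connected component --- contradicting \(|V|\geq 4\). You instead build the separating object inside the graph itself: via Tutte's characterization of face boundaries and the agreement (up to reversal) of the cyclic orders of the shared vertices, you extract consecutive shared vertices \(p,q\) whose arcs \(\alpha\subseteq C_1\), \(\beta\subseteq C_2\) form a nontrivial cycle and show \(\{p,q\}\) is a \(2\)-cut. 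Both proofs rest on the same Jordan-curve separation, but yours needs more machinery (Tutte's theorem, the cyclic-order lemma, the consecutive-pair selection), whereas the paper's curve through the face interiors shortcuts all of it. The one step you should make explicit is the choice of \(\overline{R}\): you must take the component of the complement of \(\alpha\cup\beta\) containing neither \(f_1\) nor \(f_2\), and the fact that \(f_1\) and \(f_2\) lie in the \emph{same} component is exactly what a third shared vertex guarantees (it lies on \(\overline{f_1}\cap\overline{f_2}\) but not on \(\alpha\cup\beta\), so the faces cannot be separated by that curve). Once that is pinned down, your confinement of neighborhoods to \(\overline{R}\) and the final contradiction go through.
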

\begin{proof}
    Let \(H\) be a planar 3-connected graph with at least four vertices.
    We identify \(H\) with its unique embedding.
    Assume for contradiction that there are three distinct vertices \(u,v,w\) that lie on the shared boundary of two faces \(f\) and \(f'\).
    We fix such \(u,v,w\) with the property that along a (counterclockwise) traversal of \(f\), they appear in the order \(u,v,w\).
    By planarity, this implies that along \(f'\) the (counterclockwise) order of appearance is \(u,w,v\).

    Because \(f\) and \(f'\) are faces and because of the assumed order of appearance along \(f\), we can draw a closed curve that only intersects \(H\) in \(u\) and \(v\) and otherwise is entirely contained in the interiors of \(f\) and \(f'\) such that this curve separates \(w\) from the traversal \(t\) of \(f\) between \(u\) and \(v\).
    If there was any vertex \(x\) other than \(u\) or \(v\) on \(t\), then this would mean that there can be no \(x\)-\(w\)-path in \(H - u - v\) contradicting 3-connectivity.
    Hence, \(uv\) is an edge on \(f\).
    By symmetric arguments we get that \(vw\) and \(wu\) are edges on the boundary of \(f\), and \(uv\), \(vw\) and \(wu\) are edges on the boundary of \(f'\).
    Overall this implies that \(u,v,w\) form a triangle that is disconnected from an existing fourth vertex, a contradiction to 3-connectivity.
\end{proof}


Proposition~\ref{prop:3-conn:three-shared-vertex-bound} can easily be adapted to when some edges are subdivided by terminals, in particular as is the case in the replacement for unproblematic virtual components in \Cref{sec:basickernel}.
For ease of notation, we refer to graphs that may arise from 3-connected planar graphs by replacing some edges by terminal-subdivided edges as \emph{almost 3-connected planar} and to the graph that arises from the skeleton of \(G\) in this way as \emph{extended skeleton}.

\begin{restatable}{proposition}{threesharedvertexp}
\label{prop:3-conn+:three-shared-vertex-bound}
    In any embedding of any almost 3-connected planar graph with at least seven vertices, no four vertices share more than one face.
\end{restatable}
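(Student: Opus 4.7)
The plan is to reduce to the already-established Proposition \ref{prop:3-conn:three-shared-vertex-bound} for strictly $3$-connected planar graphs by suppressing all degree-two subdivision vertices of $H$ to recover the underlying $3$-connected planar graph $G$ whose edge-subdivision produced $H$. Fix an embedding of $H$ and suppose, for contradiction, that four distinct vertices $u_1, u_2, u_3, u_4$ of $H$ all lie on two distinct faces $f$ and $f'$. The embedding of $H$ induces an embedding of $G$ in which each face of $H$ corresponds to a face of $G$ with the same boundary walk after deleting the suppressed vertices; in particular, $f$ and $f'$ correspond to two distinct faces $\tilde f$ and $\tilde f'$ of $G$.

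Next I would define a lift $S' \subseteq V(G)$ as follows: for each $u_i$ that is already a vertex of $G$, put $u_i$ into $S'$; for each $u_i$ that is a subdivision vertex, put both $G$-endpoints of the edge of $G$ that $u_i$ subdivides into $S'$. Every $v \in S'$ lies on both $\tilde f$ and $\tilde f'$, either directly when $v = u_i \in V(G)$, or because the subdivided edge containing $u_i$ lies on precisely the two faces $f$ and $f'$, so its original edge in $G$ lies on $\tilde f$ and $\tilde f'$. The crux is showing $|S'| \geq 3$, which I would do by a short case distinction on the number of subdivision vertices among $\{u_1, \ldots, u_4\}$. In each scenario where $|S'|$ threatens to drop below $3$, one is forced to conclude that two distinct subdivision vertices subdivide the same edge of $G$, contradicting the fact that each edge of $G$ is replaced by at most one terminal-subdivided edge.

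Finally, since $G$ is $3$-connected it has at least four vertices, so Proposition \ref{prop:3-conn:three-shared-vertex-bound} applies and forbids any three vertices of $G$ from simultaneously lying on two faces of $G$. However, $S' \subseteq V(G)$ has size at least three with all its vertices lying on both $\tilde f$ and $\tilde f'$, yielding the desired contradiction. The main obstacle is the bookkeeping that establishes $|S'| \geq 3$, and in particular ruling out the degenerate configurations in which several subdivision vertices all land on the same pair of $G$-neighbors; the hypothesis $|V(H)| \geq 7$ is a comfortable sufficient condition for the argument (which in fact succeeds whenever $|V(G)| \geq 4$, automatic from $3$-connectivity).
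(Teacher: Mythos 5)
Your proposal is correct and follows essentially the same route as the paper: both arguments undo the terminal-subdivisions to recover the underlying 3-connected graph, observe that a shared degree-two subdivision vertex forces both its neighbours onto the same two faces, conclude that at least three original vertices still share two faces, and then invoke Proposition~\ref{prop:3-conn:three-shared-vertex-bound}. One minor remark: the paper derives \(|V(H')|\geq 4\) from the hypothesis \(|V(H)|\geq 7\) rather than from 3-connectivity alone (under the paper's stated definition a triangle counts as 3-connected), but since you also note that the \(\geq 7\) hypothesis suffices, your argument goes through.
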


\begin{proof}
    Let \(H\) be a almost 3-connected planar graph with at least seven vertices.
    We identify \(H\) with its unique embedding.
    Assume for contradiction that there are four distinct vertices that lie on the shared boundary of two faces.
    
    By construction, whenever two faces share a degree-2 vertex, they also have to share both its neighbors.
    Hence, if two faces share more than three vertices, among these there are three vertices that are not degree-2, or two degree-2 vertices implying the existence of three shared vertices that are not degree-2.

    This means that we can undo the subdivided-edge replacements by which \(H\) arose from a 3-connected planar graph \(H'\) and arrive at a 3-connected planar graph in which two faces share at least three vertices.
    Assume for contradiction that \(H'\) has less than four vertices.
    Then \(H'\) can have at most three edges and hence \(H\) can have at most six vertices, a contradiction.
    Thus we can apply Proposition~\ref{prop:3-conn:three-shared-vertex-bound} to \(H'\) to obtain a contradiction.
\end{proof}

Let us now actually focus on \(G_t\) where \(t\) is an R-node of the SPR-tree of \(G\).
For problematic virtual components, we make the following simple observation.
\begin{restatable}{observation}{fewprobcomps}
    If \(G_t\) has more than \(k\) problematic virtual components, then it has no face cover of size at most \(k\).
\end{restatable}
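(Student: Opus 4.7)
The plan is to argue by contrapositive: assume \(G_t\) has a face cover \(F\) of size at most \(k\), and construct an injective map from problematic virtual components of \(G_t\) to faces in \(F\), which yields \(p \leq |F| \leq k\) where \(p\) is the number of problematic components. For each problematic virtual component \(C\), I would examine the induced face cover \(F_C\) of the enhancement of \(C\). Since \(C\) is not terminal-free, \(F_C\) is nonempty; since \(C\) is not unproblematic, \(F_C\) cannot consist of a single external face, as that single external face in the induced embedding would already witness unproblematicness of \(C\). Hence \(F_C\) either contains at least one internal face of the enhancement of \(C\), or equals exactly the two external faces.

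In the first sub-case, I would associate \(C\) with one of its internal faces in \(F_C\); since internal faces of distinct virtual components lie in disjoint regions of the embedding (virtual components share only the skeleton vertices of \(t\)), this partial assignment is automatically injective. In the second sub-case, \(C\) has face cover number exactly two: face cover number one together with both external faces covering all terminals in the induced embedding would render \(C\) semi-problematic, contradicting the assumption that \(C\) is problematic. To extend the injection over the second-sub-case components, the plan is to exploit the 3-connectivity of the skeleton of \(t\) through \Cref{prop:3-conn:three-shared-vertex-bound}: any two distinct virtual edges of the skeleton share at most one skeleton face. Consequently the unordered pairs of external (skeleton) faces attached to different second-sub-case components are pairwise distinct, and together with planarity of the skeleton's dual this should allow me to assign to each such \(C\) one of its two external skeleton faces injectively and disjointly from the first sub-case.

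The main obstacle will be precisely this second sub-case: multiple problematic components may simultaneously demand overlapping pairs of external faces from \(F\), so a naive assignment need not be injective. Overcoming it will require the single-shared-face bound from \Cref{prop:3-conn:three-shared-vertex-bound} combined with a careful matching-style argument on the auxiliary graph whose vertices are the skeleton faces of \(t\) present in \(F\) and whose edges are the second-sub-case problematic components -- this graph's planarity (inherited from the skeleton dual) and simplicity (from the single-shared-face bound) are the key structural tools. Once the full injective map is in place, \(p \leq |F| \leq k\) follows immediately, contradicting the hypothesis \(p > k\).
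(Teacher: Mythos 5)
Your first sub-case is, in fact, the paper's entire proof: the paper simply asserts that \emph{every} problematic virtual component forces at least one internal face of its own subembedding into the face cover, and that such faces cannot be shared between distinct components (they lie in disjoint regions, meeting only at skeleton vertices). In other words, the paper's argument amounts to claiming that your second sub-case never occurs for a problematic component.

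The genuine gap is in your handling of that second sub-case. The structural tools you invoke do not deliver the injection. \Cref{prop:3-conn:three-shared-vertex-bound} makes the auxiliary graph (vertices: skeleton faces in \(F\); edges: second-sub-case components) \emph{simple}, and it is planar as a subgraph of the skeleton's dual, but an injective assignment of edges to incident vertices exists only if every connected subgraph has at most as many edges as vertices (a pseudoforest condition). Simple planar graphs can have up to \(3m-6\) edges on \(m\) vertices — already a \(K_4\) in the dual would give six components demanding pairs drawn from only four faces, and no injection exists. So ``planarity plus simplicity plus a careful matching argument'' cannot close this case; if components coverable by exactly their two external faces were genuinely problematic, the statement itself would be in danger. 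The repair is not a better matching but showing the sub-case is empty: a component whose induced face cover consists of exactly the two external faces is external-face coverable, and such components are meant to be classified as unproblematic or semi-problematic rather than problematic (this is precisely the reading under which the paper's one-line proof — an internal face is always necessary — is valid). You should therefore argue directly from the definition of ``problematic'' that the induced face cover of a problematic component must contain an internal face, and then your first sub-case alone finishes the proof.
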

\begin{proof}
    To cover the terminals in a problematic virtual component, at least one internal face of a subembedding of that virtual component is necessary.
    Such faces cannot be shared among distinct virtual components.
%
\end{proof}

We can also bound the number of semi-problematic virtual edges that we need to consider.
\begin{restatable}{proposition}{fewsemiprobcomps}
\label{prop:3-conn:semiprob-face-bound}
    If there is a face of the skeleton of \(G_t\) with more than \(k\) semi-problematic virtual edges on its boundary, then there is no face cover of size at most \(k\) for this embedding.
\end{restatable}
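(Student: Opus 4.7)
The plan is to fix an arbitrary face cover $F$ of $G_t$ in this embedding and argue that $|F| > k$. Let $e_1, \dots, e_m$ (with $m > k$) denote the semi-problematic virtual edges on the boundary of $f$, and let $C_i$ be the virtual component associated with $e_i$. In the embedding of $G_t$ induced by the unique embedding of the skeleton of $t$, I would name the face of $G_t$ corresponding to $f$ as $f_G$; then $f_G$ is one of the two external faces of the enhancement of each $C_i$. I would denote the second external face of $C_i$ by $g_i$; it is the face of $G_t$ corresponding to the unique other skeleton-face $f'_i \neq f$ whose boundary contains $e_i$ (this other face exists because the 3-connected skeleton has no bridges).

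The combinatorial heart of the argument is a distinctness claim: the faces $g_1, \dots, g_m$ of $G_t$ are pairwise distinct. Indeed, if $g_i = g_j$ for some $i \neq j$, then $e_i$ and $e_j$ would share both of their incident skeleton faces $f$ and $f'_i = f'_j$, and so the at least three distinct endpoints of $e_i$ and $e_j$ in the skeleton would simultaneously lie on two common faces of the skeleton. This contradicts \Cref{prop:3-conn:three-shared-vertex-bound}, which applies since the skeleton of an R-node is a 3-connected planar graph with at least four vertices.

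With this in hand, I would case-split on whether $f_G \in F$. For each $i$, the cover $F$ induces a face cover of the enhancement of $C_i$; since $C_i$ is semi-problematic (and in particular not unproblematic), the single external face $f_G$ by itself does not cover all terminals of $C_i$. Hence for each $i$, $F$ must additionally contain either an internal face of $C_i$ or the other external face $g_i$. In the subcase $f_G \notin F$, the "both external faces" option is unavailable, so $F$ must contain an internal face of each $C_i$; these $m$ faces are pairwise distinct because the virtual components share no vertices except at the corners, so their internal faces are disjoint. This yields $|F| \geq m > k$. In the subcase $f_G \in F$, the $m$ additional contributions (an internal face of $C_i$ or $g_i$) are pairwise distinct by the same vertex-disjointness plus the key distinctness claim for the $g_i$'s, and all distinct from $f_G$, giving $|F| \geq m + 1 > k$.

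The main obstacle is precisely the pairwise-distinctness claim for $g_1, \dots, g_m$; once it is established via \Cref{prop:3-conn:three-shared-vertex-bound}, the remainder is a short bookkeeping argument using only the defining property of semi-problematicness.
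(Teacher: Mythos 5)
Your proposal is correct and follows essentially the same route as the paper: the paper likewise argues that each semi-problematic component forces either a private internal face or both external faces, and uses \Cref{prop:3-conn:three-shared-vertex-bound} to conclude that no two virtual edges on a common skeleton face can share a second face, so the forced faces are pairwise distinct. Your write-up merely makes the case distinction on whether the shared face is in the cover explicit.
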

\begin{proof}
    To cover the terminals in a semi-problematic virtual component, at least one face private to a subembedding of that virtual component or both its external faces (faces incident to both corners of the virtual component) are necessary.
    By Proposition~\ref{prop:3-conn:three-shared-vertex-bound}, no pair of virtual components corresponding to virtual edges that lie on a single face of the skeleton can share another face in the skeleton.
    This means to cover the terminals of more than \(k\) semi-problematic virtual components whose corresponding virtual edges lie on a single face of the skeleton, at least \(k+1\) pairwise distinct faces are necessary.
\end{proof}

\begin{restatable}{corollary}{actuallyfewsemiprobcomps}
    If there are more than \(k^2\) semi-problematic virtual edges in the skeleton of \(G_t\), then there is no face cover of size at most \(k\).
\end{restatable}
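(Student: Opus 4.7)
The plan is to lift the per-face bound of Proposition~\ref{prop:3-conn:semiprob-face-bound} to a global bound via a direct argument about how any small face cover must handle the semi-problematic components, rather than by pigeonholing across the faces of the skeleton (which would require a bound on the number of faces that is not available). Concretely, I will assume for contradiction that $G_t$ has more than $k^2$ semi-problematic virtual edges but admits a face cover $F$ with $|F|\le k$, and count the contribution of each semi-problematic virtual component to $|F|$.

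Since a semi-problematic virtual component $C$ has face cover number one and, by definition, can alternatively be covered by its two external faces jointly, any face cover $F$ of $G_t$ must, for each such $C$, either (a) contain an internal face of $C$, or (b) contain both faces of $G_t$ corresponding to the two skeleton faces incident to the virtual edge of $C$. Components of type~(a) are controlled immediately: internal faces of distinct virtual components are pairwise disjoint, hence at most $|F|\le k$ components fall in this case. For components of type~(b), I will invoke Proposition~\ref{prop:3-conn:three-shared-vertex-bound} on the R-node skeleton (which, by definition, is a 3-connected planar graph with at least four vertices): any two of its faces share at most two vertices, hence at most one edge. Consequently, distinct type-(b) components correspond to distinct unordered pairs of skeleton faces, both of which must lie in the skeleton-face part of $F$, yielding at most $\binom{|F|}{2}\le\binom{k}{2}$ such components.

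Summing the two bounds, the total number of semi-problematic virtual components is at most $k+\binom{k}{2}\le k^2$ for every $k\ge 1$ (the $k=0$ case is immediate), contradicting the assumed lower bound. The main subtlety I expect is the bookkeeping between faces of $G_t$ and faces of the skeleton: I need to argue explicitly that the two external faces of a semi-problematic virtual component correspond to two \emph{distinct} skeleton faces (so that Proposition~\ref{prop:3-conn:three-shared-vertex-bound} applies), and that no double counting occurs between cases~(a) and~(b), because internal faces of virtual components are by construction disjoint from the $G_t$-faces arising from skeleton faces.
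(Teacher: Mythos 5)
Your proof is correct, but it takes a genuinely different route from the paper's. The paper argues by pigeonhole: with more than \(k^2\) semi-problematic virtual edges and at most \(k\) cover faces, some single face must meet terminals from more than \(k\) semi-problematic components, hence corresponds to a skeleton face with more than \(k\) semi-problematic virtual edges on its boundary, contradicting \Cref{prop:3-conn:semiprob-face-bound}. You instead bypass \Cref{prop:3-conn:semiprob-face-bound} entirely and count globally: at most \(k\) components can be handled by a (private) internal face, and the remaining ones each force a distinct unordered pair of skeleton faces into the cover, since by \Cref{prop:3-conn:three-shared-vertex-bound} two faces of the 3-connected skeleton share at most two vertices and hence at most one edge, making the edge-to-face-pair map injective. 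This gives the sharper bound \(k+\binom{k}{2}=\frac{k^2+k}{2}\le k^2\). Your case dichotomy is exhaustive (a semi-problematic component is not unproblematic, so if no internal face of it is used, both external faces must be), and the bookkeeping worries you flag do resolve as you expect: every virtual edge of the R-node skeleton lies on exactly two distinct skeleton faces, and these correspond bijectively to the non-internal faces of \(G_t\). Both arguments ultimately rest on the same structural fact about 3-connected planar graphs; yours is slightly more self-contained and quantitatively tighter, while the paper's factors the work through the reusable per-face statement \Cref{prop:3-conn:semiprob-face-bound}.
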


\begin{proof}
    Assume for contradiction that there is a face cover of \(G_t\) of size at most \(k\).
    By pigeon hole principle, at least one of these faces has to cover terminals from more than \(k\) semi-problematic virtual components.
    This face has to correspond to a face of the skeleton with more than \(k\) semi-problematic virtual edges on its boundary contradicting \Cref{prop:3-conn:semiprob-face-bound}.
\end{proof}

Finally, we reduce the number of unproblematic virtual edges and in fact, the total number of terminals in the skeleton.
For this, we first make a general statement about almost 3-connected planar graphs.
\begin{restatable}{proposition}{heavyface}
    \label{prop:3-conn+:heavy-face}
    In the embedding of an almost 3-connected planar graph, a face with more than \(3k\) terminals on its boundary has to be in any face cover of size at most \(k\).
\end{restatable}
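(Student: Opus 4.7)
The plan is to argue by contradiction using a simple pigeonhole step combined with Proposition~\ref{prop:3-conn+:three-shared-vertex-bound}. Let $H$ be the almost 3-connected planar graph under consideration, fix its embedding, let $f$ be a face with more than $3k$ terminals on its boundary, and suppose for contradiction that there exists a face cover $F$ of size at most $k$ with $f \notin F$. Every terminal on the boundary of $f$ must then be covered by some face in $F$, none of which is $f$ itself.

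Since $|F| \leq k$ and more than $3k$ terminals of $f$ must be distributed among the faces of $F$, the pigeonhole principle yields some face $f' \in F$ whose boundary contains strictly more than $3$ of the terminals lying on $f$. In particular, $f$ and $f'$ share at least four vertices. Because $f \notin F$ but $f' \in F$, we have $f \neq f'$, and so two distinct faces of the fixed embedding share at least four vertices, contradicting Proposition~\ref{prop:3-conn+:three-shared-vertex-bound}.

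The only subtlety is the hypothesis ``at least seven vertices'' in Proposition~\ref{prop:3-conn+:three-shared-vertex-bound}. I would dispose of the remaining small cases as follows. If $|V(H)| < 7$, then every face has at most six vertices on its boundary, so more than $3k$ terminals on a face forces $3k < 6$, i.e.\ $k \leq 1$. For $k = 0$ the statement is vacuous (any terminal at all requires a nonempty cover, and $f$ is the only face containing the terminals on it). For $k = 1$, having more than $3$ terminals on $f$ and a single-face cover $F = \{f'\}$ with $f' \neq f$ would still give at least four vertices shared between two distinct faces; a direct inspection of the finitely many almost 3-connected planar graphs with at most six vertices (which, after undoing terminal subdivisions, reduces to inspecting $K_4$ and similar tiny 3-connected graphs, in all of which two distinct faces share at most three vertices) rules this out.

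The main obstacle is essentially bookkeeping rather than a conceptual hurdle: one must be careful that the pigeonhole counts terminals on $f$ (not arbitrary vertices), and that the small-graph base case is handled explicitly so the invocation of Proposition~\ref{prop:3-conn+:three-shared-vertex-bound} is clean. No induction along the SPR-tree is needed here; the argument is local to the single embedded graph $H$ and relies only on the combinatorial fact already established about faces of almost 3-connected planar graphs.
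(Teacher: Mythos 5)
Your proposal is correct and takes essentially the same route as the paper: a pigeonhole argument showing that a face $f'\neq f$ in the cover would have to share more than three vertices with $f$, contradicting Proposition~\ref{prop:3-conn+:three-shared-vertex-bound}. The only difference is that you explicitly dispose of the fewer-than-seven-vertices base case, which the paper's proof leaves implicit.
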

\begin{proof}
    Assume for contradiction that there is a face cover that does not contain some such face.
    Then, at most three terminals on the boundary of such a face can be covered at a time by any other face due to Proposition~\ref{prop:3-conn+:three-shared-vertex-bound}.
    This means that at least \(k + 1\) faces are necessary to cover the terminals on the considered face alone, a contradiction.
\end{proof}

Together with the safeness of the basic replacement step for unproblematic virtual components (cf.\ the proof of \Cref{lem:basic-steps}), this immediately can be used to justify the following reduction rule which we refer to as \emph{terminal-heavy face reduction}.
\begin{restatable}[Safeness of terminal-heavy face reduction]{lemma}{safethfr}
    Replacing all unproblematic virtual components by terminal-subdivided edges and then for each face of the extended skeleton with more than \(3k + 1\) terminals, turning all but an arbitrary set of \(3k + 1\) of these terminals into non-terminals yields a nice kernel.
\end{restatable}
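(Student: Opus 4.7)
The plan is to tackle the two reduction operations in turn. For the first, I will invoke \Cref{lem:basic-steps} to conclude that replacing every unproblematic virtual component by a terminal-subdivided edge yields a nice kernel $(K_1, T_1)$ of $G_t$. The remaining task is to show that the subsequent truncation of heavy-face terminals preserves niceness: if we obtain $(K_1, T_2)$ from $(K_1, T_1)$ by turning all but an arbitrary $3k+1$-size subset of the terminals on each face of the extended skeleton that has more than $3k+1$ terminals into non-terminals, then $(K_1, T_2)$ is still a nice kernel of $G_t$.

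To this end, I will verify conditions (K1)--(K3) for $(K_1, T_2)$. (K1) is immediate since the graph $K_1$ is not modified. For (K2) and (K3), since $T_2 \subseteq T_1$, monotonicity of face cover yields $\fcn_i(K_1, T_2) \leq \fcn_i(K_1, T_1)$ for every $i$ and every choice of corner removals, and these values are identified with $\fcn_i(G_t, T \cap V(G_t))$ via \Cref{lem:basic-steps}. The non-trivial direction is that every face cover $F$ of $K_1$ of size at most $k$ covering $T_2$ in some embedding also covers each discarded terminal $v \in T_1 \setminus T_2$; from this, both directions of (K2) and (K3) follow, including the ``both values exceed $k$'' alternative.

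The key step is to transport the counting in \Cref{prop:3-conn+:heavy-face} from the extended skeleton into $K_1$. I would fix the heavy face $f^*$ of the extended skeleton that $v$ lies on and argue that in the embedding of $K_1$ there is a distinguished face $g^*$ whose boundary traces all of $V(f^*)$: intuitively, $g^*$ is obtained from $f^*$ after carving out the outer boundaries of virtual components drawn on the $f^*$-side of virtual edges of $\partial f^*$ and after merging across those virtual edges of $\partial f^*$ whose components are drawn on the opposite side. By Proposition~\ref{prop:3-conn+:three-shared-vertex-bound} applied to the extended skeleton, every face of $K_1$ other than $g^*$ meets $V(f^*)$ in at most three vertices, since its $V(f^*)$-boundary-vertices lie in the intersection of $V(f^*)$ with the vertex set of another face of the extended skeleton. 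Hence if $g^* \notin F$, covering the $3k+1$ kept terminals of $T_2$ on $V(f^*)$ would require at least $\lceil (3k+1)/3 \rceil = k+1$ faces of $F$, contradicting $|F| \leq k$. So $g^* \in F$ and $v$ is covered. Precisely the same argument applies after removing one or both corners from the terminal set, yielding (K3).

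The hard part will be making the description of $g^*$ precise and proving the $3$-vertex bound for every other face of $K_1$: the removal of virtual edges in $K_1$ allows a face of $K_1$ to span several faces of the extended skeleton, so I need to verify that such spanning faces still respect the $3$-vertex cap against $V(f^*)$, and to confirm that the outermost region around $\partial f^*$ in $K_1$ is indeed a single, connected face whose boundary passes through every vertex of $V(f^*)$ regardless of which side each virtual component attached to $\partial f^*$ is drawn on. I expect a case analysis on these flipping choices, combined with Proposition~\ref{prop:3-conn+:three-shared-vertex-bound}, to complete the argument.
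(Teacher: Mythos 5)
Your approach is the same as the paper's: \Cref{lem:basic-steps} disposes of the replacement of unproblematic components, one direction of (K2)/(K3) follows from monotonicity of the face cover number under shrinking the terminal set, and the other direction follows because the heavy face must belong to any face cover of size at most \(k\), so the discarded terminals stay covered. The step you flag as the hard part --- transporting \Cref{prop:3-conn+:heavy-face} from the extended skeleton to the graph in which the remaining virtual components are glued in --- is exactly the point the paper's own proof leaves implicit (it simply cites the proposition for the heavy face of the extended skeleton), so your extra care is warranted, and the gap closes quickly: an internal face of a virtual component meets the extended skeleton only in that component's two corners, and every other face of the glued graph arises from exactly one face \(f'\) of the extended skeleton by rerouting its boundary along an external boundary walk of each component replacing a virtual edge of \(\partial f'\); this neither merges two skeleton faces (the component contains a corner-to-corner path on each of its two external boundaries) nor changes which skeleton vertices and subdivision terminals lie on the boundary. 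Hence your face \(g^*\) exists and carries all terminals of \(f^*\), every other face meets them in at most three vertices by \Cref{prop:3-conn+:three-shared-vertex-bound} (or in at most two, for internal faces of virtual components), and your count of at least \(k+1\) faces goes through.
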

\begin{proof}
    Being a nice kernel after replacing the unproblematic virtual components is ensured by Lemma~\ref{lem:basic-steps}.

    Further, (K1) trivially holds.
    It remains to argue (K2) and (K3).
    
    Obviously, any face cover of the enhancement of \(G_t\) covers every subset of its terminals, irrespective of which external faces it must contain or which corners are allowed to be left uncovered.
    This shows that all numbers we consider in (K2) and (K3) do not increase when modifying \(G_t\) as described in the lemma.

    Now we show that any face cover covering the subset of terminals which we leave, also covers the terminals which we turned into non-terminals, thereby implying the remaining inequalities in (K2) and (K3).
    
    Consider a terminal \(v\) which is covered by a face cover in \(G_t\) before modification and which we turned into a non-terminal.
    By construction, it lies on a face of the extended skeleton which has \(3k+1\) terminals remaining and by Proposition~\ref{prop:3-conn+:heavy-face} such a face has to be included in the face cover of the instance we obtain after transforming some terminals into non-terminals.
    Hence \(v\) is also covered by such a face cover.
\end{proof}

\begin{restatable}{proposition}{terminalbound}
\label{prop:terminalbound}
    If after exhaustive application of terminal-heavy face reduction for \(G_t\), there are more than \(3k^2 + k\) terminals, then there is no face cover of size at most \(k\).
\end{restatable}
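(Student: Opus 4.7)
The plan is a pigeonhole count. After exhaustive application of terminal-heavy face reduction, every face of the extended skeleton has at most $3k+1$ terminals on its boundary. I would argue that, in any embedding of $G_t$, every face covers at most $3k+1$ terminals of the extended skeleton, so a face cover of size at most $k$ can cover at most $k(3k+1) = 3k^2 + k$ such terminals in total, which contradicts the hypothesis.

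The key structural step is to show that, for any embedding of $G_t$ and any face $f$ of it, the vertices of the extended skeleton on the boundary of $f$ all lie on the boundary of a single face of the extended skeleton. This uses that the skeleton of the R-node $t$ is $3$-connected, so its embedding is unique up to reflection and hence the face structure of the extended skeleton is fixed. In any embedding of $G_t$, each virtual component is drawn on one side of its virtual edge, i.e., inside one of the two skeleton faces incident to that edge. Consequently every face of $G_t$ either lies entirely inside the subembedding of some virtual component, contributing no extended-skeleton vertex on its boundary, or is a subregion of the interior of a single extended-skeleton face, meeting the extended skeleton only along a portion of that face's boundary.

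I expect this structural step to be the main obstacle, as one needs to rule out the degenerate possibility that a face of $G_t$ touches the extended skeleton in parts belonging to different skeleton faces. Once this is handled by appealing to the unique planar embedding of the $3$-connected skeleton and to the one-sided placement of virtual components with respect to their virtual edges, the remaining argument reduces to the routine pigeonhole count above.
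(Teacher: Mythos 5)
Your proposal is correct and takes essentially the same route as the paper: a pigeonhole argument showing that a face cover of size at most $k$ would force some face to cover more than $3k+1$ extended-skeleton terminals, contradicting exhaustive application of the reduction. You additionally spell out the correspondence between faces of $G_t$ and faces of the (uniquely embedded) extended skeleton, a step the paper's three-sentence proof leaves implicit.
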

\begin{proof}
    Assume for contradiction that there is a face cover of \(G_t\) of size at most \(k\).
    By the pigeon hole principle, at least one of these faces has to cover more than \(3k + 1\) terminals.
    This contradicts the fact that terminal-heavy face reduction was exhaustively applied.
\end{proof}

Assume from now on that terminal-heavy face reduction was exhaustively applied to \(G_t\); in particular, we will no longer explicitly mention it as a precondition in the following theorem statements.
We have now bounded the number of terminals and edges in the extended skeleton, which we will want to replace by inductively assumed kernels.
It remains to reduce the parts of the skeleton that contain neither.
During this modification of the extended skeleton, for the first time, we might lose the uniqueness of its embedding.
We will fix its embedding and take it into account in the following modifications.
For this, we explicitly speak about nice kernels of \textsc{Embedded Face Cover Number}, rather than nice kernels for \textsc{Face Cover Number} i.e.\ \(\fcn\) gets replaced by its fixed-embedding analogue in (K1)--(K3).

We can bound the number of faces containing at least two of the following objects: terminals, corners or endpoints of virtual edges. We refer to any such vertex as \emph{interesting}.
\begin{restatable}{proposition}{fewtfaces}
\label{prop:Rnode-few2faces}
    The number of faces with at least two interesting vertices in the skeleton of \(G_t\) is bounded above by \(\mathcal{O}(k^2)\).
\end{restatable}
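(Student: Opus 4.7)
The plan is to bound the number \(|I|\) of interesting vertices by \(\mathcal{O}(k^2)\) using the reductions already established in this section, and then to bound the number of faces containing at least two interesting vertices by constructing an auxiliary planar multigraph \(H\) on vertex set \(I\) and applying Euler's formula.

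For \(|I|\), I would use three bounds established earlier: \Cref{prop:terminalbound} yields at most \(3k^2 + k\) terminals; the observation bounding the number of problematic virtual components gives at most \(k\) such components and hence \(2k\) endpoints of problematic virtual edges; the corollary on semi-problematic virtual edges gives at most \(k^2\) such edges and hence \(2k^2\) endpoints; plus the 2 corners of \(G_t\). Summing these yields \(|I| = \mathcal{O}(k^2)\).

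Next, I would build \(H\) by placing each vertex of \(I\) at its position in the fixed embedding, and for each face \(f\) of the extended skeleton with interesting vertices \(v_1, \dotsc, v_j\) (\(j \geq 2\)) in cyclic order along \(\partial f\), adding edges \(v_1 v_2, \dotsc, v_{j-1} v_j, v_j v_1\), each drawn as a curve just inside \(f\) along the corresponding boundary arc between its endpoints. These curves live in disjoint (\(\varepsilon\)-thickened) face interiors, so \(H\) is planar as a multigraph. Moreover, every face of the extended skeleton with at least two interesting vertices is bounded entirely by such arcs and hence coincides with a face of \(H\), so it suffices to show \(|F(H)| = \mathcal{O}(|I|)\).

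To control \(|F(H)|\) via Euler's formula, I would bound the edge multiplicity of \(H\) by a constant. The key sub-claim is that any two vertices in the extended skeleton share at most \(2\) common faces. This can be derived from \Cref{prop:3-conn:three-shared-vertex-bound} after undoing the subdivisions of unproblematic virtual edges to recover the original 3-connected skeleton of the R-node: if \(u, v\) shared \(3\) faces, drawing curves from \(u\) to \(v\) through each of these faces partitions the plane into \(3\) regions, and using that each original face boundary is a simple cycle of length \(\geq 3\), at most one of the regions can be empty of skeleton vertices other than \(u, v\); the remaining \(\geq 2\) regions are then pairwise separated in the skeleton minus \(\{u, v\}\), contradicting 3-connectedness. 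Since a simple cycle is split into exactly \(2\) arcs by any two of its vertices, each shared face contributes at most \(2\) parallel edges between any pair of \(I\)-vertices in \(H\), giving multiplicity at most \(4\). The underlying simple graph of \(H\) is planar on \(|I|\) vertices, so \(|E(H)| \leq 4(3|I| - 6) = \mathcal{O}(|I|)\) and Euler's formula gives \(|F(H)| = \mathcal{O}(|I|) = \mathcal{O}(k^2)\). The main subtlety is justifying the ``at most two shared faces'' claim in the above separation argument; the remainder is a routine application of Euler's formula.
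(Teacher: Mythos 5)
Your proposal is correct and follows essentially the same route as the paper: bound the number of interesting vertices by \(\mathcal{O}(k^2)\) using the preceding reductions, build an auxiliary plane (multi)graph on them with edges drawn inside the faces that contain at least two interesting vertices, and apply Euler's formula. The only difference is that the paper inserts a single edge per such face while you insert the whole boundary cycle, which forces you to bound the edge multiplicity of the auxiliary multigraph; your constant multiplicity bound via the ``two vertices share at most two faces'' fact is correct and in fact also covers the parallel-edge issue that the paper's one-edge-per-face construction leaves implicit.
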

\begin{proof}
    To count the number of such faces, consider the auxiliary graph $F=(V_F, E_F)$. Let $V_F$ be the set of interesting vertices of $G_t$. For every face $f$ of the extended skeleton of $G_t$ containing two or more interesting vertices, we pick two arbitrary interesting vertices $u, v \in V_F$ that lie on the boundary of $f$, and add the edge $uv$ to $E_F$. Furthermore, $uv$ is embedded inside $f$. As there is exactly one edge of $E_F$ contained in any face of $G_t$, $F$ is a plane graph. Therefore, by Euler's formula $|E_F|=\mathcal{O}(|V_F|)=\mathcal{O}(k^2)$.
    Since the number of faces of $G_t$ containing at least two interesting vertices is $|E_F|$, we get that the number of such faces is $\mathcal{O}(k^2)$.
\end{proof}

\begin{figure}
    \centering
    \includegraphics[width=0.4\linewidth,page=1]{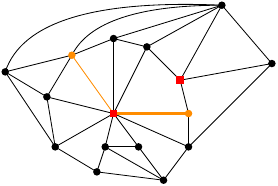}
    \hfil
    \includegraphics[width=0.4\linewidth,page=2]{rnodefigs.pdf}
    \caption{Example of the result of boring edge removal (right) applied to a graph (left) in which terminals are red squares, and virtual edges and their endpoints are orange (the thick one being the corner edge).} 
    \label{fig:Rnode-ber}
\end{figure}

\begin{figure}
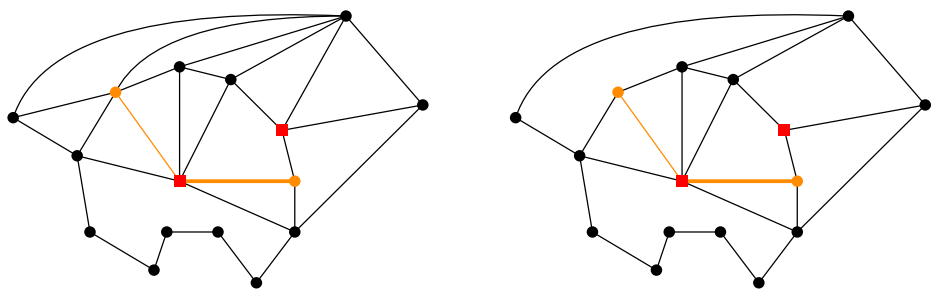

    \centering
    \includegraphics[width=0.4\linewidth,page=3]{rnodefigs.pdf}
    \hfil
    \includegraphics[width=0.4\linewidth,page=4]{rnodefigs.pdf}
    \caption{Example of the result of private face merging around the left terminal (left) and the result of exhaustive private face merging (right) applied to the graph from \Cref{fig:Rnode-ber}.}
    \label{fig:Rnode-pfm}
\end{figure}

\begin{rr}[Boring edge removal]\label{rr:Rnode-ber}
    Delete edges of the extended skeleton that are only on boundaries of faces that do not contain interesting vertices (see \Cref{fig:Rnode-ber}).
\end{rr}
Note that these edges in the reduction rule above are uniquely determined because the embedding of the extended skeleton is fixed.

Next, we consider each face \(f\) of the skeleton of \(G\) that has some arbitrary fixed terminal, corner or endpoint of a virtual edge \(x\) on its boundary.
\begin{rr}[Private face merging]\label{rr:Rnode-pfm}
    If there exists a face $f$ in the extended skeleton that has at most one fixed interesting vertex, say $x$, on its boundary, and is incident on another face $f'$ containing $x$ but no other interesting vertex, delete the shared boundary of $f$ and $f'$ connected to $x$ (see \Cref{fig:Rnode-pfm}).
\end{rr}

\begin{restatable}[Safeness of private face merging]{proposition}{safepfm}
    \label{prop:Rnode-facemergingsafe}
    Let \(\tilde{G}\) be the graph obtained from \(G_t\) by exhaustively applying \Cref{rr:Rnode-ber} and \Cref{rr:Rnode-pfm}.
    For an arbitrary embedding of \(G_t\), \(\tilde{G}\) embedded according to the restriction of the embedding of \(G_t\) to \(\tilde{G}\) is a nice kernel of \(G_t\) for \textsc{Embedded Face Cover Number}.
\end{restatable}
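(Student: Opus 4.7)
The plan is to verify conditions (K1), (K2), and (K3) in turn for $\tilde{G}$, with $\fcn$ understood in its fixed-embedding sense.

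First, for (K1), note that the corner edge $c_1c_2$ has both endpoints interesting by definition, so its two incident faces each carry at least two interesting vertices. This places $c_1c_2$ outside the precondition of \Cref{rr:Rnode-ber} (which demands both incident faces to be boring) and outside that of \Cref{rr:Rnode-pfm} (which demands both incident faces to have at most one interesting vertex). Since vertices are never deleted, $c_1$, $c_2$, and the corner edge $c_1c_2$ all survive in $\tilde{G}$. The same observation shows every virtual edge also survives, since its two endpoints are interesting.

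For (K2) and (K3), I would introduce the natural surjection $\pi$ from the faces of $G_t$ (in the fixed embedding) to the faces of $\tilde{G}$ (in the restricted embedding), assigning each face $f$ to the unique face of $\tilde{G}$ that contains it as a planar region, and establish the following \emph{invariant}: for every face $\tilde{f}$ of $\tilde{G}$ and every $f \in \pi^{-1}(\tilde{f})$, either both boundaries carry no interesting vertex, or their interesting-vertex sets coincide. The invariant is verified by induction on the number of rule applications: a single application of BER merges two boring faces into a boring face, while a single application of PFM merges a face with at most one interesting vertex (namely $x$, or none) with a face whose sole interesting vertex is $x$, yielding a face with $\{x\}$ as its sole interesting vertex. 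Moreover, both external faces of $G_t$ contain $c_1$ and $c_2$, hence at least two interesting vertices, so they lie outside the precondition of both rules, and $\pi$ therefore restricts to a bijection between the external faces of $G_t$ and those of $\tilde{G}$.

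From this invariant I would derive (K2) and (K3) via a two-directional, external-face-preserving face-cover correspondence. In one direction, any face cover $F$ of $G_t$ with $i$ external faces yields the face cover $\{\pi(f) : f \in F\}$ of $\tilde{G}$ of size at most $|F|$ with the same number of external faces. In the other direction, any face cover $F'$ of $\tilde{G}$ with $i$ external faces yields a face cover of $G_t$ of the same size and external-face count by selecting for each $\tilde{f} \in F'$ a representative $f \in \pi^{-1}(\tilde{f})$ with the same interesting-vertex set (and hence the same terminal set); such a representative exists by the invariant, and for an external face it is forced to be the unique external preimage. Combining both directions gives $\fcn_i(G_t, T) = \fcn_i(\tilde{G}, T')$ for $i \in \{0,1,2\}$, and the ``exceeds $k$'' branches of (K2) and (K3) follow from the same correspondence. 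The argument for (K3) is identical after removing the relevant corners from the terminal set, since being interesting (and hence the invariant) does not depend on terminal designation.

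The main obstacle I expect is the careful formulation and verification of the invariant through the potentially tangled interaction of repeated PFM applications around a single interesting vertex $x$: the point to be pinned down is that the constituents' interesting-vertex sets are either empty or equal to $\{x\}$, so their union upon merging is again $\emptyset$ or $\{x\}$, preserving the invariant inductively. A small technicality along the way is that after exhaustive PFM a low-degree interesting vertex might be stripped of all its incident edges and become an isolated point inside its containing face; this is harmless for the face-cover semantics, because an isolated vertex still lies on the boundary of its face and is therefore still covered.
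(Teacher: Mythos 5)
Your proposal is correct and follows essentially the same route as the paper: the paper's (very terse) proof simply asserts that boring edge removal and private face merging do not change the sets of terminals that occur together on any face and that no corner is removed, which is precisely the content of your face-surjection \(\pi\) and its interesting-vertex invariant. Your write-up just makes this assertion explicit, including the (correct) observations that faces incident to the corner edge and the external faces are never eligible for either rule.
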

\begin{proof}
    Boring edge removal and private face merging does not change the sets of terminals that together occur on any face, hence ensuring (K2) and (K3).
    Also, no corner is removed ensuring (K1).
\end{proof}

Boring edge removal and private face merging allow us to reduce the problem to a setting in which the number of faces of the skeleton is bounded.
\begin{restatable}{lemma}{fewskelfaces}
    \label{lem:Rnode-fewskelfaces}
    Let \(\tilde{S}\) be the graph obtained from the skeleton of \(G_t\) by exhaustively applying \Cref{rr:Rnode-ber} and \Cref{rr:Rnode-pfm}, embedded according to the restriction of the unique embedding of the skeleton of \(G_t\) to \(\tilde{S}\).
    \(\tilde{S}\) has at most \(\mathcal{O}(k^3)\) faces.
\end{restatable}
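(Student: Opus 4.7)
The plan is to partition the faces of $\tilde{S}$ according to how many interesting vertices lie on their boundary, and bound each class separately via planar bipartite counting combined with the structural consequences of exhaustively applying \Cref{rr:Rnode-ber} and \Cref{rr:Rnode-pfm}. First, I would tally the set $I$ of interesting vertices of $\tilde{S}$: by \Cref{prop:terminalbound} there are at most $3k^2+k$ terminals; the earlier observation and corollary bound the problematic components by $k$ and the semi-problematic virtual edges by $k^2$, contributing $\mathcal{O}(k^2)$ virtual-edge endpoints; and the two corners of $G_t$ add another $2$. Together, $|I|=\mathcal{O}(k^2)$.

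Second, I would apply \Cref{prop:Rnode-few2faces} to bound the number of faces carrying at least two interesting vertices by $\mathcal{O}(k^2)$. Its proof only draws one edge inside each such face, so it does not rely on 3-connectivity and still applies to $\tilde{S}$ despite the reductions possibly having destroyed 3-connectivity.

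Third, after \Cref{rr:Rnode-ber} is applied exhaustively no face has zero interesting vertices, so the remaining work is to bound the faces with exactly one interesting vertex. For each $x\in I$ I would examine the cyclic arrangement of face-corners at $x$: by \Cref{rr:Rnode-pfm}, two cyclically consecutive corners at $x$ cannot belong to distinct single-vertex faces, since otherwise the rule would merge them. Thus maximal runs of single-vertex corners at $x$ come from a single face, so the number of distinct single-vertex faces at $x$ is at most $\max(1, a(x))$, where $a(x)$ denotes the number of corners at $x$ coming from multi-vertex faces. Summing over $x$ gives a bound of $|I|+\sum_{x\in I}a(x)$. I would then bound $\sum_x a(x)=\sum_{f\in F_a} r(f)$, with $r(f)$ counting the interesting-vertex corners of $f$, through a second planar auxiliary graph that draws inside each multi-vertex face a cycle through its interesting vertices in their cyclic order on the boundary: this is planar with vertex set $I$, giving $\mathcal{O}(|I|)=\mathcal{O}(k^2)$ distinct interesting-vertex incidences with multi-vertex faces, while face boundary multiplicities (which can appear once 2-connectivity is lost) add at most a further $\mathcal{O}(k)$ factor via a maximum-degree argument, bringing the total face count to $\mathcal{O}(k^3)$.

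The hardest part will be treating these non-simple face boundaries: the reductions can create cut vertices so that a single multi-vertex face's boundary walk visits the same interesting vertex several times, and bounding these repetitions cleanly is what inflates the more naive $\mathcal{O}(k^2)$ count to the $\mathcal{O}(k^3)$ claimed in the lemma.
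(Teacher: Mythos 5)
Your treatment of faces with at least two interesting vertices (via \Cref{prop:Rnode-few2faces}) and of faces with exactly one interesting vertex (via the rotation around each \(x\in I\), using that \Cref{rr:Rnode-pfm} forbids two consecutive single-interesting-vertex faces at \(x\)) matches the paper's argument in structure and yields the same \(\mathcal{O}(k^2)\) and \(\mathcal{O}(k^3)\) bounds. However, there is a genuine gap in your third step: the claim that after exhaustive application of \Cref{rr:Rnode-ber} no face has zero interesting vertices is false. Boring edge removal deletes an edge only when \emph{both} of its incident faces lack interesting vertices; after exhaustive application one only knows that every surviving edge borders at least one face containing an interesting vertex. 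A face \(f\) with no interesting vertex on its boundary survives whenever each of its boundary edges is shared with some face that does contain an interesting vertex (picture a hexagonal face each of whose six neighbours carries a terminal). Private face merging does not help here either, since it requires the two faces to share an interesting vertex \(x\).

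The paper therefore spends a separate, nontrivial part of the proof on exactly these faces: it first bounds the number of interesting-vertex-free faces that are adjacent to at least two faces containing interesting vertices by an auxiliary planar graph on the \(\mathcal{O}(k^3)\) faces that do contain interesting vertices (one auxiliary edge per interesting-vertex-free face, Euler's formula giving \(\mathcal{O}(k^3)\)), and then rules out interesting-vertex-free faces all of whose boundary is shared with a single face \(f'\) by a nesting argument that uses the (almost) 3-connectivity of the original skeleton: three vertex-disjoint paths from such a nested face to the boundary of \(f'\) would partition \(f'\) into subfaces each forced to contain an interesting vertex, contradicting that \(f'\) is a face of \(\tilde{S}\). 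You need to supply this case; without it the claimed partition of the faces is incomplete. (Separately, your planar-multigraph bound on \(\sum_x a(x)\) should be made careful about parallel edges, but that part is repairable and closer in spirit to the paper's counting.)
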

\begin{proof}
    By \Cref{prop:Rnode-few2faces}, the number of faces of the skeleton of \(G_t\) with at least two interesting vertices \(\mathcal{O}(k^2)\).
    This number does not increase by boring edge removal or private face merging.

    Firstly, we argue that any interesting vertex \(x\) can be contained in at most \(\mathcal{O}(k^2)\) face boundaries which contain no other interesting vertex:
    Due to exhaustive application of \Cref{rr:Rnode-pfm}, no two faces containing $x$ but no other interesting vertex can share edges on their boundary connected to \(x\). 
    Hence in the rotation around \(x\), any two such faces must be separated by a face which also contains another interesting vertex, of which there are only \(\mathcal{O}(k^2)\).
    Hence we can find a mapping from all faces that contain precisely one interesting vertex to faces that contain at least two interesting vertices such that each of the latter is mapped to by only vertices on its boundary.
    Because each face has at most \(\mathcal{O}(k)\) interesting vertices on its boundary, each element of the \(\mathcal{O}(k^2)\)-sized co-domain of this mapping is mapped to at most \(\mathcal{O}(k)\) times.
    This means that there are at most \(\mathcal{O}(k^3)\) faces with exactly one interesting vertex on their boundary.

    Let us now bound the number of faces \(f\) of \(\tilde{S}\) that have no interesting vertex on their boundary.
    Each edge on the boundary \(f\) has to also be on the boundary of a face that contains an interesting vertex on its boundary, otherwise it would have been deleted during boring edge removal.
    As in the proof of \Cref{prop:Rnode-few2faces}, consider the auxiliary graph $F=(V_F, E_F)$. $V_F$ contains a vertex for each face of $\tilde{S}$ that contains at least one interesting vertex. For every face $f$ of $\tilde{S}$ that contains no interesting vertex, we pick two arbitrary faces containing interesting vertices adjacent to $f$ and add to $E_F$ an edge between their corresponding vertices. Note that for every such face $f$, $E_F$ contains exactly one edge and the faces without interesting vertices are disjoint from each other. Therefore, $F$ is a planar graph and by Euler's formula $|E_F|= \mathcal{O}(k^3)$. Hence, the number of faces of $\tilde{S}$ which are adjacent to at least two faces containing interesting vertices is $\mathcal{O}(k^3)$.

    It remains to bound the number of faces \(f_1, \ldots f_j\) of \(\tilde{S}\) without interesting vertices on their boundaries all of whose boundaries are also shared with a single face \(f'\), which contains an interesting vertex on its boundary.
    This can only happen if each \(f_i\) is nested inside \(f'\).
    Because the skeleton of \(G_t\) is almost 3-connected, there must have been at least three vertex-disjoint paths connecting the boundary of \(f_i\) to the part of the boundary of \(f'\) in which it nests.
    These paths partition \(f'\) into three subfaces.
    If any of these subfaces contains no interesting vertices, then part of the boundary of \(f_i\) would have been deleted during boring edge removal; a contradiction.
    Hence, all of these subfaces contain an interesting vertex and these must be pairwise distinct by the vertex-disjointness of the paths.
    Then the paths would still be contained in \(\tilde{S}\); a contradiction to \(f'\) being a face of \(\tilde{S}\).
\end{proof}

So far, we have shown an upper bound only on the number of faces. To obtain a small number of vertices rather than faces, we apply the following reduction rule.

\begin{rr}[Boring edge contraction] \label{rr:Rnode-bec}
  Replace each induced path of length at least two in the skeleton, whose interior is free from interesting vertices, by a single edge. Further, we delete vertices of degree one that are not interesting vertices.
\end{rr} 

Let $\tilde{G}$ be the graph obtained after applying \Cref{rr:Rnode-ber} and \Cref{rr:Rnode-pfm} to \(G_t\) embedded in an arbitrary way. Then let $\tilde{G}'$ be the embedded graph obtained after exhaustively applying \Cref{rr:Rnode-bec}. The sets of terminals occurring together on any face of $\tilde{G}$ do not change in $\tilde{G}'$. Hence, this is safe with respect to \textsc{Embedded Face Cover Number} by the same argument as \Cref{prop:Rnode-facemergingsafe}.
\begin{proposition}
\label{prop:Rnode-embeddedskeletonkernel}
    Let \(\tilde{G}\) be the graph obtained from \(G_t\) by exhaustively applying boring edge removal, private face merging and then boring edge contraction.
    For an arbitrary embedding of \(G_t\), let \(\tilde{G}\) be embedded according to the restriction of the embedding of \(G_t\) to \(\tilde{G}\) in which edges that replace paths trace these paths at an \(\varepsilon\)-distance. Then, $\tilde{G}$ is a nice kernel of \(G_t\) for \textsc{Embedded Face Cover Number}.
\end{proposition}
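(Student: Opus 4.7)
The plan is to compose \Cref{prop:Rnode-facemergingsafe} with a dedicated safeness argument for \Cref{rr:Rnode-bec}. Since being a nice kernel for \textsc{Embedded Face Cover Number} is already established after exhaustively applying boring edge removal and private face merging, it suffices to verify that a single application of boring edge contraction preserves (K1)--(K3); exhaustive application then follows by a trivial induction on the number of contractions.

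For (K1), the observation is that corners are interesting vertices by definition, so the corner edge of $G_t$ has two interesting endpoints. Consequently it cannot lie in the interior of any length-$\geq 2$ induced path whose interior is free of interesting vertices, and its endpoints are never degree-one non-interesting vertices eligible for deletion. Hence both corners and the corner edge persist under \Cref{rr:Rnode-bec}.

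For (K2) and (K3), the plan is to exhibit a canonical bijection between the faces of the intermediate graph (after \Cref{rr:Rnode-ber} and \Cref{rr:Rnode-pfm}) and the faces of $\tilde{G}$ that preserves, on each face boundary, the set of incident interesting vertices. This is where the $\varepsilon$-tracing specification of the embedding becomes essential: by having the replacement edge of a contracted induced path trace that path at a small enough distance, the two faces of $\tilde{G}$ incident to the replacement edge correspond precisely to the two faces of the intermediate graph incident to the original path. Since the interior vertices of such a path have degree exactly two and are not interesting, no terminal, corner or endpoint of a virtual edge is ever gained or lost on any face under the bijection. In particular, the two faces incident to the (preserved) corner edge correspond under the bijection, so the notion of external face is compatible on both sides. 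Translating face covers across this bijection then preserves both total size and the number of external faces used, as well as which corners are actually covered; this yields equality of $\fcn_i(G_t, T \setminus C)$ and $\fcn_i(\tilde{G}, T' \setminus C)$ for all $i \in \{0,1,2\}$ and all $C \subseteq \{c_1,c_2\}$, hence (K2) and (K3).

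The main obstacle I foresee is purely bookkeeping: making rigorous that contracting a length-$\geq 2$ induced path really leaves the two incident facial walks intact apart from shortening the traversal through the path's interior. This relies on those interior vertices all having degree exactly two in the intermediate graph (which is the content of ``induced path'' together with the fact that boring edge removal has already stripped away boundary edges belonging only to interesting-vertex-free faces) and on the $\varepsilon$-trace choice matching the rotation at the endpoints of the replacement edge to the rotations the original path imposed there. Once this is set up, no new face-counting argument is required for this step, and the proposition follows.
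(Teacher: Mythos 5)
Your proposal is correct and matches the paper's approach: the paper justifies this proposition in one line by observing that \Cref{rr:Rnode-bec} does not change the sets of terminals occurring together on any face and then invoking the same argument as \Cref{prop:Rnode-facemergingsafe}, which is exactly the face-bijection-plus-composition argument you spell out in more detail. Your additional explicit treatment of (K1) via corners being interesting vertices is consistent with (and slightly more careful than) the paper's remark that no corner is removed.
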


Finally, we are at a point where we can start bounding the number of vertices.
\begin{restatable}{lemma}{skelsize}
     Let \(\tilde{S}\) be the graph obtained from the skeleton of \(G_t\) by exhaustively applying boring edge removal, private face merging and then boring edge contraction.
     \(|V(\tilde{S})| \in \mathcal{O}(k^3)\).
\end{restatable}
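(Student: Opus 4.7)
The plan is to combine the \(\mathcal{O}(k^3)\) bound on the number of faces of \(\tilde{S}\) from \Cref{lem:Rnode-fewskelfaces} with Euler's formula. The key structural observation is that after exhaustive application of \Cref{rr:Rnode-bec}, every non-interesting vertex of \(\tilde{S}\) has degree at least three: a non-interesting vertex of degree one would have been deleted explicitly, and a non-interesting vertex of degree two would lie in the interior of an induced path of length at least two whose interior contains no interesting vertex, contradicting exhaustive application of boring edge contraction.

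Next I would bound the number \(|V_i|\) of interesting vertices of \(\tilde{S}\). By \Cref{prop:terminalbound}, the number of terminals is at most \(3k^2 + k \in \mathcal{O}(k^2)\); the number of corners is constant; and the remaining virtual edges correspond only to problematic or semi-problematic virtual components (the unproblematic ones having been replaced by terminal-subdivided edges prior to these reductions), and by the earlier bounds of at most \(k\) problematic and \(\mathcal{O}(k^2)\) semi-problematic virtual components, these contribute \(\mathcal{O}(k^2)\) virtual-edge endpoints. Altogether \(|V_i| \in \mathcal{O}(k^2)\).

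The final step is a short Euler-formula computation. Since boring edge contraction does not change the number of faces, \(\tilde{S}\) still has \(F \in \mathcal{O}(k^3)\) faces. Let \(n = |V(\tilde{S})|\), \(m = |E(\tilde{S})|\), and let \(c \geq 1\) be the number of connected components of \(\tilde{S}\). Euler's formula for planar (multi)graphs gives \(n - m + F = 1 + c\), while the handshake lemma combined with the degree lower bound above yields \(2m \geq 3(n - |V_i|)\). Substituting \(m = n + F - 1 - c\) and rearranging yields \(n \leq 3|V_i| + 2F - 2 - 2c \leq 3|V_i| + 2F - 4 \in \mathcal{O}(k^3)\), as desired.

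The main obstacle I anticipate is double-checking that the reduction rules leave us in a setting where Euler's formula and the handshake lemma remain applicable, specifically, that parallel edges possibly introduced by contraction of induced paths, as well as any disconnection caused by boring edge removal, are compatible with the planar-multigraph version of Euler's formula we invoke. This makes the step more of a sanity check than a genuine obstacle, and all the real combinatorial content sits in the degree lower bound on non-interesting vertices.
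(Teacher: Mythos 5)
Your proposal is correct and takes essentially the same route as the paper: both rest on the observation that after exhaustive boring edge contraction every vertex of degree at most two is interesting (of which there are \(\mathcal{O}(k^2)\)), and then combine the minimum-degree-three bound with Euler's formula and the \(\mathcal{O}(k^3)\) face count from \Cref{lem:Rnode-fewskelfaces}. The only cosmetic difference is that the paper first suppresses the degree-two (interesting) vertices to form an auxiliary graph of minimum degree three before applying Euler's formula, whereas you absorb them directly into the handshake inequality via the \(3(n-|V_i|)\) term; the two computations are equivalent.
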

\begin{proof}
    Notice that all vertices of degree at most two in \(\tilde{S}\) are interesting vertices as otherwise they would have been removed during boring edge contraction.
    We already know that we can bound the number of such vertices by \(\mathcal{O}(k^2)\).

    To bound the number of vertices with degree at least three, we consider the embedded graph \(\tilde{S}'\) arising from \(\tilde{S}\), in which we replace every induced path with more than two edges by a single edge.
    The vertices of this graph are precisely those of \(\tilde{S}\) that have degree at least three, and the minimum vertex degree in \(\tilde{S}'\) is three.

    This means that \(|E(\tilde{S})'| \geq \frac{3}{2}|V(\tilde{S}')|\).
    Plugging this into Euler's formula for \(\tilde{S}'\) we get that \(|V(\tilde{S}')| \leq 2f - 4\) where \(f\) is the number of faces of \(\tilde{S}'\).
    The number of faces of \(\tilde{S}'\) is the same as the number of faces of \(\tilde{S}\) which by \Cref{lem:Rnode-fewskelfaces} is in \(\mathcal{O}(k^3)\).

    Overall, this implies the lemma.
\end{proof}

We will now re-establish uniqueness of the embedding of our kernelized skeleton by \emph{rigidizing} (i.e.\ adding structures to the embedded graph that render it 3-connected).
\begin{definition}
    Let \(H\) be an embedded planar graph and \(U \subseteq V(H)\).
    The \emph{rigidization} of \(H\) around \(U\) is the embedded graph \(H^{\otimes U}\) arising from \(H\) in the following way.
    For each edge \(uv \in E(H)\) where \(\{u,v\} \cap U \neq \emptyset\) introduce two new copies of \(uv\), drawn at an \(\varepsilon\)-distance from \(uv\).
    Subdivide each of \(uv\) and its copies by \(|\{u,v\} \cap U|\) many vertices, which we refer to as \emph{subdivision vertices} at \(\varepsilon\)-distance from \(\{u,v\} \cap U\).
    Introduce the unique planar cycle on the subdivision vertices at \(\varepsilon\)-distance of each vertex in \(U\).
\end{definition}
See \Cref{fig:rigid} for an example of the above definition.
\begin{figure}
    \centering
    \includegraphics[width=0.4\linewidth,page=6]{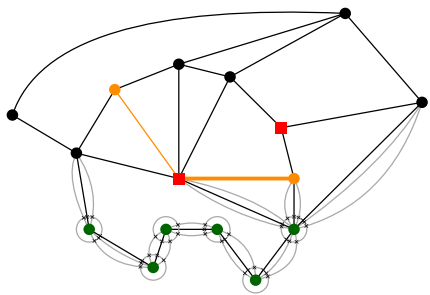}
    \caption{\(G^{\otimes U}\) with \(G\) from \Cref{fig:Rnode-pfm} (right) and \(U\) the set of green vertices. Edges that are not subdivisions of original edges are colored gray, subdivision vertices are indicated as crosses.
    Notice that only one vertex of \(U\) would remain after boring edge contraction.}
    \label{fig:rigid}
\end{figure}
Rigidization does not significantly increase the size of a graph:
\begin{restatable}{observation}{smallrigid}
    Let \(H\) be an embedded planar graph and \(U \subseteq V(H)\).
    \(|V(H^{\otimes U})| \in \mathcal{O}(|V(H)|)\).
\end{restatable}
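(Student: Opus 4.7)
The plan is to carry out a direct accounting of the vertices introduced by the rigidization construction. The vertex set of $H^{\otimes U}$ is obtained from $V(H)$ by adding only the subdivision vertices: the $\varepsilon$-cycles added around each $u \in U$ are formed on existing subdivision vertices and hence contribute no further vertices. So it suffices to bound the number of subdivision vertices.

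The first step is to read off, from the definition, how many subdivision vertices each edge of $H$ contributes. For an edge $uv \in E(H)$ with $\{u,v\} \cap U = \emptyset$, nothing is added. For $uv$ with $\{u,v\} \cap U \neq \emptyset$, the construction creates three parallel copies (the original and two new copies at $\varepsilon$-distance), each of which is subdivided by exactly $|\{u,v\} \cap U|$ vertices. Since $|\{u,v\} \cap U| \leq 2$, each such edge contributes at most $3 \cdot 2 = 6$ subdivision vertices.

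Summing gives at most $6 \cdot |E(H)|$ subdivision vertices in total. Now I would invoke planarity of $H$, which implies $|E(H)| \leq 3|V(H)| - 6$ (for $|V(H)| \geq 3$; the smaller cases being trivial), so the number of subdivision vertices is at most $18|V(H)| - 36$. Therefore
\[
|V(H^{\otimes U})| \;\leq\; |V(H)| + 18|V(H)| - 36 \;\in\; \mathcal{O}(|V(H)|),
\]
which is the claim. The only step that requires any care is noticing that the cycle construction reuses existing subdivision vertices rather than creating new ones; once this is clarified the rest is a routine counting argument.
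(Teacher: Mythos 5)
Your proof is correct and follows essentially the same route as the paper's: each edge incident to $U$ spawns three copies with at most two subdivision vertices each (so at most six new vertices per edge), and planarity bounds $|E(H)|$ linearly in $|V(H)|$. Your explicit remark that the $\varepsilon$-cycles reuse existing subdivision vertices is a helpful clarification that the paper leaves implicit.
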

\begin{proof}
    For each edge of \(H\), we introduce at most six new vertices in \(H^{\otimes U}\) (up to two for each of the three copies of an edge in \(H\)).
    Since \(|E(H)| \in \mathcal{O}(|V(H)|)\) as \(H\) is planar, this implies the statement.
\end{proof}

For us, the intended purpose of rigidization is to fix an embedding, which is formally supported by the following lemma.
\begin{restatable}{lemma}{tconnrigid}
\label{lem:Rnode-rigidization}
    Let \(H\) be an arbitrary embedded connected planar graph, and \(U\) be the set of vertices that are contained in a separator of size at most two of \(H\).
    \(H^{\otimes U}\) is 3-connected.
\end{restatable}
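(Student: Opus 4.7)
My plan is to assume for contradiction the existence of a 2-separator \(\{a, b\} \subseteq V(H^{\otimes U})\) and derive a contradiction by showing that the triple-edge redundancy and the cycles around \(U\)-vertices introduced by the rigidization force \(H^{\otimes U} - \{a, b\}\) to be connected. It will be convenient to introduce the projection \(\pi \colon V(H^{\otimes U}) \to V(H)\) that is the identity on original vertices and sends every subdivision vertex near some \(u \in U\) to \(u\), and to write \(S := \pi(\{a, b\})\), so that \(|S| \leq 2\).

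The first step is to show that everything in \(\pi^{-1}(V(H) \setminus S)\) lives in a single connected component of \(H^{\otimes U} - \{a, b\}\). For every \(v \in V(H) \setminus S\), the fibre \(\pi^{-1}(v)\) is either a single vertex or the wheel-like structure formed by \(v\) together with its surrounding cycle; by the definition of \(S\) it is disjoint from \(\{a, b\}\) and therefore connected in \(H^{\otimes U} - \{a, b\}\). For every edge \(uv \in E(H)\) with \(u, v \in V(H) \setminus S\), the fibres \(\pi^{-1}(u)\) and \(\pi^{-1}(v)\) are linked either by a single surviving edge (when \(u, v \notin U\)) or by three internally vertex-disjoint paths through \(\pi^{-1}(u) \cup \pi^{-1}(v)\) (when at least one endpoint is in \(U\)); since only two vertices are removed, at least one of these links survives. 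Chaining across edges of \(H - S\), every connected component of \(H - S\) lifts to a connected subgraph of \(H^{\otimes U} - \{a, b\}\).

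The second step attaches the residual vertices in \(\pi^{-1}(S) \setminus \{a, b\}\). The same triple-edge pigeonhole shows that for every \(u \in S\) and every \(H\)-neighbour \(w\) of \(u\) there is a surviving subdivision vertex on a copy of \(uw\) that witnesses a connection from the residual at \(u\) into \(\pi^{-1}(w)\); when \(u \notin \{a, b\}\) the vertex \(u\) itself keeps the surviving subdivision vertices around it linked through the spokes, and when \(u \in \{a, b\} \cap U\) the surrounding cycle splits into one or two arcs that each retain such outward links. I then split into two cases. If \(S\) does not separate \(H\), then \(V(H) \setminus S\) spans a single connected component of \(H - S\), and all residuals plug into its lift, contradicting that \(\{a, b\}\) separates \(H^{\otimes U}\). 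If \(S\) does separate \(H\), then \(S \subseteq U\) by the definition of \(U\), and every pair of components \(X_i, X_j\) of \(H - S\) is glued together in \(H^{\otimes U} - \{a, b\}\) either through the surviving copies of an edge inside \(S\) (when the two elements of \(S\) are adjacent in \(H\)) or through an \(H\)-path whose internal vertices lie in a component of \(H - S\), routing the connection through an already-connected lift.

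The main obstacle is the case analysis in the small neighbourhood of \(S\): one must verify, for every possible placement of \(\{a, b\}\) (both original, both subdivision, or one of each, and possibly both near the same element of \(S\)), that each residual wheel or arc retains at least one surviving outward link and that all such links collect into a common component. Degenerate configurations in which some \(u \in S\) has all its \(H\)-neighbours inside \(S\) force \(\deg_H(u) \leq 1\) and have to be resolved by invoking connectivity of \(H\) to show that the remaining element of \(S\) then carries the connection onward. The tripling of edges together with the cycle around each \(U\)-vertex is calibrated precisely so that no placement of two removed vertices can destroy all these links; the formal proof is the systematic verification of this pigeonhole argument.
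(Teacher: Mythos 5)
Your proposal is correct, but it is organized quite differently from the paper's proof. The paper argues by iteratively emptying the hypothetical separator \(X\): for each vertex of \(X\) (a \(U\)-vertex or a subdivision vertex) it exhibits two internally vertex-disjoint local reroutings of any path through that vertex --- the two arcs of the \(\varepsilon\)-cycle around a \(U\)-vertex, or detours via the other copies of a tripled edge --- so that the single remaining vertex of \(X\) cannot block both; hence each such vertex is redundant, \(X\) shrinks to the empty set, and this contradicts connectedness. You instead project \(\{a,b\}\) down to \(S\subseteq V(H)\), lift the components of \(H-S\) to untouched connected subgraphs, observe that each residual fibre over \(S\cap U\) is a wheel minus at most two vertices and hence connected, and glue everything by pigeonhole on the three edge copies; your dichotomy on whether \(S\) separates \(H\) (forcing \(S\subseteq U\)) plays the role of the paper's remark that \(X\) cannot contain original vertices of \(H\) that are not in \(U\). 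Your route is more global and makes explicit where connectivity of \(H\) and the definition of \(U\) are used, at the cost of the case analysis you flag; the paper's route is shorter because it never needs to track which component each surviving piece lands in. Two small points to fix when writing yours out: when \(u\in\{a,b\}\cap U\), the cycle around \(u\) can lose at most one further vertex, so it is always a single arc (the ``two arcs'' case only occurs when \(u\) itself survives, and then the spokes through \(u\) reconnect the arcs); and when both elements of \(S\) lie in \(U\) and are adjacent in \(H\), the connection between their residuals should be argued via a copy of that edge whose two subdivision vertices both survive, which follows by pigeonhole on the six subdivision vertices of its three copies. With these verifications, which all go through, the argument is complete.
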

\begin{proof}
    Assume for contradiction that \(X \subseteq V(H^{\otimes U})\) is a separator of size at most two.
    \(X\) cannot consist of vertices that also exist in \(H\) but do not form a separator there.
    Hence, \(X\) consists of cutvertices or 2-separator vertices of \(H\) or subdivision vertices.

    Any path that contains in its interior a vertex \(u \in U\), i.e.\ a cutvertex or a 2-separator vertex of \(H\), can be rerouted\footnote{we always also allow short-cutting if the rerouting option intersects \(P\)} in \(H^{\otimes U}\) via each of the paths on the cycle at an \(\varepsilon\)-distance around \(u\) between the predecessor and the successor of \(u\).
    Obstructing both these rerouting options requires two vertices.
    Hence, if present in \(X\), \(u\) can be removed from \(X\) without changing the fact that \(X\) is a separator in \(H^{\otimes U}\).
    Hence, without loss of generality, \(X\) consists only of subdivision vertices.
    
    Any path \(P\) that contains in its interior a subdivision vertex \(w\) that is on the \(\varepsilon\)-distance cycle around \(u \in U\) can also be rerouted in \(H^{\otimes U}\) in two vertex-disjoint ways:
    if in \(P\), \(w\) is between \(u\) and another vertex \(w'\) on the \(\varepsilon\)-distance cycle around \(u\), then there is the direct edge \(w'u\) and the path consisting of the \(\varepsilon\)-distance cycle around \(u\) without \(ww'\).
    If in \(P\), \(w\) is between two other vertices \(w'\) and \(w''\)  on the \(\varepsilon\)-distance cycle around \(u\), then it can be rerouted via \(w'u\) and \(w''u\) or via the path consisting of the \(\varepsilon\)-distance cycle around \(u\) minus \(ww'\) and \(ww''\).
    Otherwise, \(w\) has a neighbor on \(P\) that is reached via an edge that arose by possibly copying and subdividing an edge \(uv \in E(H)\).
    If the other neighbor of \(w\) is \(u\), we can reroute via each of the two paths corresponding to the other copies of \(uv\).
    Otherwise, we can reroute the path corresponding to one of the other copies of \(uv\) and \(u\) and the path corresponding to the last remaining copy of \(uv\) and the \(\varepsilon\)-distance cycle around \(u\) without \(w\).
    
    In any case, we can avoid \(w\) by two vertex-disjoint rerouting options the obstruction of which requires two vertices.
    Hence, if present in \(X\),\(w\) can be removed from \(X\) without changing the fact that \(X\) is a separator in \(H^{\otimes U}\).

    This means that \(X\) is empty; a contradiction to \(H\) being connected.
\end{proof}

It is straightforward to verify that the graph \(\tilde{S}\) obtained from the skeleton of \(G_t\) by exhaustively applying boring edge removal, private face merging and then boring edge contraction is connected.
Hence \(\tilde{S}^{\otimes U}\) where \(U\) is the set of all cutvertices and vertices in 2-separators of \(\tilde{S}\) is planar and 3-connected by \Cref{lem:Rnode-rigidization} and hence has a unique embedding which allows us to return from considering \textsc{Embedded Face Cover Number} to \textsc{Face Cover Number} on abstract graphs.
That safeness is maintained in this step crucially relies on the fact that we never rigidize around interesting vertices.
\begin{restatable}{lemma}{nonembed}
\label{lem:Rnode-nospecialincut}
    Let \(\tilde{S}\) be obtained from the skeleton of \(G_t\) by exhaustively applying boring edge removal, private face merging and then boring edge contraction.
    No vertex separator of size at most two in \(\tilde{S}\) contains an interesting vertex.
\end{restatable}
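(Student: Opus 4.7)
The plan is to argue by contradiction: suppose $\{x, y\}$ is a vertex cut of size at most two in $\tilde{S}$ containing an interesting vertex $x$ (the case where $x$ alone is a cut vertex is handled analogously). First I would establish three structural observations about the reductions: (i) boring edge removal never deletes an edge incident to an interesting vertex, because both faces adjacent to such an edge contain it and hence are not interesting-free; (ii) private face merging, applied at an interesting vertex $z$, only merges two faces whose sole interesting vertex is $z$ and produces a merged face with the same property; (iii) boring edge contraction preserves the set of interesting vertices on every face boundary. Taken together, these three observations imply that every face of $\tilde S$ whose boundary contains at least two interesting vertices corresponds to an unchanged face of the extended skeleton of $G_t$ with the same interesting-boundary vertex set.

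Next I would translate the separator assumption into a face-theoretic statement via the standard planar-embedding characterization of small separators: at $x$ the cyclic order of neighbors splits into contiguous arcs indexed by the components of $\tilde S - \{x, y\}$, and the transitions between consecutive arcs happen through faces of $\tilde S$ containing $y$ on their boundary. In particular, there are at least two distinct faces $f, g$ around $x$ in $\tilde{S}$ that contain $y$. I would then split on whether $y$ is interesting. If $y$ is interesting, both $f, g$ have at least two interesting boundary vertices, hence by the structural observations they correspond to distinct faces $F, G$ of the extended skeleton with $\{x, y\} \subseteq V(F) \cap V(G)$. Combining this with \Cref{prop:3-conn+:three-shared-vertex-bound} and examining the possible shared-boundary structures between $F$ and $G$ (distinguishing whether they share only the edge $xy$, a subdivided edge, or more) yields a contradiction with the (almost) 3-connectedness of the extended skeleton.

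The main obstacle is the case that $y$ is non-interesting. Since the subdivision vertices introduced by unproblematic-component replacements are terminals and hence interesting, $y$ must be an original vertex of the skeleton with no special role. The difficulty is that detours — used to trace a path between the purported components of $\tilde S - \{x, y\}$ through the reductions — along merged boring faces or along $z$-private face boundaries might now themselves pass through the non-interesting $y$. I would resolve this by using 3-connectedness of the underlying skeleton to produce three internally disjoint $a$-$b$ paths between representatives of the two purported components, at least one of which avoids $\{x, y\}$ in the extended skeleton, and then carefully selecting which side of each merged-face boundary to follow when a detour is needed — using that $y$'s face-incidences in the extended skeleton are tightly restricted because the two faces around $x$ that contain $y$ must come from the two sides of a single (possibly subdivided) skeleton edge $xy$. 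This yields a path in $\tilde S - \{x, y\}$ between the two supposed components, contradicting the separator assumption.
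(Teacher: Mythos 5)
Your high-level instinct -- argue by contradiction and use 3-connectivity of the original skeleton to produce a connection between the two alleged components -- matches the paper, but the execution diverges and the hardest step is left unproved. First, the ``$y$ interesting'' branch does not actually reach a contradiction: two faces of an (almost) 3-connected planar graph sharing the two endpoints of an edge is completely normal (any edge of $K_4$), and Proposition~\ref{prop:3-conn+:three-shared-vertex-bound} only forbids \emph{four} shared vertices. A contradiction can only come from showing that the two alleged components of $\tilde S - \{x,y\}$ are in fact still joined by a surviving path, which your face-incidence analysis does not deliver. Second, your translation of the separator condition into ``two faces around $x$ containing $y$'' presupposes that $\tilde S$ is 2-connected; but $\tilde S$ need not even be 2-connected after the reductions -- that is precisely why the paper rigidizes it afterwards -- and the cutvertex case is not ``analogous'' in the form you state, since no second vertex $y$ appears on the transition faces there.

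The decisive gap is in the case you yourself flag as the main obstacle: showing that some skeleton path between the two components \emph{survives} all three reduction rules. ``Carefully selecting which side of each merged-face boundary to follow'' is exactly where the work lies, and you give no mechanism guaranteeing that the detours are not themselves deleted by a later rule application or forced through $x$. The paper's mechanism is a minimality trick: among all $z_1$--$z_2$ paths in the skeleton avoiding $\{x,y\}$ (with $z_i$ a neighbour of $x$ in $C_i$), pick $P$ so that $P$ together with $x$ encloses an inclusion-minimal region. Then every edge of $P$ lies on a face containing the interesting vertex $x$, so boring edge removal cannot delete it; private face merging could only remove it together with $x z_1$ or $x z_2$, which are still present; and boring edge contraction cannot delete it because $P + xz_1 + xz_2$ is a cycle. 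This single choice replaces both your case distinction on $y$ and your detour construction, and without it (or an equivalent) your argument does not close.
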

\begin{proof}
    We first show that no cutvertex of \(\tilde{S}\) is an interesting vertex.
    Assume for contradiction, that \(x\) is such a cutvertex or in a 2-separator of \(\tilde{S}\) together with \(y \in V(\tilde{S}) \setminus \{x\}\).
    Let \(C_1\) and \(C_2\) be two arbitrary distinct connected components of \(\tilde{S} - x\) or \(\tilde{S} - \{x,y\}\) that are connected to \(x\).
    Because the skeleton of \(G_t\) is 3-connected, there is a path from a neighbor \(z_1\) of \(x\) in \(C_1\) to a neighbor \(z_2\) of \(x\) in \(C_2\) avoiding \(y\).
    We consider \(C_1\), \(C_2\) and such a path \(P\) whose embedding which together with \(x\) encloses an inclusion-minimal region of \(\mathbb{R}^2\) in the embedding of the skeleton of \(G_t\).
    All of the edges of \(P\) must have been deleted in the construction of \(\tilde{S}\) from the skeleton of \(G_t\).
    Edges were only deleted during boring edge removal, private face merging and boring edge contraction.

    All edges of \(P\) are together on a face with \(x\) by choosing \(P\) to be minimal with respect to the enclosed region with \(x\).
    Hence no edge of \(P\) is deleted during boring edge removal.
    During boring face merging, \(P\) can only be deleted together with \(xz_1\) or \(xz_2\), contradicting that these edges are still in \(\tilde{S}\).
    Finally, during boring edge contraction the only edges that are deleted are in 1-connected components of the graph arising from the skeleton of \(G_t\) after applying boring edge removal and private face merging.
    This is obviously not true of any edge in \(P\) as \(P + \{xz_1,xz_2\}\) forms a cycle.

    This means \(P\) is still in \(\tilde{S}\) contradicting the assumption that \(x\) is a cut vertex or \(\{x,y\}\) is a 2-separator.
\end{proof}

The following theorem essentially shows instance equivalence of what will be -- up to the replacement of virtual components -- the small nice kernel for \(G_t\).
\begin{restatable}{theorem}{nicekernel}
    Let \(\tilde{G}\) and \(\tilde{S}\) be the graphs obtained from \(G_t\) and its skeleton respectively by exhaustively applying boring edge removal, private face merging and then boring edge contraction.
    \(\tilde{G}^{\otimes U}\) where \(U\) is the set of all cut-vertices and vertices that are contained in a separator of size two in \(\tilde{S}\) is a nice kernel of \(G_t\) for \textsc{Face Cover Number}.
\end{restatable}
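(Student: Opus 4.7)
The plan is to bootstrap off \Cref{prop:Rnode-embeddedskeletonkernel}, which says that $\tilde{G}$ is already a nice kernel of $G_t$ for \textsc{Embedded Face Cover Number}, and show that the rigidization simultaneously pins down a unique embedding of the skeletal part and is transparent to face-cover computations.

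First, I would observe that each of the three reductions preserves connectivity of the skeleton (in each case the deleted or contracted edges lie in regions that remain linked through the rest of the skeleton via interesting vertices). Consequently, \Cref{lem:Rnode-rigidization} applied to $\tilde{S}$ yields that $\tilde{S}^{\otimes U}$ is 3-connected and hence has a unique planar embedding up to homeomorphism. Since rigidization does not touch any virtual component, the virtual components of $\tilde{G}^{\otimes U}$ coincide with those of $G_t$, and the skeleton of $\tilde{G}^{\otimes U}$ (in the SPR sense) is exactly $\tilde{S}^{\otimes U}$. For (K1), the corners $c_1, c_2$ are interesting vertices and so by \Cref{lem:Rnode-nospecialincut} they are not in $U$; hence the corner edge $c_1c_2$ has no endpoint in $U$ and is preserved by rigidization, and similarly no reduction rule can destroy it.

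For (K2) and (K3), I would set up a canonical bijection between the embeddings of $G_t$ and the embeddings of $\tilde{G}^{\otimes U}$. The skeleton of $G_t$ is 3-connected (since $t$ is an R-node and the enhancement only adds an edge), and we just saw that $\tilde{S}^{\otimes U}$ is also 3-connected; each therefore has a unique embedding. Hence an embedding of either graph is fully determined by a choice of an embedding and a flip of each virtual component, and since the virtual components together with their attaching corner edges are the same in both graphs, these choices identify canonically. Under this bijection, the embedded face cover numbers match through the chain $G_t \leftrightarrow \tilde{G} \leftrightarrow \tilde{G}^{\otimes U}$: the first equality is \Cref{prop:Rnode-embeddedskeletonkernel}, and the second holds because rigidization introduces only new faces inside the small pockets around vertices of $U$, whose boundaries consist entirely of subdivision vertices and vertices of $U$---none of which are interesting, and in particular none are terminals. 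Any face cover of $\tilde{G}^{\otimes U}$ using such a pocket face can be swapped for an equivalent-sized one that avoids it. Because the corner edge is preserved, it bounds two faces, one on each side, just as in $G_t$, so external faces are in correspondence and the refined quantities $\fcn_0, \fcn_1, \fcn_2$ (and the variants in (K3) where some corners are determinalized) match individually, both when they are at most $k$ (giving equality) and when they exceed $k$ (showing both sides do).

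The main obstacle is the face-cover bookkeeping through rigidization: one must verify that the newly created pocket faces really are terminal-free and that every embedded face cover of $\tilde{G}^{\otimes U}$ can be projected back onto one of $\tilde{G}$ with the same size and the same count of external faces. A secondary subtlety is to confirm that the three skeleton reductions preserve connectivity, so that \Cref{lem:Rnode-rigidization} is applicable in the first place.
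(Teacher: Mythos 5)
Your proposal is correct and takes essentially the same route as the paper: both chain \(G_t \leftrightarrow \tilde{G}\) via \Cref{prop:Rnode-embeddedskeletonkernel} and \(\tilde{G} \leftrightarrow \tilde{G}^{\otimes U}\) via the unique embedding of the 3-connected \(\tilde{S}^{\otimes U}\) (which is forced to coincide with the restriction of the skeleton's unique embedding), together with the observation that rigidization neither removes interesting vertices from face boundaries nor creates faces covering any new set of terminals, so the generalized face cover numbers \(\fcn_0,\fcn_1,\fcn_2\) and their (K3) variants transfer. The only cosmetic difference is that you package the paper's two directions of inequality as a single canonical bijection between embeddings of \(G_t\) and of \(\tilde{G}^{\otimes U}\).
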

\begin{proof}
    (K1) is trivially satisfied as no interesting vertices are removed.
    We now argue (K2) and (K3).

    Fix an embedding of \(G_t\) in which its minimum face cover number using a specified number of external faces and covering a specified set of corners (we refer to this as \emph{generalized face cover number}) can be realized.
    By \Cref{prop:Rnode-embeddedskeletonkernel}, \(\tilde{G}\) is a nice kernel for \(G_t\) as an instance for \textsc{Embedded Face Cover Number}.
    By \Cref{lem:Rnode-rigidization}, \(\tilde{S}^{\otimes U}\) has a unique embedding which necessarily coincides with the unique one of the skeleton of \(G_t\) on its restriction to \(\tilde{S}\) where copies of edges are embedded at an \(\varepsilon\)-distance from the corresponding original edge and subdivision vertices are drawn in cycles at \(\varepsilon\)-distance around the corresponding vertices in \(U\).
    Further, by construction, replacing the virtual edges in this embedding by the embeddings of the corresponding virtual components of \(G_t\) results in an embedding of \(\tilde{G}^{\otimes U}\) in which the faces are the same as in the embedding of \(\tilde{G}\) apart from some vertices in \(U\) being replaced by subdivision vertices in some face cycles and there being new face cycles consisting only of vertices in \(U\) and subdivision vertices.
    By \Cref{lem:Rnode-nospecialincut}, this means that the sets of terminals, corners and virtual edge endpoints from the skeleton of \(G_t\) that occur together on face boundaries are the same in the fixed embedding of \(\tilde{G}\) and the described embedding of \(\tilde{G}^{\otimes U}\).
    This means that the abstract graph \(\tilde{G}^{\otimes U}\) has at most the same face cover number as the embedded graph \(\tilde{G}\), i.e.\ the generalized face cover number of \(G_t\) as an abstract graph.
    
    Conversely, we can fix an embedding of \(\tilde{G}^{\otimes U}\) in which its minimum generalized face cover number can be realized.
    This naturally induces an embedding of \(\tilde{G}\) by restriction and tracing the connections from cycles at \(\varepsilon\)-distances around \(U\) to the respective vertices in \(U\).
    By construction in this embedding, faces are the same as in the embedding of \(\tilde{G}^{\otimes U}\) apart from some subdivision vertices in some face cycles being replaced by vertices in \(U\) and faces consisting only of vertices in \(U\) and subdivision vertices being removed.
    By \Cref{lem:Rnode-nospecialincut}, this means that the sets of terminals, corners and virtual edge endpoints from the skeleton of \(G_t\) that occur together on face boundaries are the same in the fixed embedding of \(\tilde{G}^{\otimes U}\) and the described embedding of \(\tilde{G}\).
    We can appropriately extend the described embedding of \(\tilde{G}\) to \(G_t\) by extending the contained subembedding of \(\tilde{S}\) to the unique one of the skeleton of \(G_t\) and then replacing subgraphs attached at the endpoints of virtual edges by their corresponding virtual components drawn in a way that realizes the generalized face cover number variant relevant for that respective virtual component.
    Note that some generalized number face cover number variant has to be relevant as otherwise the virtual component or its replacement in the construction of \(\tilde{G}\) admit a smaller face cover with the same usage of external faces contradicting the minimality of the considered overall face cover.
    The subembedding of \(\tilde{S}\) is compatible with the unique embedding of the skeleton of \(G_t\) because the subembedding of \(\tilde{S}\) arises from the unique embedding of \(\tilde{S}^{\otimes U}\) which in turn arises from the restriction of the unique embedding of the skeleton of \(G_t\) by adding embeddings of edge copies at \(\varepsilon\)-distance from the original edges and cycles at \(\varepsilon\)-distance around vertices in \(U\).
    By the definitions of boring edge removal, private face merging and boring edge contraction, this extension does not subdivide faces of \(\tilde{G}\) in a way in which different sets of terminals, corners and virtual edge endpoints from the skeleton of \(G_t\) lie together in faces.
    This means that the abstract graph \(\tilde{G}^{\otimes U}\) has at least the same generalized face cover number as \(G_t\) as an abstract graph.
\end{proof}
As a final step, we replace all virtual components by their corresponding constant-size gadgets if they are semiproblematic and their inductively assumed kernels if they are problematic.
By \Cref{lem:basic-steps} this results in a nice kernel.
We can also show that this final nice kernel which we call \(K\) is also small.
\begin{restatable}{theorem}{smallnice}
    \(K\) as obtained by the process described in this subsection is small.
\end{restatable}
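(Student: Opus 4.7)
The plan is to argue that the final kernel \(K\) satisfies (K4) by establishing the cubic upper bound \(|V(K)| \in \mathcal{O}((I^*)^3)\), where \(I^*\) denotes the minimum number of internal faces used in any minimum face cover of \(G_t\). I decompose \(|V(K)|\) into three contributions: the rigidized kernel skeleton \(\tilde{S}^{\otimes U}\), the constant-size gadgets replacing terminal-free, unproblematic, and semi-problematic virtual components, and the inductively assumed nice kernels \(K_i\) replacing the problematic virtual components. The first contribution has size \(\mathcal{O}(k^3)\) with \(k=\fcn(G_t)\), combining the \(\mathcal{O}(k^3)\) bound on \(|V(\tilde{S})|\) with the constant-factor blowup of rigidization. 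The second contribution has size \(\mathcal{O}(k^2)\), since after terminal-heavy face reduction the graph has at most \(3k^2+k\) terminals and at most \(k^2\) semi-problematic virtual edges, and each of the non-terminal-free replacement gadgets is of constant size.

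For the third contribution, the inductive hypothesis provides \(|V(K_i)| \leq c\cdot (I_i^*)^3\) for each problematic virtual component, where \(I_i^*\) is the minimum internal face count of \(G_{t_i}\)'s enhancement. The critical step is to control \(\sum_i (I_i^*)^3\) by a constant multiple of \((I^*)^3\). I fix a minimum face cover \(F\) of \(G_t\) realizing \(I^*\) internal faces and observe that its restriction to each \(G_{t_i}\)'s enhancement is itself a face cover of size at most \(J_i+2\), where \(J_i\) counts the faces of \(F\) lying strictly inside \(G_{t_i}\); hence \(I_i^*\leq \fcn(G_{t_i}) \leq J_i+2\). Internal faces living strictly inside distinct virtual components are vertex-disjoint, so \(\sum_i J_i \leq I^*\), and the observation bounding the number \(p\) of problematic virtual components yields \(p\leq I^*\). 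Combining these gives \(\sum_i I_i^* \leq I^* + 2p \leq 3I^*\); since also \(I_i^* \leq I^*+2\), the sharper estimate \(\sum_i (I_i^*)^3 \leq (\max_i I_i^*)^2 \cdot \sum_i I_i^* \in \mathcal{O}((I^*)^3)\) follows.

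Combining all three contributions and using \(I^* \geq k-2\) to rewrite the skeleton and gadget contributions as \(\mathcal{O}((I^*)^3)\) yields \(|V(K)| \in \mathcal{O}((I^*)^3)\), which is exactly the cubic form required by (K4). The main obstacle is the constant tracking across the recursion: since (K4) asks for a single universal constant \(c\), the three contributions must simultaneously fit inside a cubic in \(I^*\) whose leading constant is compatible with the one inherited from the inductive kernels \(K_i\). A careful choice of \(c\) large enough to dominate the skeleton and gadget terms, combined with the slack provided by bounding \(\sum_i (I_i^*)^3\) via the sharper \((\max_i I_i^*)^2\cdot \sum_i I_i^*\) rather than the looser \((\sum_i I_i^*)^3\), is the technical crux needed to close the induction.
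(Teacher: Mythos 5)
Your decomposition of \(|V(K)|\) and the combinatorial facts you collect (\(\sum_i J_i \le I^*\), \(p \le I^*\), \(I_i^* \le J_i+2\)) are sound, but the direction in which you run the induction does not close. Condition (K4) must hold with one universal constant \(c\) at every node of the SPR-tree, whose depth is unbounded, so the recursive step cannot tolerate any multiplicative loss: if from the hypothesis \(|V(K_i)| \le A\,(I_i^*)^3\) you can only conclude \(|V(K)| \le A'\,(I^*)^3\) with \(A' > A\), the constant blows up geometrically with the depth and the kernel is no longer polynomial. Your estimate \(\sum_i (I_i^*)^3 \le (\max_i I_i^*)^2 \sum_i I_i^* \le 3(I^*+2)^2 I^*\) loses exactly such a factor (the \(3\) coming from \(\sum_i I_i^* \le I^* + 2p \le 3I^*\)), and enlarging \(c\) cannot repair this because both the hypothesis and the target scale with \(c\) in the same way. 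Even if one sharpens the sum of cubes via convexity to \((I^*+2)^3 + \mathcal{O}(I^*)\), the skeleton-plus-gadget contribution is itself a full cubic term; since a single problematic child can already have \(I_i^* = I^*\), so that \(\sum_i (I_i^*)^3\) alone essentially exhausts the budget \((I^*)^3\), there is no room left to pay for the skeleton for any choice of \(c\). A further, smaller issue: the \(\mathcal{O}(k^3)\) bound on \(|V(\tilde{S})|\) proved in this section is in terms of the global parameter \(k\), not of \(\fcn(G_t)\) (terminal-heavy face reduction keeps up to \(3k+1\) terminals per face with the global \(k\)), so \(|V(\tilde{S})| = \mathcal{O}(\fcn(G_t)^3)\) does not follow from the stated lemmas, and \(I^* \ge k-2\) holds only for the local quantity.

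The paper's proof runs the inequality the other way, and this is what lets the constant survive: it lower-bounds the number of internal faces of \(G_t\) by a sum of per-component lower bounds, of the form \(I^* \ge p_s + \sum_i c\, s_i^{1/3}\), using that deep faces of distinct problematic components, the faces charged to semi-problematic components, and skeleton-internal faces are pairwise disjoint and cannot substitute for one another; it then applies subadditivity of the cube root, \(\sum_i s_i^{1/3} \ge \bigl(\sum_i s_i\bigr)^{1/3}\), which preserves \(c\) exactly, and absorbs the additive contributions of the skeleton and the constant-size gadgets into the slack term \(p_s\) or into the skeleton's own face cover number via a case analysis. If you want to keep an upper-bound formulation, you would need to charge every additive chunk of \(|V(K)|\) to internal faces of \(G_t\) that are provably not already charged to another chunk; that disjointness bookkeeping is precisely what the paper's case analysis performs.
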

\begin{proof}
    We call internal faces that are internal faces that are also internal for some problematic virtual component \emph{deep} and other internal faces of \(G_t\) \emph{skeleton-internal}.
By construction, \(K\) arises from a 3-connected graph \(S\) with \(\mathcal{O}(k^3)\) vertices with at most \(\mathcal{O}(k^2)\) vertices per face, by replacing some of the edges by constant size gadgets and others by inductively assumed small nice kernels, where \(k\) is the maximum over the number of problematic and semi-problematic virtual components of \(G_t\) and the face cover number of the skeleton of \(G_t\).

We first consider the case that \(k\) is the number of problematic and semi-problematic virtual components of \(G_t\).
In \(G_t\), each semi-problematic virtual component requires the inclusion of at least one of its internal faces or one of its external faces in a way that these are pairwise distinct.
None of these are deep faces of problematic components.
Let \(d\) be such that \(x \leq d \cdot {k^\circ}^3\) where \(x\) is the number of vertices in \(S\) after the replacement of edges by constant-size components.
We can assume that \(c \geq d\).
The size of \(K\) is the sum of \(x\) and the sizes \(s_i\) of the small nice kernels of the problematic virtual components.
For sufficiently many (some constant depending on \(d\)) \(s_i\),
\begin{align}\left(\sum_i d + s_i\right)^{\frac{1}{3}} \leq \sum_i s_i^{\frac{1}{3}}.\label{eq:si}\end{align}
With this, we have
\[c \cdot \left(x + \sum_i s_i\right)^{\frac{1}{3}} \leq p_s + c \left(\sum_i d + s_i\right)^{\frac{1}{3}} \leq p_s + \sum_i c \cdot s_i^{\frac{1}{3}}\]
where \(p_s\) is the number of semi-problematic virtual components of \(G_t\).
This is a lower bound for the number of internal faces needed to cover \(G_t\).
If there are not sufficiently many \(s_i\), for (\ref{eq:si}) to hold, \(k \in \mathcal{O}(p_s)\) and we can immediately obtain a desired bound by setting \(x \leq c \cdot p_s^3\).

It remains to consider the case in which \(k\) is the face cover number of the skeleton of \(G_t\) which we divide into two subcases.

If the face cover number of the skeleton of \(G_t\) is at most four, \(S\) is of constant size this is also true after replacing semi-problematic virtual components.
That means the size of \(K\) is the sum of a constant and the sizes \(s_i\) of the small nice kernels of the problematic virtual components.
Deep faces do not interact with skeleton-internal faces or deep faces of other problematic virtual components.
Hence if they are necessary in any face cover of a virtual component, they are also needed in any face cover of \(G_t\) and hence \(\sum_i c \cdot s_i^{\frac{1}{3}} \geq c \cdot \left(\sum_i s_i\right)^{\frac{1}{3}}\) lower bounds the number of internal faces of \(G_t\) required in a face cover of \(G_t\).

Otherwise, \(k \leq 2k^\circ\) where \(k^\circ\) is the number of skeleton-internal faces required in a face cover of \(G_t\) and the number of problematic virtual components.
Hence \(|V(S)| \in \mathcal{O}({k^\circ}^3)\) and even when replacing edges by constant size gadgets this holds; we can assume \(c\) to have been chosen such that \(x \leq c \cdot {k^\circ}^3\) where \(x\) is the number of vertices in \(S\) after the replacement of edges by constant-size components.
The size of \(K\) is the sum of \(x\) and the sizes \(s_i\) of the small nice kernels of the problematic virtual components.
Again, deep faces do not interact with skeleton-internal faces or deep faces of other problematic virtual components.
Hence if they are necessary in any face cover of a virtual component, they are also needed in any face cover of \(G_t\).
Similarly, deep faces can also not replace skeleton-internal faces in any face cover of \(G_t\).
Hence \(c \cdot x^{\frac{1}{3}} + \sum_i c \cdot s_i^{\frac{1}{3}} \geq c \cdot \left(x + \sum_i s_i\right)^{\frac{1}{3}}\) lower bounds the number of internal faces of \(G_t\) required in a face cover of \(G_t\).
\end{proof}

\subsection{S-node}
\label{sec:snode}

In an S-node $t$, we first take care of the real edges in the skeleton and terminal-free virtual components as follows.

\begin{rr}\label{rr:S_real}
    If there are at least two terminals in the skeleton of the S-node that are not corner vertices, turn all but one into non-terminals. If there is a path of length at least two consisting of real edges, replace it by a single real edge.
\end{rr}
\begin{rr}\label{rr:S_terminal_free}
    Contract each terminal-free virtual component to a vertex.
\end{rr}

Our next task is to reduce the number of unproblematic components. If we consider only $\fcn_1$ and $\fcn_2$, it would suffice to keep only one unproblematic component (since all the other ones will be covered by the outer faces that are in the face cover). However, in case of $\fcn_0$, we do not use any of the outer faces in the face cover, so we have to use internal faces of the unproblematic components. In this case, having at least $k+1$ unproblematic components guarantees that $\fcn_0$ will be larger than $k$.

If there is an unproblematic component without internal faces (e.g.\ a terminal subdivided edge), we say that $\fcn_0=\infty$, and we an show that the same holds after applying the reduction rules (i.e.\ after applying reduction rules, there is no face cover without external faces). 

\begin{rr}\label{rr:S_unproblematic}
     If there are at least $k+1$ unproblematic virtual components, take all but $k+1$ of them and contract all their edges. For each remaining unproblematic component, if it has a face cover consisting of one internal face, replace it by a triangle two of whose vertices are corner vertices of the component, and the third vertex is a terminal. Otherwise, replace it by a terminal subdivided edge.
\end{rr}

For the same reason, we need to keep $k+1$ semi-problematic components:
\begin{rr}\label{rr:S_semi-problematic}
    If there are at least $k+1$ semi-problematic virtual components, take all but $k+1$ of them and contract all their edges. Replace the remaining semi-problematic components by using the gadget in \Cref{fig:half-problematic}.
\end{rr}

Finally, we replace each unreplaced problematic virtual component by its inductively assumed kernel for its enhancement restricted to the virtual component.
\begin{restatable}{lemma}{snicekernel}
 Consider an S-node $t$ and let $\tilde{G_t}$ be the graph obtained by applying the reduction rules exhaustively. Then $\tilde{G_t}$ is a nice kernel for $G_t$.
 \end{restatable}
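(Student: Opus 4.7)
My plan is to verify the three nice-kernel properties (K1), (K2), and (K3) for $\tilde{G_t}$, which is the graph obtained from $G_t$ by exhaustively applying \Cref{rr:S_real,rr:S_terminal_free,rr:S_unproblematic,rr:S_semi-problematic} and then, via \Cref{lem:basic-problematic}, replacing every remaining problematic virtual component by its inductively assumed nice kernel. Property (K1) is essentially immediate: the corners $c_1,c_2$ of $G_t$ lie on the skeleton cycle and no rule touches them -- they are not turned into non-terminals by \Cref{rr:S_real}, the corner edge is the virtual edge to the parent and so is never part of a real-edge path contracted by the second clause of \Cref{rr:S_real}, and no child virtual component contains a corner in its interior. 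Hence the corner vertices survive and the corner edge in the enhancement remains intact.

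For (K2) and (K3) I would process each rule in turn and exhibit compatible face-cover constructions in both directions that preserve the number of external faces used and the set of uncovered corners (so that the (K3) version follows by the same arguments with corners excluded from $T$). The analysis splits on the index $i$. When $i \in \{1,2\}$, every non-corner terminal on the skeleton cycle lies on both external faces of the enhancement, so the first clause of \Cref{rr:S_real} and the contractions in \Cref{rr:S_terminal_free,rr:S_unproblematic,rr:S_semi-problematic} never remove a terminal that is not already covered by the chosen external face. The second clause of \Cref{rr:S_real} and \Cref{rr:S_terminal_free} alter only terminal-free structure and preserve face covers exactly. The specific replacements in \Cref{rr:S_unproblematic} (triangle when a single internal face suffices, terminal-subdivided edge otherwise) and in \Cref{rr:S_semi-problematic} (the gadget of \Cref{fig:half-problematic}) are designed to mirror the face-cover options of the original components, following the same template as in \Cref{lem:basic-steps}.

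The main technical obstacle is the $i=0$ case. Here every terminal must be covered by an internal face of some virtual component, and in an S-node these internal faces cannot be shared across distinct virtual components because the skeleton is a cycle. Two facts drive the argument: first, if after \Cref{rr:S_terminal_free} there are more than $k+1$ unproblematic or more than $k+1$ semi-problematic components containing an interior terminal, then $\fcn_0 > k$ on both sides of \Cref{rr:S_unproblematic} (resp.\ \Cref{rr:S_semi-problematic}), so (K2) is vacuously satisfied for that rule; second, for the retained $k+1$ components the chosen replacements precisely preserve whether a single internal face suffices to cover the component, so the contribution to $\fcn_0$ is matched exactly. Combined with the observation that \Cref{rr:S_real} leaves at most one non-corner skeleton terminal (and any surviving such terminal is a corner of a remaining virtual component, hence coverable by a single internal face of it), this yields a matching of $\fcn_0$ between the two sides up to the threshold $k$. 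The analysis for (K3) is analogous: excluding corners from $T$ only relaxes the covering requirement and does not affect which rules apply or any of the safeness arguments.
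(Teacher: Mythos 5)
Your overall plan matches the paper's proof of this lemma: (K1) is immediate since no rule touches the corners or the corner edge; (K2)/(K3) are argued rule by rule with a case split on the number $i$ of external faces used; the replacements of \Cref{rr:S_unproblematic} and \Cref{rr:S_semi-problematic} are justified by mirroring \Cref{lem:basic-steps}; problematic components are handled by the inductively assumed kernels via \Cref{lem:basic-problematic}; and the $i=0$ case is settled by the same $(k+1)$-threshold counting (if more than $k+1$ unproblematic or semi-problematic components survive, $\fcn_0>k$ on both sides because their internal faces cannot be shared along the cycle skeleton; otherwise the retained replacements match the internal-face contributions exactly).

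The one concrete step where you diverge is the parenthetical in your $i=0$ analysis: you assert that a surviving non-corner skeleton terminal ``is a corner of a remaining virtual component, hence coverable by a single internal face of it.'' This is not justified: such a terminal may be incident only to real edges of the skeleton, and even when it is a corner of a virtual component it need not lie on any internal face of that component (for instance when it has degree one in the component, both faces at it in the component's enhancement contain the corner edge and are therefore external). The paper takes the opposite route: it argues that a non-corner terminal of the skeleton can only be covered using an external face, so whenever such a terminal is present the $i=0$ case is vacuous ($\fcn_0>k$ before and after reduction), and only then proceeds with the partition of the internal-only face cover into its unproblematic, semi-problematic and problematic parts. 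Note also that your claim, if it were needed, would sit uneasily with the safeness of the first clause of \Cref{rr:S_real}, which rests on all non-corner skeleton terminals being equivalent with respect to coverage; a terminal coverable by an internal face of one particular adjacent component would break that equivalence. Replacing your parenthetical by the paper's ``no internal-only cover exists in this case'' dichotomy makes the rest of your argument go through as in the paper.
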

\begin{proof}
    It is easy to see that after applying any reduction rule, the condition (K1) is satisfied, since we do not delete vertices and we do not remove the corner edge.
    
    Let us show that after applying \Cref{rr:S_real}, the resulting graph satisfies (K2) and (K3).
    Terminals in the skeleton of $t$ that are not corner vertices are covered if and only if we use at least one external face of $t$, i.e. covering one such terminal is equivalent to covering all such terminals. Replacing paths of real edges by single edges does not affect the size of the face cover or the number of external faces used, so the condition (K2) is satisfied. Since the corner vertices and the faces containing them are also unaffected by this reduction rule, (K3) also holds.
    
    Next, we show that after applying Reduction Rules~\ref{rr:S_terminal_free}, \ref{rr:S_unproblematic} and~\ref{rr:S_semi-problematic}, the resulting graph is a nice kernel. 
    Consider an embedding of $G_t$ witnessing $\fcn_0(G_t, T\cap V(G_t))$ let $F$ be the corresponding face cover. If there are terminals in the skeleton that are not corner vertices, we have no face cover without using external faces, and this remains true after applying the reduction rules. Thus we can assume that this is not the case. 
    
    Without loss of generality, we may assume that $F$ does not contain any internal faces of terminal-free virtual components and that it contains at most one internal face of each unproblematic and semi-problematic component. We construct a face cover $\tilde{F}$ of $\tilde{G_t}$ as follows. Note that $F$ contains no external faces, so we can partition it into $F_u, F_s$ and $F_p$, namely the sets of internal faces of unproblematic, semi-problematic and problematic components. 
     We define $\tilde{F_u}$ as the set of internal faces in the triangles constructed by applying \Cref{rr:S_unproblematic}, and $\tilde{F_s}$ the set of internal faces of the gadgets constructed by \Cref{rr:S_semi-problematic}. We define $\tilde{F_p}$ as the set of internal faces of kernels of problematic components covering all their terminals except possibly the corner vertices (i.e. covering precisely those corner vertices that are covered by $F_p$). Let $\tilde{F}=\tilde{F_u}\cup\tilde{F_s}\cup\tilde{F_p}$. It is easy to see that $\tilde{F}$ is a face cover of $\tilde{G_t}$ and that it does not contain either of the external faces.
     
     By construction, if $|F_u|\leq k$, then no unproblematic component was contracted, so $|F_u|=|\tilde{F_u}|$. On the other hand, if $|F_u|>k$, we have at least $k+1$ unproblematic components, and each of them requires one internal face, so $|\tilde{F_u}|>k$.
    The same holds for $F_s$ and $F_p$ (by the induction hypothesis). The proof of (K3) is analogous.

    Let $F$ now be the face cover of $G_t$ witnessing $\fcn_1(G_t, T\cap V(G_t))$. In this case, we can assume that $F$ does not contain any internal faces of unproblematic components (since it contains an external face that covers all their terminals). We define $F_s, F_p, \tilde{F_s}, \tilde{F_p}$ as above and obtain a face cover of $\tilde{G_s}$ consisting of $\tilde{F_s}, \tilde{F_p}$ and one external face.

    If $F$ is a face cover witnessing $\fcn_2(G_t, T\cap V(G_t))$, we can assume that it does not contain any internal faces of unproblematic and semi-problematic components and obtain a face cover of $\tilde{G_s}$ consisting of two external faces and $\tilde{F_p}$.  

    The other direction, i.e. constructing a face cover of $G_t$ from a face cover of $\tilde{G_t}$ is similar.
\end{proof} 

\begin{restatable}{lemma}{ssmall}
    After applying \Cref{rr:S_real}, \ref{rr:S_terminal_free}, \ref{rr:S_unproblematic}, \ref{rr:S_semi-problematic} exhaustively on $G_t$, we obtain a small nice kernel.
\end{restatable}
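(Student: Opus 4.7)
My plan is to show that $|V(\tilde{G_t})| = O((k^\circ + 1)^3)$, where $k^\circ$ denotes the minimum number of internal faces used in any minimum face cover of $G_t$; this yields (K4) after absorbing constants into $c$.

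The first step is to decompose $|V(\tilde{G_t})|$ according to the replacements performed by the reduction rules. The cycle skeleton keeps at most one remaining real edge (by \Cref{rr:S_real}), at most $k+1$ unproblematic components (each replaced by a constant-size subgraph by \Cref{rr:S_unproblematic}), at most $k+1$ semi-problematic components (each replaced by the constant-size gadget of \Cref{fig:half-problematic} by \Cref{rr:S_semi-problematic}), and at most $k$ problematic components, since more would witness a \no-instance. Replacing each retained problematic component by its inductively computed small nice kernel $K_i$ yields
\[ |V(\tilde{G_t})| \leq C_1 \cdot (p_u + p_s + 1) + \sum_{i} |V(K_i)| \]
for some constant $C_1$, where $p_u, p_s$ are the retained counts and the sum is over retained problematic components.

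The second step uses a structural property specific to S-nodes: the two external faces of $G_t$, namely the inside and outside of the cycle skeleton, simultaneously serve as the two external faces of every virtual component. Fixing a minimum face cover of $G_t$ with $e \in \{0,1,2\}$ external faces and applying (K2) to each $K_i$, each retained problematic component $C_i$ must contribute at least $\fcn_e(C_i) - e$ of its own internal faces, each retained semi-problematic component contributes at least one internal face whenever $e \leq 1$, and each retained unproblematic component contributes at least one whenever $e = 0$. These internal-face contributions are pairwise distinct across components because each is embedded in its own region of $G_t$ delimited by skeleton edges, so summing them gives a lower bound on $k^\circ$ that controls the quantities appearing in the decomposition.

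The final step combines the inductive hypothesis $|V(K_i)| = O((m_i + 1)^3)$ (where $m_i$ is the minimum internal-face count for a minimum face cover of $C_i$) with the concavity inequalities $(a+b)^{1/3} \leq a^{1/3} + b^{1/3}$ and $\sum a_i^3 \leq (\sum a_i)^3$ for nonnegative reals to conclude $|V(\tilde{G_t})|^{1/3} = O(k^\circ + 1)$. The main obstacle is the degenerate case $e = 2$, in which semi-problematic components and problematic components with $\fcn_2(C_i) = 2$ each contribute zero internal faces to $k^\circ$; in this case, however, $\fcn(G_t) \leq 2$, so the reduction rules keyed to $k$ retain only a constant number of components of each type and each $|V(K_i)|$ with $m_i = 0$ is absolutely bounded, so $|V(\tilde{G_t})|$ itself is absolutely bounded, which is consistent with the $O((k^\circ + 1)^3)$ target.
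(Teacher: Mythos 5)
Your overall strategy is the same as the paper's: decompose $|V(\tilde{G_t})|$ into a constant-size contribution per retained unproblematic/semi-problematic component plus the sizes of the inductively assumed kernels of problematic components, lower-bound the internal-face count by summing pairwise-distinct per-component contributions, and close with cube-root subadditivity and the inductive (K4). The paper is content with the looser chain $|V(\tilde{G_t})|^{1/3}\leq(6(u+s)+\sum p_i)^{1/3}\leq(11k)^{1/3}+\frac{1}{c}(k-2)\leq k/c$, i.e.\ it bounds everything against the global budget $k$; you aim for the sharper bound $|V(\tilde{G_t})|^{1/3}=\mathcal{O}(k^\circ+1)$ against the actual minimum internal-face count, which is closer to the literal wording of (K4).

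However, your handling of the degenerate case does not work, and the problem is not confined to $e=2$. The claim that when $\fcn(G_t)\leq 2$ ``the reduction rules keyed to $k$ retain only a constant number of components of each type'' is false: \Cref{rr:S_unproblematic} and \Cref{rr:S_semi-problematic} retain $k+1$ components each, where $k$ is the \emph{global} parameter, irrespective of $\fcn(G_t)$. Take an S-node whose children are $k+1$ unproblematic components and nothing else; then a minimum face cover is a single external face, so $k^\circ=0$, yet $\tilde{G_t}$ retains all $k+1$ constant-size gadgets and has $\Theta(k)$ vertices, so $|V(\tilde{G_t})|^{1/3}=\Theta(k^{1/3})\neq\mathcal{O}(k^\circ+1)$. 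The same happens with $e=1$ (unproblematic components contribute no internal faces but are retained in number $k+1$), so the obstacle you flag for $e=2$ already kills the $e=1$ case and your proposed fix does not repair it. The non-degenerate part of your argument (the $e=0$ case, and the accounting for problematic components via $m_i$) is fine and matches the paper; to land the lemma you either need to weaken the target to a bound in terms of $k$, as the paper's own computation effectively does, or add a genuinely new argument for why retaining $k+1$ unproblematic and semi-problematic gadgets is compatible with the internal-face count of $\tilde{G_t}$ itself when $k^\circ$ is small.
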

\begin{proof}
   Let $p_1,\dots,p_\ell$ be the sizes of kernels of problematic components in $\tilde{G_t}$ and let $u$ and $s$ be the number of unproblematic and semi-problematic components respectively.    
   Note that by construction, we have that $u\leq k+1$ and $s\leq k+1$. Each problematic and semi-problematic component is replaced by a gadget with at most 6 vertices, so $|V(\tilde{G_t})|\leq 6(u+s)+\sum p_i$. By the induction hypothesis and property (K4), we know that the $i$th problematic component requires at least $c\cdot p_i^{1/3}$ internal faces.
   Now we have
   $$|V(\tilde{G_t})|^{\frac{1}{3}}\leq (6(u+s) +\sum p_i)^{\frac{1}{3}}\leq (6u+6s)^{\frac{1}{3}}+\sum p_i^{\frac{1}{3}}\leq (11k)^{\frac{1}{3}}+\frac{1}{c}(k-2) \leq \frac{k}{c},$$
   for \(k \geq 1\) and \(c \geq \frac{3}{\sqrt{11}}\)
   which directly implies the desired bound on the number of internal faces in the face cover of $\tilde{G_t}$.
\end{proof}

\subsection{P-node}
\label{sec:pnode}
Since we assume that $G$ has no parallel edges, no leaf node of the \SPQR-tree of $G$ is a P-node. While the graph $G$ has no parallel edges, the application of certain reduction rules may create parallel edges. We deal with those edges as follows.

Let $t$ be a parallel node and let $G_t$ be the enhancement of the graph induced on the sub-\SPQR-tree rooted at $t$. In the following discussion, after applying each of the reduction rules exhaustively the graph obtained from $G_t$ will be denoted by $G'_t$.

\begin{rr}\label{rr:par-real}
    Delete all but one real edge of the skeleton of \(G_t\) from $G_t$.
\end{rr}
\begin{restatable}{proposition}{psafeedgedel}
    \Cref{rr:par-real} creates a nice kernel.
\end{restatable}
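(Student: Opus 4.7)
The plan is to verify the three nice-kernel conditions (K1), (K2), (K3) in turn. Condition (K1) is immediate because \Cref{rr:par-real} only removes real edges of the skeleton of $t$, all of which are parallel to and distinct from the corner edge $c_1 c_2$ of the enhancement; in particular, $c_1$, $c_2$, and $c_1 c_2$ are preserved in $G'_t$.

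The core structural observation for (K2) and (K3) is that every real edge of a P-node skeleton is a parallel edge between $c_1$ and $c_2$ with no internal vertex, so its only contribution to any face boundary is the pair $c_1, c_2$ itself. Equivalently, in any fixed embedding, removing such an edge merely merges the two faces that previously shared it, and no terminal is lost from any face in the process.

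For the forward inequality $\fcn_i(G_t) \ge \fcn_i(G'_t)$, I would take an embedding $\mathcal{E}$ of $G_t$ and face cover $F$ witnessing $\fcn_i(G_t, T \cap V(G_t))$, restrict to an embedding $\mathcal{E}'$ of $G'_t$, and set $F' = \{f' \text{ face of } \mathcal{E}' : f' \supseteq f \text{ for some } f \in F\}$. Each face of $\mathcal{E}'$ is a union of faces of $\mathcal{E}$ merged across the deleted real edges, so $F'$ covers the same terminals as $F$ and $|F'| \le |F|$. For the backward inequality, I would take an embedding $\mathcal{E}'$ of $G'_t$ and cover $F'$ and re-insert each removed real edge into the cyclic orders at $c_1, c_2$ infinitesimally close to an existing edge, in a slot that is not cyclically adjacent to the corner edge. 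Each such insertion splits one face into a \emph{trivial} face (bounded only by the new edge and its chosen neighbour, containing only $c_1, c_2$) and an essentially unchanged face carrying the same terminals; hence $F'$ remains a valid face cover of the extended embedding of $G_t$ with the same size.

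The main technical obstacle is matching the count of external faces exactly, because deleting or inserting a real edge adjacent to the corner edge can merge an external face with an inner face and shift this count. I would resolve this by exploiting the full embedding freedom of a P-node: any cyclic reordering of its edges around $c_1, c_2$ yields a valid embedding with the same optimal face-cover numbers. In the forward direction this lets me assume without loss of generality that all deleted real edges are grouped in a cyclic arc not touching either neighbour of the corner edge, so that upon restriction the two external faces of $G_t$ are in natural bijection with those of $G'_t$ and are counted the same in $F$ and $F'$. In the backward direction, the P-node has at least three edges so an insertion slot not adjacent to the corner edge always exists, making the external faces of $G'_t$ and of the extended $G_t$ correspond exactly. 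Together these yield $\fcn_i(G_t)=\fcn_i(G'_t)$ for each $i \in \{0,1,2\}$ (and both sides simultaneously exceed $k$ otherwise). Condition (K3) follows by running exactly the same argument for the subsets of terminals obtained by dropping one or both corners from $T$, since the structural observation about real edges is insensitive to whether $c_1,c_2$ are regarded as terminals.
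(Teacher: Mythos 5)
Your proof is correct and follows essentially the same route as the paper's: deleting parallel real edges only merges faces without uncovering any terminal, and re-inserting them at $\varepsilon$-distance from a surviving edge only creates trivial bigon faces bounded by $c_1,c_2$, so the generalized face cover numbers are preserved in both directions. You are in fact somewhat more explicit than the paper about the external-face bookkeeping (choosing insertion slots away from the corner edge), which the paper dispatches with a single remark.
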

\begin{proof}
    Let $G'_t$ be the graph obtained from $G_t$ after exhaustively applying \Cref{rr:par-real}. Since we do not modify the corners or the corner edge, property (K1) trivially holds good for $G'_t$. We claim that property (K2) is satisfied too. Let $F$ be the face cover of $G_t$ with the minimum number of faces over all embeddings. Abusing notation, we denote the embedding realizing $F$ by $G_t$. Let $F'$ be a face cover in the embedding $G'_t$ obtained from $G_t$ by deleting all but one arbitrarily chosen real edge. Note that deleting an edge cannot increase the number of faces in $G_t$ nor can it increase the number of terminals. This implies $|F'| \leq |F|$. To see the converse, let $F'$ be a minimum face cover of the embedding $G'_t$. We modify the embedding $G'_t$ to get an embedding of $G_t$ by adding the real edges back one by one at $\varepsilon$-distance from the rightmost real edge at each step. This operation does not increase the number of terminals. We possibly change at most one face boundary of $F'$ (the one containing the arbitrary real edge on its boundary). However, we do not change the behavior of that face with respect to covering the terminals as a real edge gets replaced by another real edge. Thus, the faces corresponding to faces of $F'$ in $G_t$ still cover all the terminals. Hence $|F|\leq |F'|$. Note that in any of the above operations we do not change the behavior of $F'$ with respect to containing zero, one or two external faces. This implies that $(K3)$ is satisfied.
\end{proof}

Trivially, there can be at most $k$ problematic virtual components in $G_t$. We will next argue that we can bound the number of terminal-free, unproblematic and problematic virtual components linearly in $k$.

\begin{rr}\label{rr:para-bundle}
 Delete all but one terminal-free virtual component from $G_t$. 
\end{rr}

\begin{restatable}{proposition}{psafebundle}\label{prop:safeness-bundle}
    Applying \Cref{rr:para-bundle} creates a nice kernel.
\end{restatable}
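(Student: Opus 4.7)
My plan is to verify the nice-kernel conditions (K1)--(K3) in turn. Condition (K1) is immediate because \Cref{rr:para-bundle} removes only internal vertices of the deleted terminal-free virtual components, so the corners $c_1, c_2$ and the corner edge are preserved.

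For (K2) and (K3), the central observation is that a terminal-free virtual component has no terminals outside of $\{c_1,c_2\}$, so it contributes to any face cover only through the two wedges it separates in the cyclic arrangement of connections around the dipole $\{c_1,c_2\}$. In particular, in any optimal face cover we may assume without loss of generality that no internal face of a terminal-free component and no wedge lying between two terminal-free components appears, as these faces contain no terminals.

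The easier inequality $\fcn_i(G_t) \le \fcn_i(G'_t)$ (and its analogue for (K3)) is established by taking an optimal embedding of $G'_t$ with a face cover $F'$ and embedding $G_t$ by reinserting the deleted terminal-free components $C_2, \ldots, C_m$ as a consecutive block placed immediately next to the retained component $C_1$ in the cyclic order. The new intra-block wedges carry no terminals, and choosing the block's side to be opposite to the corner edge avoids altering which face is external, so $F'$ lifts directly to a face cover of $G_t$ of the same size and same external-face count.

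The main obstacle is the converse inequality $\fcn_i(G'_t) \le \fcn_i(G_t)$. Here, one exploits the fact that the cyclic order of $G'_t$'s connections around $\{c_1,c_2\}$ is a free design choice of the embedding, independent of what was picked for $G_t$. Given an optimal embedding of $G_t$ with face cover $F$ (normalised as above), I would choose the cyclic order of $G'_t$ so that all non-terminal-free connections (and the corner edge) keep their relative cyclic order from $G_t$, placing $C_1$ into a slot formerly occupied by some terminal-free component. Then every gap between consecutive non-terminal-free connections in $G'_t$ that lost all its terminal-free components contracts to a single wedge, which covers the union of the terminals covered by the first and last wedges of $G_t$'s corresponding gap; this can only decrease the face-cover size, giving $|F'| \le |F|$. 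The delicate point is preserving the external-face count: if $F$ uses a non-external wedge in a gap adjacent to the corner edge, we place $C_1$ into that gap so the wedge remains non-external in $G'_t$; the tension between two corner-adjacent gaps and only one $C_1$ is resolved by first permuting neighbours of the corner edge in $G_t$'s cyclic ordering (a free operation at a P-node that preserves $\fcn_i$), so that at most one corner-adjacent gap has its non-external wedge in $F$. Condition (K3) follows by repeating the same argument with the chosen subset of corners removed from the terminal set.
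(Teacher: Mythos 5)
Your overall strategy coincides with the paper's: condition (K1) is immediate, the direction $\fcn_i(G_t)\le\fcn_i(G'_t)$ is obtained by reinserting the deleted terminal-free components at $\varepsilon$-distance next to the retained one (exactly as in the paper), and the direction $\fcn_i(G'_t)\le\fcn_i(G_t)$ is obtained by deleting the components from a fixed optimal embedding of $G_t$ and letting the adjacent wedges merge. You are also right to single out, as the delicate point, that merging wedges can turn an internal face of $F$ into an external face of $G'_t$, which would corrupt the external-face count required by (K2) for $\fcn_0$; the paper's own proof addresses the external faces only via the weaker observation that an external face bordered by a terminal-free component covers no terminals, and does not discuss internal wedges becoming external.

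However, your resolution of that delicate point is a genuine gap. The claim that one can ``first permute neighbours of the corner edge in $G_t$'s cyclic ordering so that at most one corner-adjacent gap has its non-external wedge in $F$'' is asserted without proof, and it fails in general. Take two virtual components $A$ and $B$ whose enhancements each have both faces incident to the corner edge carrying many terminals while every other face carries few (so that $\fcn_2$ of each is tiny but $\fcn_0$ and $\fcn_1$ are large), together with two terminal-free components $C_1,C_2$. Every embedding of $G_t$ realizing $\fcn_0(G_t)$ must have the cyclic order $e,C_1,A,B,C_2$ (up to symmetry) with $F\supseteq\{w(C_1,A),\,w(A,B),\,w(B,C_2)\}$: both corner-adjacent gaps have their non-external far wedges in $F$, no permutation avoids this, and the single retained terminal-free component can shield only one side of the corner edge in $G'_t$. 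Your argument therefore does not establish $\fcn_0(G'_t)\le\fcn_0(G_t)$ in this situation, and you would need an additional idea (for instance, exploiting a retained real edge or a second buffer) to close this case.
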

\begin{proof}
    Let $G'_t$ be the graph obtained after exhaustively applying \Cref{rr:para-bundle}.
     Note that property (K1) is trivially satisfied by $G'_t$ as we neither modify the corner vertices nor the corner edge. 
    
    To show that property (K2) still holds, we first show that if $\fcn(G_t)\leq k $, then $\fcn(G'_t) \leq k$. Consider an embedding of $G_t$ with a face cover $F$ of size $k$. We obtain $G'_t$ from it by deleting all but one arbitrarily chosen terminal-free virtual component. Let the face cover of $G'_t$ be $F'$. Since deleting virtual components does not increase the number of faces in the P-node, nor does it create extra terminals, $|F'|\leq |F|\leq k$. Also note that if a terminal-free virtual component bordered an external face of $G_t$, this face would not be covering any terminals and can be removed from $F$ without affecting its feasibility. Therefore, we do not change the behavior of the face cover with respect to inclusion of none, one, or both of the external faces. Next, we show that if $\fcn(G'_t)\leq k$ then $ \fcn(G_t)\leq k$. Let $F'$ be a face cover of $G'_t$ of size at most $k$. We add to $G'_t$ all the terminal-free virtual components one by one, embedding them at $\varepsilon$-distance to the right of the rightmost terminal-free virtual component in $G'_t$. We abuse notation and call this embedding $G_t$ as well. This operation changes the boundary of the face of $G'_t$ to the right of the terminal-free component. However, it replaces a terminal-free path between the corners by another terminal-free path. Additionally, it creates new faces, none of which are incident on any terminal (except maybe the corners). More importantly, no new terminals are created and therefore the faces of $F'$ with potentially modified boundaries still form a face cover in $G_t$. Therefore, $\fcn(G_t)\leq \fcn(G'_t)\leq k$.

    Finally, we claim that property (K3) is satisfied. If none of the corners of $G_t$ were terminals, the claim is trivially true. Suppose that at least one corner was a terminal. Let $f$ be the face in $F$ covering the corner and let $f$ belong to a terminal-free component. Note that if $F$ contains any face corresponding to an internal face of the skeleton of the P-node $t$, then such a face would cover the corner vertex. In that case, we can remove $f$ from $F$ without compromising its feasibility. Otherwise, we remove $f$ from $F$ and add a face corresponding to an internal face of the skeleton of the P-node if $f$ is internal and an external face otherwise. This does not uncover any terminal as $f$ belonged to a terminal-free component to begin with. Moreover, the number of external faces in $|F|$ remains the same.
\end{proof}

We claim that after we apply Reduction rules~\ref{rr:par-real} and~\ref{rr:para-bundle} exhaustively, the number of virtual child components of a P-node is bounded linearly in $k$. We say that two virtual child components \emph{share an internal face} if some internal face of the P-node skeleton is formed by merging external faces contained in the respective face covers of the child virtual components. 

\begin{restatable}{proposition}{pnodevircombound}\label{prop:para-bound-vir-comp}
   If there are more than $4k+2$ virtual components in $G'_t$, then there does not exist a face cover of size at most $k$ in $G$. 
\end{restatable}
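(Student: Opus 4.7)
The plan is to assume, for contradiction, that $G$ admits a face cover $F$ of size at most $k$, and argue that each face in $F$ can contribute non-corner terminals from at most two virtual components of $G_t$. Combined with the fact that \Cref{rr:para-bundle} leaves at most one terminal-free virtual component in $G'_t$, this caps the total number of virtual components at $2k + 1$, which is well below $4k + 2$.

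The heart of the argument is a structural claim about the faces of $G$. After \Cref{rr:par-real} the skeleton of the P-node is a dipole on $\{c_1, c_2\}$ whose parallel ``things'' consist of the corner edge, at most one real edge, and exactly one virtual edge per virtual component. In any planar embedding these are cyclically arranged around $\{c_1,c_2\}$ and partition the surrounding area into slot-faces, each of which is bordered by exactly two consecutive parallel things. After regluing the virtual components, any face $f$ of $G$ either lies entirely inside a single virtual component (so only that component's non-corner terminals can appear on its boundary), or it arises from a slot-face merged with the external faces of its two neighbouring virtual components -- possibly further extended through the parent subtree when the corner edge is a virtual edge. In either case, at most two virtual components of $G_t$ contribute non-corner terminals to $f$'s boundary.

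Given this claim, the counting is immediate: the $|F| \leq k$ faces of the supposed face cover can cover non-corner terminals of at most $2k$ virtual components; together with the at most one terminal-free virtual component allowed by \Cref{rr:para-bundle}, this yields $\leq 2k + 1 \leq 4k + 2$ virtual components in $G'_t$, and the contrapositive is exactly the proposition. The main obstacle I anticipate is the clean justification that $f$ cannot reach a third virtual component, especially when the corner edge is virtual with respect to the parent and $f$ extends into the parent subtree. The key ingredient there is that each virtual component is a connected subgraph joining $c_1$ and $c_2$ and hence, by planarity, separates the two slot-faces adjacent to it in the cyclic layout, preventing $f$ from detouring around it to reach a third virtual component.
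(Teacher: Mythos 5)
Your proof is correct, and it takes a genuinely different route from the paper's. The paper argues by component type: problematic components each need a private internal face (at most $k$ of them), unproblematic ones each consume half of a ``slot'' face of the dipole skeleton (at most $2k$), and semi-problematic ones each consume at least one full face's worth, either a private internal face or two half-slots (at most $k$), which together with the single surviving terminal-free component and the real edge gives the stated $4k+2$. You instead use a single planarity fact about the 2-cut $\{c_1,c_2\}$: since each virtual component is (a union of) $\{c_1,c_2\}$-bridges, each containing a $c_1$--$c_2$ path, a theta-graph separation argument shows that any face of any embedding of $G$ can have non-corner vertices of at most two virtual components on its boundary. Every non-terminal-free component has a non-corner terminal that must be covered, so $k$ faces reach at most $2k$ such components, and with the one terminal-free component left by \Cref{rr:para-bundle} you get the sharper bound $2k+1 \le 4k+2$. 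What each approach buys: yours is more elementary, avoids the unproblematic/semi-problematic case analysis entirely, and improves the constant; the paper's fractional charging by component type is weaker here but is essentially the same accounting that is reused afterwards to establish property (K4) for the P-node kernel, so it does double duty. The one point to state carefully in a full write-up is the separation claim itself (that a face cannot meet three components), which you correctly identify as the crux; it follows from each component containing a $c_1$--$c_2$ path and from distinct components meeting only in $\{c_1,c_2\}$.
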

\begin{proof}
    There can be at most $k$ virtual components that need a private internal face to cover any of their terminals. By \Cref{rr:para-bundle}, there is one edge in the skeleton replacing all the terminal-free virtual components. We restrict our attention to those virtual components whose terminals can be completely covered by their external faces. 
    
    Each internal face of the P-node skeleton contained in its face cover is bounded by two virtual edges. Effectively, each child component corresponding to these edges contributes half of this face to the face cover.
    
    Consider all the unproblematic child components. Each unproblematic child component has exactly one of its external faces in its face cover. Therefore, it contributes only half an internal face to the face cover of the P-node skeleton. Therefore, if there are more than $2k$ of them, then there would be at least $k+1$ faces in the face cover.
    
    Next, consider all the semi-problematic child components. These components either have one of their internal faces in their face cover or both of their external ones. So, they contribute either one internal face of their own or half of two internal faces of the P-node skeleton to the face cover. Hence, there can be at most $k$ of them in a P-node, or else the face cover would have at least $k+1$ faces. Adding up, there are at most $4k+2$ virtual components containing terminals.
\end{proof}     

By the above proposition, the number of virtual edges in the skeleton of a P-node is ${\mathcal{O}}(k)$. Replacing the corresponding virtual components by small graphs in $G'_t$ helps to bound its size. We now apply \Cref{rr:terminal-free} through \Cref{rr:semi-problematic} on the graph $G'_t$ exhaustively. Additionally, we replace every problematic virtual component by its small nice kernel.

\begin{restatable}{proposition}{psmallnicekernel}
    After applying Reduction Rules~\ref{rr:par-real}, \ref{rr:para-bundle} and the Reduction rules~\ref{rr:terminal-free}~to~\ref{rr:semi-problematic}~exhaustively on $G_t$, the resulting graph is a small nice kernel 
\end{restatable}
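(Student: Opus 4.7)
The plan is to mirror the S-node smallness argument while adapting the combinatorial accounting to the dipole topology of the P-node skeleton. First I would verify that $\tilde{G_t}$ is a nice kernel. The reductions are applied in order: \Cref{rr:par-real} collapses the real edges, \Cref{rr:para-bundle} leaves at most one terminal-free virtual component, and then \Cref{rr:terminal-free}--\Cref{rr:semi-problematic} together with an inductive replacement of each remaining problematic virtual component by its small nice kernel complete the construction. Each step preserves (K1)--(K3): the first two by the safeness of \Cref{rr:par-real} and by \Cref{prop:safeness-bundle}, the basic replacements by \Cref{lem:basic-steps}, and the problematic replacements by \Cref{lem:basic-problematic}.

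For smallness (K4) I would bound $|V(\tilde{G_t})|$ and the minimum number $m$ of internal faces in a minimum face cover of $G_t$, and show $c\cdot |V(\tilde{G_t})|^{1/3}\leq m$ for a universal constant $c$. Let $u,s,\ell$ be the numbers of unproblematic, semi-problematic and problematic virtual components in $\tilde{G_t}$, and $p_1,\dots,p_\ell$ the sizes of their inductively assumed kernels. After the reductions at most one real edge remains, terminal-free components have collapsed to that edge, each unproblematic component contributes one new vertex and each semi-problematic one contributes a constant $c_s$, yielding
\begin{equation*}
|V(\tilde{G_t})| \;\leq\; 2 + u + c_s\cdot s + \sum_i p_i.
\end{equation*}

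The technical core is a lower bound on $m$ along the lines of the proof of \Cref{prop:para-bound-vir-comp}. Every internal face of the P-node skeleton lies between exactly two virtual edges, so when included in the face cover it supplies external-face coverage to both corresponding virtual components, while each of the two external faces of $G_t$ supplies one additional coverage slot. An unproblematic component demands one such coverage (or one private internal face) and a semi-problematic component demands two (or one private internal face). Writing $s_1$ for the semi-problematic components covered by a private face and $s_2 = s - s_1$ for the rest, this accounting gives
\begin{equation*}
m \;\geq\; \frac{u + 2s_2 - 2}{2} + s_1 + \sum_i c\cdot p_i^{1/3} \;=\; \frac{u}{2} + s - 1 + \sum_i c\cdot p_i^{1/3},
\end{equation*}
where $\sum_i c\cdot p_i^{1/3}$ counts the disjoint deep faces forced inside problematic components by the inductive (K4) guarantee. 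In particular $u + s \leq u + 2s \leq 2m + 2$, and $\sum_i p_i \leq \bigl(\sum_i p_i^{1/3}\bigr)^3 \leq (m/c)^3$ by $\bigl(\sum_i a_i\bigr)^3 \geq \sum_i a_i^3$ for non-negative reals.

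Combining, $|V(\tilde{G_t})| \leq 2 + (1+c_s)(2m+2) + (m/c)^3$. For $m$ above a fixed threshold the cubic term dominates and the total is at most $(m/c')^3$ for a universal constant $c'$; for $m$ below the threshold $u+s+\ell$ is constant-bounded, so $|V(\tilde{G_t})|$ is too, and the bound holds by choosing $c'$ small enough. Picking $c'$ jointly with the constants chosen at the other node types yields (K4). The main obstacle is the slot-counting lower bound on $m$, and specifically the observation that the $s_1$ contribution cancels cleanly so the bound $u + s \leq 2m + 2$ holds regardless of how semi-problematic components split between private and external coverage.
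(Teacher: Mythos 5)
Your proposal is correct and follows essentially the same route as the paper's proof: bound $|V(\tilde{G_t})|$ by a constant per non-problematic component plus the inductively assumed kernel sizes, lower-bound the number of internal faces by observing that each internal face of the dipole skeleton is shared by at most two virtual components (so $\Omega(u+s)$ components force $\Omega(u+s)$ internal faces) while problematic components force $c\cdot p_i^{1/3}$ disjoint deep faces by (K4), and combine via subadditivity of the cube root. Your accounting of the two free external-face slots and the split $s_1+s_2$ is in fact spelled out more carefully than in the paper's own argument, but it is the same idea.
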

\begin{proof}
   By Proposition~\ref{prop:para-bound-vir-comp}, there are at most $\mathcal{O}(k)$ virtual child components in a $P$-node. Let $G'_t$ be the graph obtained after applying all the above mentioned reduction rules exhaustively. Let $P_1, \dots P_{\ell}$ be the number of problematic child components in $G_t$ and let $p_i$ be the number of vertices in $P_i$. Each problematic component contributes one internal face to the minimum face cover of $G_t$ and each of the other virtual components contribute at least half of the remaining internal faces. Since, there are a constant number of vertices for each unproblematic and semi problematic component and at least $2(k-\ell)$ of these components, the size of $G'_t$ is $\sum_{i=1}^{\ell}p_i + 2(k-\ell) d$, for some constant $d$.
   Using our inductive hypothesis, property (K4) is satisfied by each of these components. Let the number of internal faces in a minimum face cover of $G'_t$ be denoted by $\fcn^i$.\begin{align*}
       \fcn^i&\geq \left(\sum_{1=1}^{\ell} c \cdot {p_i}^{\frac{1}{3}} \right)+ \left(\sum_{i=1}^{2(k-\ell)}c \cdot d^{\frac{1}{3}}\right)\\& \geq c(\sum_{i=1}^{l}p_i + (2k-\ell) d)^{\frac{1}{3}} \\ & = c(|V(G'_t)|^{\frac{1}{3}})
   \end{align*}
Therefore, $G'_t$ is a small nice kernel. 
\end{proof}

\section{Conclusion}
\label{sec:conclusion}
This paper presents the first polynomial kernel for the \textsc{Face Cover Number} problem, with a cubic dependence on the parameter. For a restricted version of the problem -- where the embedding is provided as part of the input and the terminal set includes all vertices of the graph -- a linear kernel was previously known. A natural direction for future research is to close the gap between the kernel sizes for the embedded and non-embedded variants of the problem.

\vspace{2pt}
\noindent\textbf{Open Question 1.} \textit{Does there exist a linear kernel for \textsc{Face Cover Number}?}

Addressing this question likely necessitates a deeper understanding of the structural properties of 3-connected planar graphs.

Following the fixed-parameter tractable (\fpt) algorithm developed by Bienstock and Monma~\cite{BM88}, there have been no subsequent improvements for the non-embedded case. In particular, no subexponential-time algorithm is currently known for \textsc{Face Cover}, prompting the following question:

\vspace{2pt}
\noindent\textbf{Open Question 2.} \textit{Is there an algorithm that solves \textsc{Face Cover} in $2^{\mathcal{O}(\sqrt{k})}$ time, where $k$ is the face cover number?}

It is important to note that while bidimensionality theory applies to the variant with $T = V(G)$, the general form of the problem does not fall within this framework. Addressing Open Question~2 will therefore require the development of novel algorithmic ideas beyond existing techniques.

\bibliographystyle{plain}
\bibliography{fc-references}

\end{document}